\documentclass[11pt]{article}
\usepackage{amsmath,amsthm,latexsym,amssymb,amsfonts,epsfig}

\oddsidemargin -1cm
\topmargin -1.5cm
\textwidth 18cm  
\textheight 24cm  
%
%
\newcommand{\Q}{\sqrt{\det{Q}}}
\newcommand{\Qb}{\sqrt{\det{\overline{Q}}}}

\newcommand{\N}{\vec{N}}
\newcommand{\Nb}{\vec{\overline{N}}}
\newcommand{\ov}[1]{\overline{#1}}
\newcommand{\QN}[2]{\frac{1}{2}\Big(\ov{Q}^{#1#2}+\frac{\ov{N}^{#1}\ov{N}^{#2}}{
\ov{N}^2}\Big)}
\newcommand{\Nv}[1]{\frac{\ov{N}^{#1}}{\ov{N}^2}}
\makeatletter
\@addtoreset{equation}{section}
\makeatother

\pagestyle{plain}
%
\setcounter{secnumdepth}{7}
\setcounter{tocdepth}{7}

\newtheorem{Lemma}{Lemma}[section]

%
\def\be{\begin{equation}}
\def\ee{\end{equation}}
\def\ba{\begin{eqnarray}}
\def\ea{\end{eqnarray}}
%

%
\def\Nl{{\mathchoice
{\setbox0=\hbox{$\displaystyle\rm N$}\hbox{\hbox to0pt
{\kern0.4\wd0\vrule height0.9\ht0\hss}\box0}}
{\setbox0=\hbox{$\textstyle\rm N$}\hbox{\hbox to0pt
{\kern0.4\wd0\vrule height0.9\ht0\hss}\box0}}
{\setbox0=\hbox{$\scriptstyle\rm N$}\hbox{\hbox to0pt
{\kern0.4\wd0\vrule height0.9\ht0\hss}\box0}}
{\setbox0=\hbox{$\scriptscriptstyle\rm N$}\hbox{\hbox to0pt
{\kern0.4\wd0\vrule height0.9\ht0\hss}\box0}}}}
%
\def\Zl{{\mathchoice
{\setbox0=\hbox{$\displaystyle\rm Z$}\hbox{\hbox to0pt
{\kern0.4\wd0\vrule height0.9\ht0\hss}\box0}}
{\setbox0=\hbox{$\textstyle\rm Z$}\hbox{\hbox to0pt
{\kern0.4\wd0\vrule height0.9\ht0\hss}\box0}}
{\setbox0=\hbox{$\scriptstyle\rm Z$}\hbox{\hbox to0pt
{\kern0.4\wd0\vrule height0.9\ht0\hss}\box0}}
{\setbox0=\hbox{$\scriptscriptstyle\rm Z$}\hbox{\hbox to0pt
{\kern0.4\wd0\vrule height0.9\ht0\hss}\box0}}}}
%
\def\Ql{{\mathchoice
{\setbox0=\hbox{$\displaystyle\rm Q$}\hbox{\hbox to0pt
{\kern0.4\wd0\vrule height0.9\ht0\hss}\box0}}
{\setbox0=\hbox{$\textstyle\rm Q$}\hbox{\hbox to0pt
{\kern0.4\wd0\vrule height0.9\ht0\hss}\box0}}
{\setbox0=\hbox{$\scriptstyle\rm Q$}\hbox{\hbox to0pt
{\kern0.4\wd0\vrule height0.9\ht0\hss}\box0}}
{\setbox0=\hbox{$\scriptscriptstyle\rm Q$}\hbox{\hbox to0pt
{\kern0.4\wd0\vrule height0.9\ht0\hss}\box0}}}}
%
\def\Rl{{\mathchoice
{\setbox0=\hbox{$\displaystyle\rm R$}\hbox{\hbox to0pt
{\kern0.4\wd0\vrule height0.9\ht0\hss}\box0}}
{\setbox0=\hbox{$\textstyle\rm R$}\hbox{\hbox to0pt
{\kern0.4\wd0\vrule height0.9\ht0\hss}\box0}}
{\setbox0=\hbox{$\scriptstyle\rm R$}\hbox{\hbox to0pt
{\kern0.4\wd0\vrule height0.9\ht0\hss}\box0}}
{\setbox0=\hbox{$\scriptscriptstyle\rm R$}\hbox{\hbox to0pt
{\kern0.4\wd0\vrule height0.9\ht0\hss}\box0}}}}
%
\def\Cl{{\mathchoice
{\setbox0=\hbox{$\displaystyle\rm C$}\hbox{\hbox to0pt
{\kern0.4\wd0\vrule height0.9\ht0\hss}\box0}}
{\setbox0=\hbox{$\textstyle\rm C$}\hbox{\hbox to0pt
{\kern0.4\wd0\vrule height0.9\ht0\hss}\box0}}
{\setbox0=\hbox{$\scriptstyle\rm C$}\hbox{\hbox to0pt
{\kern0.4\wd0\vrule height0.9\ht0\hss}\box0}}
{\setbox0=\hbox{$\scriptscriptstyle\rm C$}\hbox{\hbox to0pt
{\kern0.4\wd0\vrule height0.9\ht0\hss}\box0}}}}
%
\def\Hl{{\mathchoice
{\setbox0=\hbox{$\displaystyle\rm H$}\hbox{\hbox to0pt
{\kern0.4\wd0\vrule height0.9\ht0\hss}\box0}}
{\setbox0=\hbox{$\textstyle\rm H$}\hbox{\hbox to0pt
{\kern0.4\wd0\vrule height0.9\ht0\hss}\box0}}
{\setbox0=\hbox{$\scriptstyle\rm H$}\hbox{\hbox to0pt
{\kern0.4\wd0\vrule height0.9\ht0\hss}\box0}}
{\setbox0=\hbox{$\scriptscriptstyle\rm H$}\hbox{\hbox to0pt
{\kern0.4\wd0\vrule height0.9\ht0\hss}\box0}}}}
%
\def\Ol{{\mathchoice
{\setbox0=\hbox{$\displaystyle\rm O$}\hbox{\hbox to0pt
{\kern0.4\wd0\vrule height0.9\ht0\hss}\box0}}
{\setbox0=\hbox{$\textstyle\rm O$}\hbox{\hbox to0pt
{\kern0.4\wd0\vrule height0.9\ht0\hss}\box0}}
{\setbox0=\hbox{$\scriptstyle\rm O$}\hbox{\hbox to0pt
{\kern0.4\wd0\vrule height0.9\ht0\hss}\box0}}
{\setbox0=\hbox{$\scriptscriptstyle\rm O$}\hbox{\hbox to0pt
{\kern0.4\wd0\vrule height0.9\ht0\hss}\box0}}}}
\DeclareMathOperator{\HF}{\boldsymbol{\mathsf {H}}}
\DeclareMathOperator{\LF}{\boldsymbol{\mathsf {L}}}

\title{{\sf Manifestly Gauge-Invariant General Relativistic}\\{\sf  Perturbation Theory
: I. Foundations}}

\author{{\sf K. Giesel$^1$}\thanks{{\sf gieskri@aei.mpg.de}}, {\sf S.
Hofmann$^{2,3}$}\thanks{{\sf stefan@nordita.org}},
{\sf T.
Thiemann$^{1,2}$}\thanks{{\sf
thiemann@aei.mpg.de,tthiemann@perimeterinstitute.ca}},
{\sf O. Winkler$^2$}\thanks{{\sf owinkler@perimeterinstitute.ca}}\\
\\
{\sf $^1$ MPI f. Gravitationsphysik, Albert-Einstein-Institut,} \\
{\sf Am M\"uhlenberg 1, 14476 Potsdam, Germany}\\
\\
{\sf $^2$ Perimeter Institute for Theoretical Physics,}\\
{\sf 31 Caroline Street N, Waterloo, ON N2L 2Y5, Canada}\\
\\
{\sf $^3$ NORDITA,}\\
{\sf Roslagstullsbacken 23, SE--10691 Stockholm, Sweden}}
\date{{\small\sf Preprint AEI-2007-150}}

\begin{document}

\maketitle

\begin{abstract}
{\sf Linear cosmological perturbation theory is pivotal to a
theoretical understanding of current cosmological experimental data
provided e.g. by cosmic microwave anisotropy probes. A key issue in that
theory is to extract the gauge invariant degrees of freedom which
allow unambiguous comparison between theory and experiment.

When one goes beyond first (linear) order, the task of writing the
Einstein equations expanded to n'th order in terms of quantities
that are gauge invariant up to terms of higher orders becomes highly
non-trivial and cumbersome. This fact has prevented progress for
instance on the issue of the stability of linear perturbation theory
and is a subject of current debate in the literature.

In this series of papers we circumvent these difficulties by passing to
a manifestly gauge invariant framework. In other words, we only perturb
gauge invariant, i.e. measurable quantities, rather than gauge variant
ones. Thus, gauge invariance is preserved non perturbatively while we
construct the perturbation theory for the equations of motion for the
gauge invariant observables to all orders.

In this first paper we develop the general framework which is based
on a seminal paper due to Brown and Kucha{\v r} as well as the realtional formalism
due to Rovelli. In the second,
companion, paper we apply our general theory to FRW cosmologies and
derive the deviations from the standard treatment in linear order.
As it turns out, these deviations are negligible in the late
universe, thus our theory is in agreement with the standard treatment. However, the real
strength of our formalism is that it admits a straightforward and
unambiguous, gauge invariant generalisation to higher orders. This
will also allow us to settle the stability issue in a future
publication. }
\end{abstract}

\newpage

\tableofcontents

\newpage

\section{Introduction}
\label{s1}

General relativity is our best theory for gravitational
physics and, sofar, has stood the test of time and experiments. Its
complicated, highly nonlinear equations of motion, however, mean
that the calculation of many gravitational processes of interest has
to rely on the use of approximations. One important class of such
approximations is given by perturbation theory, where, generally
speaking, one perturbs quantities of interest, such as the metric
and matter degrees of freedom around an exact, known solution which,
typically, displays a high degree of symmetry.

It is well-known that perturbation techniques in general relativity
pose challenges above and beyond those typically associated with
them in other areas of physics, such as stability, convergence
issues etc. The reason is that general relativity is a gauge theory,
the gauge group being the diffeomorphism group Diff$(M)$ of the
spacetime manifold $M$. As a result, all metric and matter variables
transform non-trivially under gauge transformations. This creates
the problem of differentiating between (physical) perturbations of a
given variable and the effect of a gauge-transformation on the
latter. The obvious solution to this situation would be to calculate
only with observables and perturb those. It has proved extremely
difficult, however, to find observables in the full theory, with the
exception of a few special situations, such as for asymptotically
flat spacetimes. As a way out of this conundrum, one usually resorts
to calculating in a specific gauge, carefully ensuring that all
calculated quantities are gauge-independent. Alternatively, one
tries to construct quantities that are observables up to a certain
order. In the cosmological standard model this has been successfully
done in linear order and forms an integral part of the modern lore
of cosmology. In fact, there have been attempts to extend this even
to second order and beyond, see, e.g.,
\cite{oliver15,oliver7,oliver8,oliver9}. The sheer complexity of
those calculations, however, shows that there is a natural limit to
how far that approach can be pushed. Furthermore, it is not clear
whether it will succeed for other backgrounds, such as a black hole
spacetime etc.

This clearly makes the search for a more general framework for
perturbation theory of observable quantities highly desirable.
Another motivation comes from the prospects of developing
perturbation methods for non-perturbative quantum gravity
approaches, such as loop quantum gravity \cite{15}. It is clear that the
standard methods mentioned earlier will be extremely difficult, if
not impossible to implement.

This paper, the first in a series dedicated to this challenge, lays the
foundations at the level of the full theory.
Subsequent papers will deal with simplified cases of particular
interest, such as perturbations around an FRW background.

After this brief overview of the motivations behind our paper, let
us now discuss some of these issues in more detail. The crucial
ingredient in our undertaking is the construction of observables for
the full theory. To that end let us first recall the counting of the
true degrees of freedom of general relativity: The temporal-temporal
as well as the temporal-spatial components of the Einstein equations
do not contain temporal derivatives of four metric functions (known
as lapse and shift). Thus, in the Lagrangean picture, these four
sets of equations can be used, in principle, in order to eliminate
the temporal-temporal and temporal-spatial components of the metric
in terms of the spatial-spatial components\footnote{In the
Hamiltonian picture, these equations relate canonical momenta to
canonical configuration coordinates. There are four additional (so
called primary) constraints which impose that the momenta conjugate
to lapse and shift vanish which leaves only two independent momenta.
These eight constraints are of the first class type in Dirac's
terminology \cite{1}.}. In addition, diffeomorphism gauge invariance
displays four additional degrees of freedom as pure
gauge\footnote{In the Hamiltonian picture, the eight constraints
canonically generate gauge transformations which displays eight out
of ten configuration variables as pure gauge.}. This is why general
relativity in vacuum (without matter) has only two true
(configuration) degrees of freedom (gravitons).

In the canonical picture, the ten equations split into four plus six
equations. The four equations are the afore mentioned constraints
which canonically generate spacetime diffeomorphisms, that is, gauge
transformations. The other six equations are canonically generated
by a canonical ``Hamiltonian'' which is actually a linear
combination of these constraints, and thus also generates gauge
transformations and even is constrained to vanish. It is customary
not to call it a Hamiltonian but rather a Hamiltonian constraint.
The interpretation of Einstein's equations as evolution equations is
therefore unconvincing. Instead, the correct interpretation seems to
be that they actually describe the flow of unphysical degrees of
freedom under gauge transformations. Thus we contend that their
primary use is to extract the true degrees of freedom in the way
described below. These true degrees of freedom are gauge invariant
and thus have trivial evolution with respect to the canonical
Hamiltonian (constraint). This is the famous problem of time of
General Relativity \cite{7}: There is no true Hamiltonian, only a
Hamiltonian constraint and the observable quantities do not move
under its flow. Nothing seems to move, everything is frozen, in
obvious contradiction to our experience. This begs, of course, the
question of what determines the time evolution of the true physical
observables.

In \cite{8} a possible answer was proposed. Namely, it was shown
that the problem of time can be resolved without affecting the
interpretation of Einstein's equations as evolution equations by
adding certain matter to the system. The method for doing this is
based on Rovelli's relational formalism \cite{5}, which was recently
extended considerably by Dittrich \cite{6}, as well as on the Brown
-- Kucha{\v r} mechanism \cite{3}. This necessarily uses a canonical
approach. Furthermore, it was shown in \cite{8} that this in one
stroke provides the true degrees of freedom
 and provides us with a true (physical) Hamiltonian
which generates a non trivial evolution of the gauge invariant
degrees of freedom. Remarkably, these evolution equations look very
similar to Einstein's equations for the type of matter considered.
The type of matter originally used in \cite{8} was chosen somewhat
ad hoc and guided more by mathematical convenience rather than
physical arguments \footnote{Also, apart from cosmological settings,
the consequences of the deviations of these evolution equations from
Einstein's equations was not analysed.}. Furthermore, it seems
desirable to find the optimal matter which would minimally affect
the standard interpretation of Einstein's equations as evolution
equations while increasing the number of true degrees of freedom by
four. As it turns out, there is a natural candidate, which we will
use for our purposes: Pressure free dust as introduced in the
seminal paper by Brown and Kucha{\v r} \cite{3} cited before. The
dust particles fill time and space, they are present everywhere and
at every instant of time. They follow geodesics with respect to the
dynamical four metric under consideration. However, they only
interact gravitationally, not with the other matter and not with
itself. The dust serves as a dynamical reference frame solving
Einstein's hole problem \cite{4}. It can be used to build gauge
invariant versions of all the other degrees of freedom.

The dust supplies the physically meaningless
spacetime coordinates with a dynamical field interpretation and thus
solves the
``problem of time'' of General Relativity as outlined above. This is its
only purpose. For every non -- dust variable in the
usual formalism without dust there is unique gauge invariant
substitute in our theory. Once these observables, that is gauge
invariant matter and geometry modes, have been
constructed as complicated aggregates made out of the non gauge
invariant matter, geometry and dust modes, the dust itself completely
disappears from
the screen. The observable matter and geometry modes are now no longer
subject to constraints, rather, the constraints are replaced by
conservation laws of a gauge invariant energy momentum density.
This energy momentum density is the only trace that the dust leaves
on the system, it can be arbitrarily small but must not vanish in order
that the dust fulfills its role as a material reference frame of
``clocks and rods''. The evolution equations of the observables is
generated by a physical Hamiltonian which is simply the spatial integral
of the energy density. These evolution equations, under proper field
identifications, can be mapped exactly to the six of the Einstein
equations
for the unobservable matter and geometry modes without dust, up to
modifications proportional to the energy momentum density. Thus again
the influence of the dust can be tuned away arbitrarily and so it plays
a perfect role as a ``test observer''. It is interesting that in
contrast to \cite{3} the dust must be a ``phantom dust'', for the same
reason that the phantom scalars apperared in \cite{8}: If we would use
usual dust as in \cite{3} then the physical Hamiltonian would come out
negative definite rather than positive definite. Or equivalently,
physical time would run backwards rather than forward. Notice
that general
relativistic energy conditions for the gauge invariant energy momentum
tensor are {\it not} violated because it does not contain the dust
variables and it is the dust free and gauge invariant energy momentum
tensor that the positive physical Hamiltonian generates. Hence, while
the energy conditions for the phantom dust species are violated at the
gauge variant level, at the gauge invariant level there is no problem
because the dust has disappeared. Notice also that even at the gauge
variant level the energy conditions for the total energy momentum tensor
are still satisfied if there is sufficient additional, observable matter
present.

Based on these constructions we will develop a general relativistic
perturbation theory in this series of papers. In the current work we
treat the case of a general background, in the follow-up papers we
discuss special cases of particular interest.\\
\\
The plan of this paper is as follows:

In section \ref{s2} we review the seminal work of Brown and Kucha{\v r}
\cite{3}. We start from their Lagrangian (with opposite sign in
order to get phantom dust) and then perform the Legendre transform.
This leads to second class constraints which were not discussed in
\cite{3} and which we
solve in appendix \ref{sa}. 
After having solved the second class constraints the further
analysis agrees with \cite{3}. The Brown -- Kucha{\v r} mechanism
can now be applied to the dust plus geometry plus other matter
system and enables us to rewrite the four initial value constraints
of General Relativity in an equivalent way such that these
constraints are not only mutually Poisson commuting but also that
the system deparametrises. That is, they can be solved for the four
dust momentum densities, and the Hamiltonian densities to which they
are equal no longer depend on the dust variables.

In section \ref{s3} we pass to the gauge invariant observables and the
physical Hamiltonian.
In situations such as ours where the system deparametrises, the general
framework of \cite{6} drastically simplifies and one readily obtains the
Dirac observables and the physical Hamiltonian. Due to general
properties of the relational approach, the Poisson algebra among the
observables remains simple. More precisely, for every gauge variant non
-- dust variable
we obtain a gauge invariant analog and the gauge variant and gauge
invariant observables satisfy the same Poisson algebra. 
This is also proved for part of the gauge invariance by independent 
methods in appendix \ref{s3.2}. The physical
time evolution of these observables is generated by a unique, positive
Hamiltonian.

In section \ref{s4} we derive the equations of motion generated by the
physical Hamiltonian for the physical configuration and momentum
observables. We also derive the second order in time equations of
motion for the configuration observables. Interestingly, these
equations can be seen of almost precisely the usual form that they
have in the canonical approach \cite{19} if one identifies lapse and
shift with certain functions of the canonical variables. Hence we
obtain a {\it dynamical lapse and shift}. The system of evolution
equations is supplemented by four sets of conservation laws which
follow from the mutual commutativity of the constraints. They play a
role quite similar to the initial value constraints for the system
without dust written in gauge variant variables but now the
constraint functions do not vanish bur rather are constants of the
motion.

In section \ref{s5} we treat the case of asymptotically flat spacetimes
and derive the necessary boundary terms to make the Hamiltonian
functionally differentiable in that case. Not surprisingly, the 
boundary term is just the ADM Hamiltonian. However, while in the usual 
formalism the bulk term is a linear combination of constraints, in our 
formalism the bulk term does not vanish on the constraint surface, it
represents the total dust energy.

In appendix \ref{s6} we perform the inverse Legendre transform from the
physical Hamiltonian to an action. This cannot be done in closed
form, however, we can write the transform in the form of a fix point
equation which can be treated iteratively. The zeroth iteration
precisely becomes the Einstein -- Hilbert action for geometry and
non -- dust matter. Including higher orders generates a more
complicated ``effective'' action which contains arbitrarily high spatial
derivatives of the gauge invariant variables but only first time
derivatives.

In section \ref{s7}
 we perturb the equations of motion about a general
exact solution to first order, both in the first time derivative
order form and the second time derivative order form. Notice that
our perturbations are fully gauge invariant. In appendix \ref{sb} we show
that one can get the second time derivative equation of motion for
the perturbations in two equivalent ways: Perturbing the second time
derivative equations of motion to first order or deriving the second
time order equation from the perturbations to first order of the
first time order equations. This is an important check when one
derives the equations of motion for the perturbations on a general
background and the second avenue is easier at linear order. However, the 
first avenue is more economic at higher orders. In 
appendix
\ref{sc} we show that the equations of motion up to n'th order are
generated by the physical Hamiltonian expanded up to (n+1)th order.
Moreover we show that the invariants expanded to n'th order remain
constants of the motion under the (n+1)th order Hamiltonian up to
terms of at least order n+1. This is important in order to actually
derive the second time derivative equations of motion because we can
drop otherwise complicated expressions.

In section eight we compare our new approach to general-relativistic
perturbation theory with some other approaches that can be found in
the literature.

Finally, in section nine we conclude and discuss the implications
and open problems raised by the present paper.

In appendix \ref{sd} we ask the question whether the qualitative
conclusions of the present paper are generic or whether they are
special for the dust we chose. In order to test this question we
sketch the repetition of the analysis carried out for the phantom
dust for the phantom scalar field of \cite{8}. It seems that
qualitatively not much changes, although the dust comes closer than
the phantom scalar to reproducing Einstein's equations of motion.
This indicates that the Brown -- Kucha{\v r} mechanism generically
leads to equations of motion for gauge invariant observables which
completely resemble the equations of
motion of their gauge variant counter parts. 

Appendix \ref{LPA} contains more details concerning some calculations in
section seven.

Appendix \ref{sg} derives the connection between our manifestly
gauge invariant formalism and a corresponding gauge fixed version of 
it.\\
\\
Finally, our rather involved notation is listed, for the convenience
of the reader, on the next page.

\newpage

{\bf \large Notation}\\
\\
\\
As a rule of thumb, gauge non invariant quantities are denoted by lower
case letters, gauge invariant quantities by capital letters. The only
exceptions from this rule are the dust fields $T,S^j,\rho,W_j$, their
conjugate momenta $P, P_j, I, I^j$ and their associated primary
constraints $Z_j,Z,Z^j$ which however
disappear in the final picture. Partially gauge invariant quantities
(with respect to spatial diffeomorphisms) carry a tilde. Background
quantities carry a bar. Our
signature
convention is that of relativists, that is, mostly plus.\\
\\
\be \nonumber
\begin{array}{cl}
\mbox{symbol} & \mbox{meaning}  \\
 & \\
G_N & \mbox{Newton constant}\\
\kappa=16\pi G_N & \mbox{gravitational coupling constant}\\
\lambda & \mbox{scalar coupling constant}\\
\Lambda & \mbox{cosmological constant}\\
M & \mbox{spacetime manifold}\\
{\cal X} & \mbox{spatial manifold}\\
{\cal T} & \mbox{dust time manifold}\\
{\cal S} & \mbox{dust space manifold}\\
\mu,\nu,\rho,..=0,..,3 & \mbox{tensor indices on }M\\
a,b,c,..=1,2,3 & \mbox{tensor indices on }{\cal X}\\
i,j,k,..=1,2,3 & \mbox{tensor indices on }{\cal S}\\
X^\mu & \mbox{coordinates on }M\\
x^a & \mbox{coordinates on }{\cal X}\\
\sigma^j & \mbox{coordinates on }{\cal S}\\
t & \mbox{foliation parameter}\\
\tau & \mbox{dust time coordinate}\\
Y_t^\mu & \mbox{one parameter family of embeddings }{\cal X} \to M\\
{\cal X}_t=Y_t({\cal X}) & \mbox{leaves of the foliation}\\
g_{\mu\nu} & \mbox{metric on }M\\
q_{ab} & \mbox{(pullback) metric on }{\cal X}\\
\tilde{q}_{ij} & \mbox{(pullback) metric on }{\cal S}\\
Q_{ij} & \mbox{Dirac observable associated to } q_{ab}\\
p^{ab} & \mbox{momentum conjugate to }q_{ab}\\
\tilde{p}^{ij} & \mbox{momentum conjugate to }\tilde{q}_{ij}\\
P^{ij} & \mbox{momentum conjugate to }Q_{ij}\\
\zeta & \mbox{scalar field on }M\\
\xi & \mbox{scalar field on }{\cal X}\\
\tilde{\xi} & \mbox{pullback scalar field on }{\cal S}\\
\Xi & \mbox{Dirac observable associated to }\xi\\
\pi & \mbox{momentum conjugate to }\xi\\
\tilde{\pi} & \mbox{momentum conjugate to }\tilde{\xi}\\
\Pi & \mbox{momentum conjugate to }\Xi\\
v & \mbox{potential of }\zeta,\;\xi,\;\tilde{\xi},\;\Xi\\
T & \mbox{dust time field on }{\cal X}\\
\tilde{T} & \mbox{dust time field on }{\cal S}\\
S^j & \mbox{dust space fields on }{\cal X}\\
\rho & \mbox{dust energy density on }M,\;{\cal X}\\
W_j & \mbox{dust Lagrange multiplier field on }M,{\cal X}\\
U=-dT+W_j dS^j & \mbox{dust deformation covector field on }M\\
J=\det(\partial S/\partial x) & \mbox{dust field spatial density
on }{\cal X}\\
P & \mbox{momentum conjugate to }T\\
\tilde{P} & \mbox{momentum conjugate to }\tilde{T}\\
P_j & \mbox{momentum conjugate to }S^j\\
I & \mbox{momentum conjugate to }\rho\\
I^j & \mbox{momentum conjugate to }W_j\\
Z_j,\;Z,\;Z^j & \mbox{dust primary constraints on }{\cal X}\\
\mu^j,\;\mu,\;\mu_j & \mbox{dust primary constraint Lagrange
multipliers on }{\cal X}
\end{array}
\ee

\newpage

\be \nonumber\\
\begin{array}{cl}
\varphi & \mbox{diffeomorphism of } {\cal X}\\
n^\mu & \mbox{unit normal of spacelike hypersurface on } M\\
n & \mbox{coordinate lapse function on }{\cal X}\\
n^a & \mbox{coordinate shift function on }{\cal X}\\
p & \mbox{momentum conjugate to } n\\
p_a & \mbox{momentum conjugate to }n^a\\
z,\;z_a & \mbox{primary constraint for lapse, shift}\\
\nu,\;\nu^a & \mbox{lapse and shift primary constraint Lagrange
multipliers}\\
\phi,\psi, B, E & \mbox{MFB scalars on }{\cal X},\;{\cal S}\\
S_a, F_a & \mbox{MFB transversal vectors on }{\cal X}\\
S_j, F_j & \mbox{MFB transversal vectors on }{\cal S}\\
h_{ab} & \mbox{MFB transverse tracefree tensor on }{\cal X}\\
h_{jk} & \mbox{MFB transverse tracefree tensor on }{\cal S}\\
\Phi,\Psi & \mbox{linear gauge invariant completions of }\phi,\psi\\
V_a & \mbox{linear gauge invariant completions of }F_a\\
V_j & \mbox{linear gauge invariant completions of }F_j\\
c^{\rm tot}_a & \mbox{total spatial diffeomorphism constraint on }{\cal X}\\
c^{\rm tot}_j=S^a_j c^{\rm tot}_a & \mbox{total spatial diffeomorphism
constraint on }{\cal X}\\
c^{\rm tot} & \mbox{total Hamiltonian constraint on }{\cal X}\\
c_a & \mbox{non -- dust contribution to spatial diffeomorphism
constraint on }{\cal X}\\
c_j=S^a_j c_a & \mbox{non -- dust contribution to spatial diffeomorphism
constraint on }{\cal X}\\
\tilde{c}_j & \mbox{non -- dust contribution to spatial diffeomorphism
constraint on }{\cal S}\\
C_j\not=\tilde{c}_j & \mbox{momentum density: Dirac observable
associated to
}\tilde{c}_j\\
c & \mbox{non -- dust contribution to Hamiltonian
constraint on }{\cal X}\\
\tilde{c} & \mbox{non -- dust contribution to Hamiltonian
constraint on }{\cal S}\\
C\not=\tilde{c} & \mbox{Dirac observable associated to }\tilde{c}\\
h & \mbox{energy density on }{\cal X}\\
\tilde{h} & \mbox{energy density on }{\cal S}\\
H=\tilde{h} & \mbox{energy density: Dirac observable associated to
}\tilde{h}\\
h_j=c^{\rm tot}_j-P_j & \mbox{auxiliary density on }{\cal X}\\
\epsilon & \mbox{numerical energy density on }{\cal S}\\
\epsilon_j & \mbox{numerical momentum density on }{\cal S}\\
\HF=\int_{{\cal S}}\;d^3\sigma\; H & \mbox{physical Hamiltonian,
energy}\\
L & \mbox{Lagrange density associated to }H\\
\LF=\int_{{\cal S}}\;d^3\sigma\; L & \mbox{physical Lagrangian}\\
V_{jk} & \mbox{velocity associated to }Q_{jk}\\
\Upsilon & \mbox{velocity associated to }\Xi\\
N=C/H & \mbox{dynamical lapse function on }{\cal S}\\
N_j=-C_j/H & \mbox{dynamical shift function on }{\cal S}\\
N^j=Q^{jk} N_k & \mbox{dynamical shift function on }{\cal S}\\
\nabla_\mu & g_{\mu\nu} \mbox{ compatible covariant differential on } M\\
D_a & q_{ab} \mbox{ compatible covariant differential on } {\cal X}\\
\tilde{D}_j & \tilde{q}_{jk} \mbox{ compatible
covariant differential on } {\cal S}\\
D_j & Q_{jk} \mbox{ compatible covariant differential on } {\cal S}\\
\bar{Q}_{jk} &\mbox{ background spatial metric}\\
\bar{P}^{jk} &\mbox{ background momentum conjugate to }\bar{Q}_{jk}\\
\bar{\Xi}& \mbox{ background scalar field}\\
\bar{\Pi} & \mbox{ background momentum conjugate to }\bar{\Xi}\\
\bar{\rho}=\frac{1}{2\lambda}[\dot{\bar{\Xi}}^2+v(\bar{\Xi})] & \mbox{
background scalar energy density}\\
\bar{p}=\frac{1}{2\lambda}[\dot{\bar{\Xi}}^2-v(\bar{\Xi})] & \mbox{
background scalar pressure}
\end{array}
\ee

\newpage

\be \nonumber\\
\begin{array}{cl}
G_{jkmn}=Q_{j(m}Q_{n)k}-\frac{1}{2}Q_{jk} Q_{mn} & \mbox{ physical DeWitt
bimetric}\\
{[}G^{-1}]^{jkmn}=Q^{j(m}Q^{n)k}-Q^{jk} Q^{mn} & \mbox{
inverse physical DeWitt bimetric}\\
\bar{G}_{jkmn}=\delta_{j(m}\delta_{n)k}-\frac{1}{2}\delta_{jk} \delta_{mn} &
\mbox{
flat background DeWitt
bimetric}\\
{[}\bar{G}^{-1}]^{jkmn}=\delta^{j(m}\delta^{n)k}-\delta^{jk}
\delta^{mn}] &
\mbox{
inverse flat background DeWitt bimetric}
\end{array}
\ee

\newpage

\section{The Brown -- Kucha{\v r} formalism}
\label{s2}
In this section we review those elements of the Brown -- Kucha{\v r}
formalism~\cite{3} that are most relevant to us.
Furthermore, we present an explicit constraint analysis for the system where gravity is coupled to a generic scalar field and the Brown -- Kucha\v{r} dust, based on
a canonical analysis using the full arsenal of Dirac's algorithm for constrained
Hamiltonian systems.

For concreteness, we employ dust to deparametrise a system consisting of a
generic scalar field $\zeta$ on a four-dimensional hyperbolic spacetime (M,$g$).
The corresponding action,
$S_{\rm geo}+S_{\rm matter}$, is given by the Einstein -- Hilbert action
\be \label{2.16}
S_{\rm geo}=\frac{1}{\kappa}\int_{\rm M}\; {\rm d}^4X\; \sqrt{|\det(g)|}\;
[R^{(4)}+2\Lambda] \;
\ee
where
$\kappa\equiv 16\pi G_{\rm N}$, with $G_{\rm N}$ denoting Newton's constant,
$R^{(4)}$ is the Ricci scalar of $g$ and $\Lambda$ denotes the cosmological
constant,
and the scalar field action
\be \label{2.17}
S_{\rm matter}=\frac{1}{2\lambda} \int_{\rm M}\; d^4X\; \sqrt{\det(g)|}\;
[-g^{\mu\nu} \zeta_{,\mu} \zeta_{,\nu}-v(\zeta)] \;
\ee
with $\lambda$ denoting a coupling constant allowing for a dimensionless $\zeta$
and $v$ is a potential term.

\subsection{Lagrangian Analysis}
\label{s2.1}

In their seminal paper~\cite{3} Brown and Kucha{\v r} introduced the
following dust action\footnote{A classical particle interpretation of this
action will be
given in section \ref{s2.4}.}
\be \label{2.1}
S_{\rm dust}=-\frac{1}{2} \int_{\rm M}\;d^4X\;\sqrt{|\det(g)|}\;
\rho\;[g^{\mu\nu}\;
U_\mu U_\nu+1] \;.
\ee
Here, $g$ denotes the four-metric on the spacetime manifold M. The dust
velocity field is defined by $U=-{\rm d}T+W_j {\rm d}S^j $ $(j\in {1,2,3}$.
The action $S_{\rm dust}$
is a functional of the fields $\rho,\;g_{\mu\nu},\;T,\;S^j,\;W_j$\footnote{Here,
$T, S^j$ have dimension of length, $W_j$ is dimensionless and, thus,
$\rho$ has dimension length$^{-4}$. The notation used here is suggestive: $T$
stands for {\sl time}, $S^j$ for {\sl space} and $\rho$ for {\sl dust energy
density}.}.
The physical interpretation of the action will now be given in a series of
steps.

First of all, the energy momentum of the dust reads
\be \label{2.2}
T^{\rm dust}_{\mu\nu}=-\frac{2}{\sqrt{|\det(g)|}} \frac{\delta S_{\rm
dust}}{\delta
g^{\mu\nu}}=\rho \; U_\mu U_\nu-\frac{\rho}{2}\; g_{\mu\nu}\; [g^{\lambda\sigma}
U_\lambda U_\sigma+1] \;.
\ee
By the Euler--Lagrange equation for $\rho$
\be \label{2.3}
\frac{\delta S_{\rm dust}}{\delta \rho}=g^{\lambda\sigma}
U_\lambda U_\sigma+1=0 \;
\ee
the second term in (\ref{2.2}) vanishes on shell. Hence, $U$ is unit timelike on
shell.
Comparing with the
energy momentum tensor of a perfect fluid with energy density $\rho$, pressure
$p$
and unit (timelike) velocity field $U$
\be \label{2.4}
T^{\rm pf}_{\mu\nu}=\rho \; U_\mu U_\nu + p\; (g_{\mu\nu}+U_\mu U_\nu) \;
\ee
shows that the action (\ref{2.1}) gives an energy-momentum tensor
for a perfect fluid with vanishing pressure.

For $\rho\not=0$, variation with respect to $W_j$ yields an equation equivalent
to
\be \label{2.5}
{\cal L}_U S^j=0 \; 
\ee
where $\cal L$ denotes the Lie derivative.
Hence, the fields $S^j$ are constant along the
integral curves of $U$. Equation (\ref{2.5}) implies
\be \label{2.6}
{\cal L}_U T=U^\mu T_{,\mu}=U^\mu [T_{,\mu}-W_j S^j_{,\mu}]=-U^\mu
U_\mu=+1 \;
\ee
so that $T$ defines proper time along the dust flow lines.

Variation with respect to $T$ results in
\be \label{2.7}
\partial_\mu[\rho \sqrt{|\det(g)|} U^\mu]=\sqrt{|\det(g)|}
\nabla_\mu [\rho U^\mu]=0 \;
\ee
while variation with respect to $S^j$ gives
\be \label{2.8}
\partial_\mu[\rho \sqrt{|\det(g)|} U^\mu W_j]=\sqrt{|\det(g)|}
\nabla_\mu [\rho U^\mu W_j]=0 \;.
\ee
Using (\ref{2.7}), (\ref{2.8}) reduces to (assuming $\rho\not=0$)
\be \label{2.9}
\nabla_U W_j=0 \;.
\ee
Thus, $\nabla_U U_\mu = 0$, and, as a consequence,
the integral curves of $U$ are affinely parametrised geodesics.
The physical
interpretation of the fields $T, S^j$ is complete: the vector field $U$ is
geodesic
with proper time $T$, and each integral curve is completely determined by
a constant value of $S^j$.
This determines a dynamical foliation of M, with leaves characterized
by constant values of $T$. A given integral curve intersects each leave at
the same value of $S^j$.

\subsection{Hamiltonian Analysis}
\label{s2.2}
In this section we derive the constraints that restrict the phase space of the
system
of a generic scalar field on a spacetime $({\rm M},g)$, extended by
the Brown -- Kucha{\v r} dust. The reader not interested in the
details of the derivation, which uses the full arsenal of Dirac's algorithm for
constrained systems, may directly refer to the result (\ref{2.27a}--\ref{2.29}).

We assume $({\rm M},g)$ to be globally hyperbolic in order to guarantee a well
posed initial value problem.
As a consequence, M is
diffeomorphic to $\mathbb{R}\times {\cal X}$, where $\cal X$ is a
three-manifold of arbitrary topology. The spacelike leaves ${\cal X}_t$ of the
corresponding foliation are obtained as images of a one parameter family
of embeddings $t\mapsto Y_t$, see e.g. \cite{19}
for more details and
our notation table for ranges of indices etc.
The timelike unit normals to the leaves may be written\footnote{We have written
$Y(t,x)\equiv Y_t(x)$.} as
$n^\mu=[Y^\mu_{\; ,t}-n^a Y^\mu_{\; ,a}]/n$,
where $n,\;n^a$ are called lapse
and shift functions, respectively.
Throughout, $n^\mu$ is assumed to be future
oriented with respect to the parameter $t$, which requires $n>0$.

The three metric on $\cal X$ is the pull back of the spacetime metric
under the embeddings, that is, $q_{ab}(x,t)=Y^\mu_{\; ,a} Y^\nu_{\; ,b}
g_{\mu\nu}$. Denoting the inverse of $q_{ab}$ by $q^{ab}$ it is not
difficult to see that
\be \label{2.11}
g^{\mu\nu}=-n^\mu n^\nu+q^{ab} \; Y^\mu_{\;,a} Y^\nu_{\;,b} \;.
\ee
It follows that the dust action can be written as
\be \label{2.12}
S_{\rm dust}=-\frac{1}{2}\int_{\mathbb{R}}\; dt\;\int_{{\cal X}} \; d^3x\;
\sqrt{\det(q)}\; n\;\rho\;\left(-U_n^2+q^{ab} U_a U_b+1\right) 
\ee
with $U_n\equiv n^\mu U_\mu,\;U_a \equiv Y^\mu_{\; ,a} U_\mu$.

The form (\ref{2.12}) is useful to derive the momentum fields
canonically conjugate to $T, S^j$,
respectively, as
\ba \label{2.12a}
P &:=& \frac{\delta S_{\rm dust}}{\delta T_{,t}}=-\sqrt{\det(q)}\; \rho \; U_n
\nonumber \\
P_j &:=& \frac{\delta S_{\rm dust}}{\delta S^j_{,t}}=\sqrt{\det(q)}\;\rho\; U_n
W_j \;.
\ea
The second relation in (\ref{2.12a}) shows that the Legendre transform is
singular,
and we obtain the {\sl primary constraint} ({\sl Zwangsbedingung})
\be \label{2.13}
Z_j:=P_j+W_j P=0 \;.
\ee
Additional primary constraints arise when we compute the momenta
conjugate to $\rho$ and $W_j$
\ba \label{2.14}
I &:=& Z := \frac{\delta S_{\rm dust}}{\delta \rho_{,t}}=0 
\nonumber \\
I^j &:=& Z^j := \frac{\delta S_{\rm dust}}{\delta W_{j,t}}=0 \;.
\ea
Considering the total action $S\equiv S_{\rm geo}+S_{\rm matter}+S_{\rm dust}$,
further primary constraint follow from the calculation of the canonical momentum
fields
conjugate to lapse and shift $n,n^a$,
respectively,
\ba \label{2.15}
p &:=& z:= \frac{\delta S}{\delta n_{,t}}=0
\nonumber\\
p_a &:=& z_a := \frac{\delta S}{\delta n^a_{,t}}=0
\ea
The primary constraints signify the fact that we cannot solve for the
velocities $\{S^j_{\; ,t},\;\rho_{,t},\;W_{j\;,t},\;n_{,t},\;n^a_{\; ,t}\}$,
respectively, in terms of the momenta and configuration variables.
Therefore, all primary constraints must be included in the canonical action,
together with
appropriate Lagrange multipliers
$\left\{\mu^j,\;\mu,\;\mu_j,\nu,\;\nu^a\right\}$,
in order to reproduce the Euler -- Lagrange equations.

It is straightforward to solve for $T_{,t}$ and $\zeta_{,t},\;q_{ab\; ,t}$.
For instance,
\be \label{2.18}
T_{,t}=n T_n+n^a T_{,a}
=n\left[-U_n+W_j S^j_n\right]+n^a T_{,a}
=n \frac{1}{\rho}\frac{P}{\sqrt{\det(q)}}+S^j_{\; ,t} W_j+n^a \left[T_{,a}-W_j
S^j_{\; ,a}\right]\;.
\ee
How to eliminate the velocities of the scalar field and the three-metric is well
known,e.g.~\cite{19},
and will not be repeated here.

The resulting Hamiltonian constraint for the extended system, $c^{\rm tot}
\equiv c^{\rm geo}+c^{\rm matter}+c^{\rm dust}$, is explicitly given by
\ba
\kappa \; c^{\rm geo} &=& \frac{1}{\sqrt{\det(q)}}\left[q_{ac}q_{bd}
-\frac{1}{2}
q_{ab}q_{cd}\right]p^{ab}p^{cd}-\sqrt{\det(q)}\; R^{(3)}+2\Lambda \sqrt{\det(q)} 
\nonumber\\
\lambda \; c^{\rm matter}
&=&
\frac{1}{2}\left[\frac{\pi^2}{\sqrt{\det(q)}}+\sqrt{\det(q)}\left(q^{ab}\xi_{,a}
\xi_{,b}+v(\xi)\right)\right] 
\nonumber\\
c^{\rm dust} &=&
\frac{1}{2}\left[\frac{P^2/\rho}{\sqrt{\det(q)}}+\sqrt{\det(q)}\;
\rho\left(q^{ab} U_a
U_b+1\right)\right] 
\ea
with $U_a\equiv -T_{,a}+W_j \; S^j_{\; ,a}$.
The spatial diffeomorphism constraints for the extended system,
$c_a^{\rm tot} \equiv c^{\rm geo}_a+c^{\rm matter}_a+c^{\rm dust}_a$, are
explicitly given by
\ba
\kappa \; c_a^{\rm geo} &=& -2\; q_{ac}D_b \; p^{bc} \; 
\nonumber\\
\lambda \; c_a^{\rm matter} &=& \pi \; \xi_{,a}
\nonumber\\
c^{\rm dust}_a &=& P\left[T_{,a} - W_j \; S^j_{\; ,a}\right]
\; .
\ea

The total action in canonical form reads
\ba \label{2.19}
S
&& =
\int_{\mathbb{R}}\; dt\; \int_{{\cal X}}\; {\rm d}^3x\;\left(P T_{,t}+P_j\;
S^j_{\; ,t}+I \; \rho_{,t}+I^j \; W_{j,t}+p\; n_{,t}+p_a\;
n^a_{,t}+\frac{1}{\kappa} p^{ab} \; q_{ab,t}+\frac{1}{\lambda} \pi \;
\xi_{,t}\right)
\nonumber \\
&&- \int_{\mathbb{R}}\; dt\; H_{\rm primary}
\; 
\ea
with $p^{ab}$ denoting the momentum field conjugate to $q_{ab}$,
$\xi$ denoting the pullback of $\zeta$ to $\cal X$, $\pi$
denoting its canonical momentum and $D$ is the covariant
differential compatible with $q_{ab}$.
Furthermore, the Hamiltonian and spatial diffeomorphism constraints, together
with the
primary constraints, entered the definition of the primary Hamiltonian
\be
H_{\rm primary}\equiv\int_{{\cal X}}\; d^3x\;
h_{\rm primary}
\ee
via the density
\be
h_{\rm primary} \equiv \mu^j \; Z_j+\mu \; Z + \mu_j \; Z^j+\nu\; z+\nu^a \; z_a
+
n\; c^{\rm tot}+
n^a \; c^{\rm tot}_a
\; .
\ee

Consistency requires that the the constraint surface,
defined by the primary constraints (\ref{2.13}),
(\ref{2.14}) and (\ref{2.15}), is stable under the action
of $H_{\rm primary}$. In order to to check this, we summarise the only
non-vanishing
elementary Poisson brackets\footnote{
Notice that $n,n^a,W_j,\rho,S^j$ are {\sl not} Lagrange multipliers at
this point, they are canonical coordinates just like the other fields.}
\ba \label{2.20}
\{p^{ab}(x),q_{cd}(y)\} &=& \kappa \; \delta^a_{(c} \delta^b_{d)} \;
\delta(x,y) \; 
\nonumber\\
\{\pi(x),\xi(y)\} &=& \lambda \; \delta(x,y)\; 
\nonumber\\
\{P(x),T(y)\} &=& \delta(x,y)
\nonumber\\
\{P_j(x),S^k(y)\} &=& \delta_j^k\;\delta(x,y)
\nonumber\\
\{I(x),\rho(y)\} &=& \delta(x,y)
\nonumber\\
\{I^j(x),W_k(y)\} &=& \delta^j_k \;\delta(x,y)
\nonumber\\
\{p(x),n(y)\} &=& \delta(x,y)
\nonumber\\
\{p_a(x),n^b(y)\} &=& \delta_a^b\;\delta(x,y)\; .
\ea

The primary constraints transform under the action of the primary Hamiltonian
$H_{\rm
primary}$ as follows
\ba \label{2.21}
z_{,t} &=& \{H_{\rm primary},p\}=-c^{\rm tot}
\nonumber\\
z_{a,t} &=& \{H_{\rm primary},p_a\}=-c_a^{\rm tot}
\nonumber\\
Z_{,t} &=&
\{H_{\rm primary},I\}=\frac{n}{2}\left[-\frac{P^2/\rho^2}{\sqrt{\det(q)}}
+\sqrt{\det(q)}\left(q^{ab} U_a U_b+1\right)\right] \equiv \tilde{c}
\nonumber\\
Z^j_{,t} &=& \{H_{\rm primary},I^j\}=-\mu^j \; P -n\;\rho\; \sqrt{\det(q)}\;
q^{ab} \;U_a \;S^j_{,b} + P\; S^j_{,a}\; n^a
\nonumber\\
Z_{j,t} &=& \{H_{\rm primary},P_j+W_j P\}=\mu_j \; P - \left(n^a-\frac{n \rho
\sqrt{\det(q)}}{P}\;  q^{ab} U_b\right) P \; W_{j,a}\; .
\ea
Consistency demands that (\ref{2.21}) must vanish. Indeed, the last two
equations
in (\ref{2.21}) involve the Lagrange multipliers $\mu^j,\mu_j$, respectively,
and can be solved for them, since the system of equations has maximal
rank.
However, the first three equations in (\ref{2.21}) do not involve Lagrange
multipliers. Hence, they represent secondary
constraints. According to Dirac's algorithm, the
secondary constraints in equation (\ref{2.21}) force us to reiterate the
stability analysis, i.e.~to
calculate
the action of $H_{\rm primary}$ on the secondary constraints. A lengthy
calculation
presented in \ref{sa} shows that the secondary constraints are stable under the
Hamiltonian flow generated by $H_{\rm primary}$. In other words, no tertiary
constraints
arise in the stability analysis for the secondary constraints. However,
the action of $H_{\rm primary}$ on $\tilde{c}$ involves the Lagrange multipliers
$\mu^j,\mu_j$,$\mu$, and can be solved for $\mu$.

The final set of constraints is given by
$\left\{c^{\rm tot},\;c_a^{\rm
tot},\;\tilde{c},\;Z_j,\;Z^j,\;Z,\;z_a,\;z\right\}$ and
it
remains to classify them into first and second class, respectively. Obviously,
\ba \label{2.22}
\{Z^j(x),Z_k(y)\} &=& P\;\delta^j_k \; \delta(x,y) 
\nonumber\\
\{Z(x),\tilde{c}(y)\} &=& \frac{n P^2}{\rho^3\sqrt{\det(q)}} \;\delta(x,y) 
\ea
does not vanish on the constraint surface defined by the final set of
constraints, hence they are of second class. Next, since $n$ appears at
most linearly in the constraints, while $n^a$ does not appear at all,
it follows immediately that $z,\;z_a$ are of first class. Further, consider
the linear combination of constraints
\ba \label{2.23}
\tilde{c}^{\rm tot}_a &\equiv& I\; \rho_{,a}+I^j \; W_{j,a}+P\;T_{,a} +P_j
\;S^j_{,a}
+p\;n_{,a}+{\cal L}_{\vec{n}} \; p_a+c_a
\nonumber\\
&=& c^{\rm tot}_a + Z \; \rho_{,a}+Z^j \; W_{j,a}+Z_j\; S^j_{,a}+z\; n_{,a}
+{\cal L}_{\vec{n}} \; z_a
\ea
where
\be \label{2.24}
c_a \equiv c^{\rm geo}_a+c^{\rm matter}_a
\ee
is the non-dust contribution to the spatial diffeomorphism constraint
$c^{\rm tot}_a$. Since all constraints are scalar or covector densities of
weight one and $\tilde{c}^{\rm tot}_a$ is the generator of spatial
diffeomorphisms, it follows that $\tilde{c}^{\rm tot}_a$ is first class.
Finally, we consider the linear combination
\be \label{2.25}
\tilde{c}^{\rm tot} \equiv c^{\rm tot}+\alpha^j \;Z_j+\alpha_j\; Z^j+\alpha \;
Z\; 
\ee
and determine the phase space functions $\alpha^j,\alpha_j,\;\alpha$
such that $\tilde{c}^{\rm tot}$ has vanishing Poisson brackets with
$Z_j,Z^j,Z$ up to terms proportional to $Z_j, Z^j, Z$.
Then, $\tilde{c}^{\rm tot}$ is first class, as well. See appendix \ref{sa} for
details.

In the final step we should calculate the Dirac bracket~\cite{1,15b}
$\{f,g\}^\ast$ for phase space functions $f,g$. It differs
from the Poisson bracket $\{f,g\}$ by linear combinations of terms of
the form $\{f,Z_j(x)\}\;\{g,Z^k(y)\}$ and
$\{f,Z(x)\}\;\{g,\tilde{c}(y)\}$ (and terms with $f,g$ interchanged).
Fortunately, the Dirac bracket agrees with the Poisson bracket
on functions $f,g$ which only involve $\left\{T,\;S^j,\;
q_{ab},\;n,\;n^a\right\}$ and
their
conjugate momenta $\left\{P,\;P_j,\;P^{ab},\;p,\;p_a\right\}$ on which we focus
our attention in what follows. Using
the Dirac bracket, the second class constraints can be solved strongly:
\ba \label{2.26}
Z_j=0 &\Leftrightarrow& W_j=-P_j/P 
\nonumber\\
Z^j=0 &\Leftrightarrow& I^j=0 
\nonumber\\
Z=0 &\Leftrightarrow& I=0 
\nonumber\\
\tilde{c}=0 &\Leftrightarrow& \rho^2=\frac{P^2}{\det(q)}\left(q^{ab} U_a
U_b+1\right)^{-1} \;.
\ea
From the last equation in (\ref{2.26}) we find
\be \label{2.27}
\rho=\epsilon\; \frac{P}{\sqrt{\det(q)}}\;\big(\sqrt{q^{ab} U_a U_b +1}\big)^{-1} \;,
\ee
with $\epsilon=\pm 1$.
We may also partially reduce the phase space subject to
(\ref{2.26}) by setting $z=z_a=0$ and treating $n,\;n^a$ as Lagrange
multipliers, since they are pure gauge. Then, we are
left with two constraints
\ba \label{2.27a}
c^{\rm tot} &=& c+c^{\rm dust} \; 
\nonumber\\
c^{\rm tot}_a &=& c_a+c^{\rm dust}_a \; 
\ea
where
\be \label{2.28}
c\equiv c^{\rm geo}+c^{\rm matter}
\ee
and
\ba \label{2.29}
c^{\rm dust} &=& \epsilon \; P\; \sqrt{q^{ab} U_a U_b +1}\;
\nonumber\\
c^{\rm dust}_a &=& P\; T_{,a}+P_j \;S^j_{\;,a} \;.
\ea

Equations (\ref{2.27a}--\ref{2.29}) are the main result of this subsection.
They represent the final constraints that restrict the phase space of the system
consisting of a generic scalar field on (M, $g$), extended by dust.
The form of the dust Hamiltonian and spatial diffeomorphism
constraints $\left\{c^{\rm dust}\;, c^{\rm dust}_a\right\}$, respectively, is of
paramount importance
for utilising dust as a deparameterising system, as we will explain
in the next section.

\subsection{The Brown -- Kucha{\v r} Mechanism for Dust}
\label{s2.3}
In the previous section we have shown that
the canonical formulation of a classical system, originally described by General
Relativity and a
generic scalar field theory, then extended by a specific dust model, results in
a
phase space subject to the Hamiltonian and spatial diffeomorphism constraints
(\ref{2.27a}--\ref{2.29}).
The primary Hamiltonian, after having solved the second class constraints,
is a linear combination of those final first class constraints
(\ref{2.27a}--\ref{2.29})
and, thus, is constrained to vanish. This holds in general, independently of
the matter content, and is a direct consequence of the underlying
spacetime diffeomorphism invariance.

Now, observable quantities are special phase space functions, distinguished by
their
invariance under gauge transformations. In other words, their Poisson brackets
with the constraints must vanish when the constraints hold. In particular, they
have
vanishing Poisson brackets with the primary Hamiltonian $H_{\rm primary}$ on the
constraint surface.
This is one of the many facets of
the problem of time: observable quantities do not move with respect to
the primary Hamiltonian, because the latter generates gauge
transformations rather than physical evolution. It follows
that physical evolution must be generated by a true Hamiltonian (not
constrained to vanish, but still gauge invariant).

In this section we address the questions how to construct a true Hamiltonian
from a given Hamiltonian constraint, and, how to construct observable
quantities (gauge invariant phase space functions).

\subsubsection{Deparametrisation: General Theory}
\label{2.3.1}

The manifest gauge invariant construction of a true Hamiltonian,
generating physical evolution as opposed to mere gauge transformations,
becomes particular simple when the original system under consideration
can be extended to a system with constraints in  deparametrised form.

Consider first a general system subject to first class
constraints $c_I$. The set of canonical pairs on phase space split 
into two sets of canonical pairs $(q^a,p_a)$ and
$(T^I,\pi_I)$, respectively, such that the constraints can be solved,
at least locally in phase space, for the $\pi_I$. In other words,
\be \label{2.30}
c_I=0 \;\;\Leftrightarrow\;\; \tilde{c}_I=\pi_I+h_I(T^J;q^a,p_a)=0 \;.
\ee
Notice that, in general, the functions $h_I$ do depend on the $T^J$.
The first class property guarantees that the
$\tilde{c}_I$ are mutually Poisson commuting \cite{27}.

A system that deparametrises allows to split the set of canonical pairs into two
sets of canonical pairs such that
(1) equation (\ref{2.30}) holds globally on phase
space\footnote{This is not the case for Klein -- Gordon fields and
many other scalar field theories with a canonical action that is
at least quadratic in the $\pi_I$.}, and (2) the functions
$h_I$ are independent of the $T^J$.

Property (2) implies that the functions $h_I$ are gauge invariant.
Hence, any linear combination of the $h_I$ that is bounded from below
is a suitable candidate for a true Hamiltonian in the following sense:
let $\tilde{c}_\tau \equiv \tau^I \tilde{c}_I$ be such a linear combination,
with real coefficients $\tau^I$ in the range of $T^I$, and consider for
any phase space function $f$ the expression
\be \label{2.30a}
O_f(\tau)\equiv \left[\sum_{n=0}^\infty \;
\frac{1}{n!}\{\tilde{c}_\tau,f\}_{(n)}\right]_{\tau^I \to
(\tau-T)^I} \;.
\ee
Here\footnote{Notice that the
substitution of the phase space independent numbers $\tau^I$ by the phase
space dependent combination $(\tau-T)^I$ is performed only {\sl after} the
series
has been calculated.}, the iterated Poisson bracket is inductively defined by
$\{\tilde{c}_\tau,f\}_{(0)}=f,\;
\{\tilde{c}_\tau,f\}_{(n+1)}=
\{\tilde{c}_\tau,\{\tilde{c}_\tau,f\}_{(n)}\}$.
Then, $O_f(\tau)$ is an observable quantity. More precisely, it is a gauge
invariant
extension of the phase space function f. Furthermore, physical time translations
of
$O_f(\tau)$ are generated by the functions $h_I$:
\be \label{2.30b}
\frac{\partial O_f(\tau)}{\partial \tau^I}=\{h_I,O_f(\tau)\} 
\ee
provided that $f$ only\footnote{This is no restriction, since the $\pi_I$
can be expressed in terms of the $(q^a,\;p_a)$ (using (\ref{2.30})),
and the $T^I$ are pure gauge.} depends on $(q^a,\;p_a)$.

The observable quantities $O_f(\tau)$ can also be interpreted from the
point of view of choosing a {\sl physical} gauge. Indeed, $O_f(\tau)$ can be
interpreted as representing
the value of $f$ in the gauge $T^I=\tau^I$.

\subsubsection{Deparametrisation: Scalar Fields}
\label{s2.3.2}

The Brown -- Kucha{\v r} mechanism relies on the observation that free scalar
fields lead to deparametrisation of General Relativity, as we sketch below
(see \cite{8} for a detailed discussion).

A free scalar field contributes to the spatial diffeomorphism constraint a
term of the form
\be \label{2.31}
c_a^{\rm scalar}=\pi \phi_{,a} 
\ee
and to the Hamiltonian constraint a function
of $\pi^2$ and $q^{ab} \phi_{,a} \phi_{,b}$, in the absence of a
potential. On the constraint surface, defined by the spatial
diffeomorphism constraint, we have the identity
\be \label{2.32}
q^{ab} \phi_{,a} \phi_{,b}=
\frac{q^{ab} \; c^{\rm scalar}_a c^{\rm scalar}_b}{\pi^2}
=\frac{q^{ab} \; c_a c_b}{\pi^2}
\ee
with $c_a$ denoting the contribution to the total spatial diffeomorphism
constraint that is independent of the free scalar field.
Substitution of (\ref{2.32}) into the total Hamiltonian and
spatial diffeomorphism constraints yields the same constraint surface
 and gauge flow than before. In other words, the constraints with the
substitution
(\ref{2.32}) are equivalent to the original ones. However,
the new total Hamiltonian constraint does not depend on the free scalar
field $\phi$ any more. Therefore, at least locally in phase space,
we can solve the new total Hamiltonian constraint for
the momentum field $\pi$ and write locally
\be \label{2.33}
\tilde{c}^{\rm tot}(x)=\pi(x)+h(x)
\ee
where the scalar density $h$ of weight one is independent of $\pi,\phi$ and,
typically, positive definite, see~\cite{8} for details.

As mentioned above, the constraint (\ref{2.33}) and $h(x)$ are mutually Poisson
commuting, which
guarantees that the {\sl physical Hamiltonian}
\be \label{2.34}
H:=\int_{{\cal X}} \; {\rm d}^3x\; h(x)
\ee
is observable (it has vanishing Poisson brackets with the spatial diffeomorphism
constraint,
because $h$ has density weight one).

This is as much as the general theory goes. There are two remaining caveats:
first of all, the construction is only local in phase space. Secondly,
the construction based on a single free scalar field requires phase space
functions that are already invariant under spatial diffeomorphisms.
Only those can be completed to fully gauge invariant quantities\footnote{
This can be circumvented by employing e.g. three more free scalars but this
would be somewhat ad hoc.}.

\subsubsection{Deparametrisation: Dust}
\label{s2.3.3}
Dust described by the action (\ref{2.1}) does not
entirely fit into the classification scheme given in~\cite{8} and sketched in
the last section. It is not simply based on four free scalar fields
$T,S^j$, but in addition leads to second class constraints. However,
it has a clear interpretation as a system of test observers in
geodesic motion, and circumvents the remaining caveats mentioned
at the end of the last subsection as we will see.

Recall the final form of the Hamiltonian constraint (\ref{2.27a}--\ref{2.27a})
derived in the previous section:
\be \label{2.35}
c^{\rm tot}=c+\epsilon P\sqrt{1+q^{ab} U_a U_b}
\ee
with $U_a=-T_{,a}+W_j\; S^j_{\;,a}$. Solving the second class constraint
$Z_j=0$ for $W_j$, we find $U_a=-c^{\rm dust}_a/P$. Inserting the first class
spatial diffeomorphism constraint $c^{\rm tot}_a=c_a+c^{\rm dust}_a$, we arrive
at the equivalent Hamiltonian constraint
\be \label{2.36}
c^{{\rm tot}\;\prime}=c+\epsilon P\sqrt{1+\frac{q^{ab} c_a c_b}{P^2}}
\ee
which is already independent of $T,S^j$ and $P_j$, but still not of
the form $\tilde{c}^{\rm tot}=P+h$, as required for a system that deparametrises.

\subsubsection{Deparametrisation for Dust: Sign Issues}
\label{s2.3.4}

In order to bring (\ref{2.36}) into the form $\tilde{c}^{\rm tot}=P+h$,
we have to solve a quadratic equation. Each root
describes only one sheet of the constraint surface, unless the sign of $P$
is somehow fixed. As we argue below, this freedom will be fixed
by our interpretation of the
dust system as a {\sl physical reference system}.

Recall that
$P=-\rho \sqrt{\det(q)} U_n$ and $U^\mu \; T_{,\mu}=1, \;U^\mu\;
S^j_{\;,\mu}=0$. In accordance with our interpretation,
we identify $T$ with proper time along the dust
flow lines. Thus, $U$ is timelike and future pointing, hence $U_n<0$.
It follows that sgn$(P)=$sgn$(\rho)$, so $\epsilon=1$ in (\ref{2.27}).

In \cite{3} the authors assume $\rho>0$, as it is appropriate for {\sl
observable} dust\footnote{
This would be required by the usual energy
conditions if the dust would be the only observable matter. However,
notice that only the total energy momentum is subject to the energy
conditions, not the individual contributions from various matter
species.}.
In our case, however, the dust serves only as a tool to deparametrise the system and is, by construction, only pure gauge.
Therefore, we relax  the restriction $\rho>0$, when solving
(\ref{2.36}) for $P$:
\be \label{2.38}
P^2=c^2-q^{ab} c_a c_b\; .
\ee
The right hand side of (\ref{2.38}) is {\sl constrained to be} non -- negative,
albeit it is not manifestly non -- negative.
But this causes no problem, since it is sufficient to analyse the
system in an arbitrarily small neighborhood of the constraint surface,
where $c^2-q^{ab} c_a c_b \geq 0$.
Then,
\be \label{2.39}
\tilde{c}^{\rm tot}=P-\mbox{sgn}(P) \;h, \hspace{0.5cm} h=\sqrt{c^2-q^{ab} c_a
c_b}
\ee
is the general solution, globally defined on (the physically
interesting portion of) the full phase
space.
However, $\tilde{c}^{\rm tot}$ is not yet of the form required by a successful
deparametrisation, because of the sign function which also renders the
constraint non --
differentiable.

In order to utilise dust for deparametrisation, the choice $P<0$ is
required.
Before presenting reasons for this choice, we stress again that the
dust itself is not observable. There are three related arguments for the choice
of $P<0$:
\begin{itemize}
\item[1.] {\sl Dynamics}\\
The deparametrisation mechanism supplies us with a physical Hamiltonian
of the form \be \label{2.40} \HF=\int_{{\cal X}} \; d^3x\; h \; .
\ee In the case where dust is chosen as the clock of the system, the variation
of the physical Hamiltonian is given by \be \label{2.41} \delta
\HF=\int_{{\cal X}} \; {\rm d}^3x\; \left(\frac{c}{h}\; \delta c
-q^{ab}\; \frac{c_b}{h}\;\delta c_a+ \frac{1}{2 h} \; q^{ac} q^{bd}
c_c c_d \; \delta q_{ab}\right) \;. \ee For $P\not=0$, then
$h\not=0$ (in a sufficiently small neighbourhood of the constraint
surface). Hence, the coefficients of the variations on the right
hand side of (\ref{2.41}) are non singular. Moreover, for $P\not=0$,
also $c\not=0$, as we see from (\ref{2.36}). In fact, using
sgn$(c)=-$sgn$(P)$ (from (\ref{2.36})) in a neighborhood of the
constraint surface, \be \label{2.42} \frac{c}{h}=-{\rm
sgn}(P)\;\sqrt{1+q^{ab}\frac{c_a}{h}\frac{c_b}{h}} \ee has absolute
value no less than one.

Let us now compare (\ref{2.41}) with the differential of the primary
Hamiltonian constraint in the absence of dust:
\be \label{2.43}
H_{\rm primary}=\int_{{\cal X}}\;{\rm d}^3x\; \left(nc+n^a c_a\right) 
\ee
which is given by (lapse and shift functions are considered as
Lagrange
multipliers, i.e. are phase space independent)
\be \label{2.44}
{\rm d} H_{\rm primary}=\int_{{\cal X}}\;{\rm d}^3x\; \left(n dc+n^a d
c_a\right)
\; .
\ee
Comparison between (\ref{2.41}) and (\ref{2.44}) reveals that the
differentials coincide, up to the additional term proportional to
$\delta q_{ab}$, provided we identify $n:=c/h$ as dynamical lapse and $n^a:=-
q^{ ab} c_b/h$ as dynamical shift.
This is promising in our aim to
derive physical equations of motions for observable quantities which
nevertheless come close to the usual Einstein equations for gauge
dependent quantities. However, in the standard framework the lapse
function is always positive, guaranteeing that the normal vector field is
future oriented. This fact is correctly reflected in our framework only
if $P<0$.
\item[2.] {\sl Kinematics}\\
The identification
$n\equiv c/h$ and $n^a\equiv - q^{ ab} c_b/h$ can also be motivated as
follows:\\
Consider a spacetime
diffeomorphism defined by $X^\mu\mapsto
(\tau,\sigma^j):=(T(X),S^j(X))=:Y^\mu(X)$ and let $(\tau,\sigma^j)\to
Z^\mu(\tau,\sigma)$ be its inverse. We can define a {\sl dynamical}
foliation of $M$ by $T(X)=\tau=$const. hypersurfaces. The leaves ${\cal
S}_\tau$ of
that foliation are the images of $\cal S$ (which is the range of the
$S^j$) under
the map $Z$ at constant $\tau$. Using the identity
\be \label{2.45}
\delta^\mu_\nu=Z^\mu_{\;,\tau}\; T_{,\nu}+Z^\mu_{\;,j}\; S^j_{\;,\nu}
\ee
and $U^\mu \; T_{,\mu}=1\;, \; U^\mu \; S^j_{,\mu}=0$, we find
$U^\mu=Z^\mu_{\;,\tau}$. Thus, as expected, the foliation is generated by
the vector field $U=\partial/\partial\tau$, which is unit timelike.

It is useful to decompose the deformation vector field $U$ with respect
to the arbitrary coordinate foliation that we used before:
\be \label{2.46}
U^\mu=g^{\mu\nu} U_\nu=-n^\mu U_n+X^\mu_{\;,a} \; q^{ab}\; U_b \;.
\ee
From (\ref{2.12}) and (\ref{2.27}) with $\epsilon=1$ we find
$U_n=-\sqrt{1+q^{ab} U_a U_b}$. Next,
\be \label{2.47}
U_a=-\frac{c^{\rm dust}_a}{P}=\frac{c_a}{P} \; .
\ee
On the other hand $n\equiv c/h={\rm sgn}(P) U_n$ and $n_a\equiv -c_a/h
=-{\rm sgn}(P) \; U_a$. Therefore, (\ref{2.46}) can be written
\be \label{2.48}
U^\mu=-{\rm sgn}(P)\; \left(n\; n^\mu+X^\mu_{,a} \; n^a\right) \;.
\ee
Hence, the sign for which $n$ is positive yields the correct
decomposition of the deformation vector field $U$ in terms of lapse and shift.
This calculation reveals also the geometrical origin of the identification
$n\equiv c/h$ and $n^a \equiv - q^{ ab} c_b/h$.

As a side remark: the identity $-n^2+q^{ab} n_a n_b=-1$ is an
immediate consequence of the normalisation of the deformation vector field,
$g_{\mu\nu} U^\mu U^\nu=-1$.
That is, the deformation vector field is timelike, future
oriented and normalised, but not normal to the
leaves of the foliation that it defines.

\item[3.] {\sl Stability and flat spacetime limit}\\
Of course, we could choose $P>0$ and use $-h$ instead of $h$ in order to
obtain equations of motion. However, in that case the physical Hamiltonian
would be unbounded from below, leading to an unstable theory. Alternatively,
we could stick to $+h$ for the equations of motion, but
then the $\tau$ evolution would run backwards.

Moreover, since $c^{\rm tot}=c+P\sqrt{1+q^{ab} U_a U_b}=0$
on the constraint surface, we would have $c<0$ for $P>0$. Since
$c=c^{\rm geo}+c^{\rm matter}$ and $c^{\rm matter}>0$, this would enforce
$c^{\rm geo}<0$.
Hence, flat space would not be a solution.

As a side remark: for $c_a/h\ll 1$ and $P<0$, $h\approx c$,
while $h\approx -c$ for $P>0$. Thus, the physical Hamiltonian
density, with respect to dust as a physical reference system,
approximates the standard model Hamiltonian density $c^{\rm matter}$
only for $P<0$.
\end{itemize}
We emphasise again that the dust used for deperametrisation is
not observable, and should not be confused with observable matter.
It solely provides a dynamical reference frame.

\subsection{Dust Interpretation}
\label{s2.4}
In this section we derive a physical interpretation of the
Brown -- Kucha{\v r} action based on the geodesic motion
of otherwise free particles~\cite{3}.

Consider first the action for a single relativistic particle with
mass $m$ on a background $g$: \be \label{2.49}
S_m=-m\int_{\mathbb{R}}\; ds\; \sqrt{-g_{\mu\nu}\; \dot{X}^\mu \;
\dot{X}^\nu} \;. \ee The momentum conjugate to the configuration
variable $X^\mu$ is given by \be \label{2.50} P_\mu=\frac{\delta
S_m}{\delta
\dot{X}^\mu}=m\frac{g_{\mu\nu}\dot{X}^\nu}{\sqrt{-g_{\rho\sigma}
\dot{X}^\rho \dot{X}^\sigma}} \;, \ee rendering the Legendre
transformation singular. This is a consequence of the
reparametrisation invariance of the action (\ref{2.49}). Hence, the
system exhibits no physical Hamiltonian, but instead a primary
Hamiltonian constraint enforcing the mass shell condition: \be
\label{2.51} C=\frac{1}{2m}\left(m^2+g^{\mu\nu} P_\mu P_\nu\right)\;
. \ee

Let us proceed to the canonical formulation.
In terms of the embeddings $X\equiv Y_t(x)$, the particle trajectory
reads $X(s)=Y_{t(s)}(x(s))$, so that
\be \label{2.52}
\dot{X}(s)=\dot{t}(s)\; Y_{,t}+\dot{x}^a(s)\; Y_{,a} \;
\ee
where the overdot refers to differentiation with respect to the trajectory
parameter $s$.
The momenta are then given by
\ba \label{2.53}
p_a &\equiv& Y^\mu_{,a} P_\mu=\frac{m}{\sqrt{-g_{\rho\sigma}
\dot{X}^\rho \dot{X}^\sigma}}\; q_{ab}\left(\dot{t}\; n^b+\dot{x}^b\right) 
\nonumber\\
p_t &\equiv& Y^\mu_{,t} P_\mu=\frac{m}{\sqrt{-g_{\rho\sigma}
\dot{X}^\rho \dot{X}^\sigma}} \left(\dot{t}\; g_{tt}+q_{ab} \;n^b
\dot{x}^a\right) 
\ea
where $g_{tt}=-n^2+q_{ab} n^a n^b$.
We can only eliminate the spatial velocities $\dot{x}^a$.
To do this set $A\equiv g_{ta} \dot{x}^a=q_{ab} n^b
\dot{x}^a$ and $B\equiv q_{ab} \dot{x}^a \dot{x}^b$. Then,
\be \label{2.54}
w\equiv g_{\mu\nu} \dot{X}^\mu \dot{X}^\nu=g_{tt}\; \dot{t}^2+2A \;\dot{t}+B \;.
\ee
On the other hand,
\be \label{2.55}
-\frac{w}{m^2} q^{ab} p_a p_b=B+2 A\dot{t}+q_{ab} n^a n^b \dot{t}^2 \;.
\ee
Substituting $w$ from (\ref{2.54}) and collecting coefficients of
$A,B,\dot{t}^2$ yields
\ba \label{2.56}
0 &=& B+2A \; \dot{t}+\dot{t}^2
\frac{g_{tt}\; \frac{q^{ab} p_a p_b}{m^2}+q_{ab} n^a n^b}{1+\frac{q^{ab} p_a
p_b}{m^2}}
\nonumber\\
&=& w +\frac{\dot{t}^2\;n^2}{1+\frac{q^{ab} p_a p_b}{m^2}} \;.
\ea
Now we can solve the first equation in (\ref{2.53}) for $\dot{x}^a$:
\be \label{2.57}
\dot{x}^a=\dot{t}\;\left(-n^a\pm\sqrt{1+\frac{q^{ab} p_a p_b}{m^2}}\right) \;.
\ee
Inserting this into the second equation in (\ref{2.53}) leads to a
constraint of the form $C\equiv p_s+h$:
\be \label{2.58}
C=p_s-n^a p_a\pm n\sqrt{m^2+q^{ab} p_a p_b}
\ee
while the canonical Hamiltonian is obtained from the
Lagrangian in (\ref{2.49}) as
\be \label{2.59}
H_{\rm canon}=P_\mu \dot{X}^\mu-L=\dot{t} \; C \;.
\ee
Since the constraint (\ref{2.58}) is in deparametrised form,
the phase space can easily be reduced, leading to the reduced action
\be \label{2.60}
S_{\rm reduced}=\int\; {\rm d}s\; \left(p_a \dot{x}^a -h\right)\; .
\ee
We extend this phase space by adding a canonical pair $(\tau,m)$ and
consider the extended action
\be \label{2.61}
S_{\rm extended}=\int\; {\rm d}s\; \left(m \dot{\tau}+p_a \dot{x}^a -h\right) 
\ee
where the particle mass $m$ is now considered as a dynamical variable.
The equations of motion for $m,\tau$ give $\dot{m}=0$ and
$\dot{\tau}=\dot{t}\sqrt{-w}$. Thus, the mass is constant and $\tau$ is the
proper time (in the gauge $s=t$).

We generalize our results now to the case of many particles. More precisely,
let $\cal S$ be a label set and consider a relativistic particle for
each label $\sigma \in {\cal S}$. This amounts to provide
each variable appearing in the extended action with a corresponding label,
i.e.~$x^a_\sigma,\;p_a^\sigma,\;\tau_\sigma,\;m^\sigma$, and the
total action for those particles is then just the sum over the
corresponding actions $S^\sigma_{\rm extended}$
\be \label{2.62}
S_{\rm extended}=\sum_{\sigma \in {\cal S}} S^\sigma_{\rm extended} \; .
\ee

Next we consider the limit in which $\cal S$ becomes a three -- manifold,
with the labels $\sigma$ becoming coordinates on this manifold.
In this limit, we introduce the following fields:
\ba \label{2.63}
\tilde{T}(\sigma) &\equiv& \tau_\sigma \nonumber\\
\tilde{P}(\sigma)\;d^3\sigma&\equiv& m^\sigma \nonumber\\
\tilde{S}^a(\sigma)&\equiv& x^a_\sigma\nonumber\\
\tilde{p}_a(\sigma)d^3\sigma &\equiv &p_a(x_\sigma) \nonumber \\
\tilde{n}(\sigma)&\equiv& n(x_\sigma)\nonumber\\
\tilde{n}^a(\sigma)&\equiv &n^a(x_\sigma) \nonumber \\
\tilde{q}_{ab}(\sigma) &\equiv& q_{ab}(x_\sigma).
\ea
Then, in the specified limit, the extended action (\ref{2.62}) becomes
\be \label{2.64}
S_{\rm extended}=\int {\rm d}t \int {\rm d}^3\sigma \left(
\dot{\tilde{T}} \tilde{P}+\dot{\tilde{S}}^a \tilde{P}_a+ \tilde{n}^a
\tilde{P}_a
\mp \tilde{n} \; \sqrt{\tilde{P}^2+\tilde{q}^{ab} \tilde{P}_a
\tilde{P}_b}\right)\;.
\ee

Finally, we perform a canonical transformation: instead of the
fields $\tilde{S}^a(\sigma)$ with values in $\cal X$, we would like to
consider the inverse fields $S^j(x)$ with values in $\cal S$, that is
$S^j(\tilde{S}(\sigma))=\sigma^j,\; \tilde{S}^a(S(x))=x^a$.
This is at the same time a diffeomorphism and we can transform the other
fields as well. For instance ($T$ is a scalar and $P$ is a scalar density),
\ba \label{2.65}
T(x) &=& \tilde{T}(S(x))=\int_{{\cal S}}\; {\rm d}^3\sigma \;
\delta\left(x,\tilde{S}(\sigma)\right)\; \left|\det(\partial
\tilde{S}/\partial\sigma)\right|
\Tilde{T}(\sigma) 
\nonumber\\
P(x) &=& \frac{\tilde{P}}{|\det(\partial \tilde{S}/\partial\sigma)|}(S(x))
=\int_{{\cal S}}\; {\rm d}^3\sigma \;
\delta\left(x,\tilde{S}(\sigma)\right)\;
\tilde{P}(\sigma) 
\nonumber\\
S^j(x) &=& \int_{{\cal S}}\; {\rm d}^3\sigma \; \sigma^j\;
\delta(x,\tilde{S}(\sigma))\; \left|\det(\partial
\tilde{S}/\partial\sigma)\right| \; .
\ea
Calculating the time derivatives and performing integrations by
parts, we find
\be \label{2.66}
\int_{{\cal S}} \;{\rm d}^3\sigma\; \dot{\tilde{T}} \tilde{P}
=\int_{{\cal X}} \;{\rm d}^3x\; \left(\dot{T} P-\dot{S}^j S^a_j P T_{,a}\right)
\ee
with $S^a_j$ denoting the inverse of the matrix $S^j_{,a}$. Using
\be \label{2.67}
\dot{\tilde{S}}^a(\sigma)=-\left[\dot{S}^j S^a_j\right]_{S(x)=\sigma} 
\ee
and defining $P_j(x)$ implicitly through
\be \label{2.68}
\tilde{P}_a=-\left[\frac{P T_{,a}+P_j S^j_{,a}}{\left|\det(\partial S/\partial
x)\right|}
\right]_{S(x)=\sigma}
\ee
we find that $S_{\rm extended}$ precisely turns into the dust action on ${\cal
X}$ with
the second class constraints eliminated\footnote{The metric field has to be
pulled
back by the dynamical spatial diffeomorphism, as well. For details, see next
section.}.

\section{Relational Observables and Physical Hamiltonian}
\label{s3}

In this section we present an explicit prescription for constructing
gauge invariant completions of arbitrary phase space functions.
The construction is non -- perturbative and technically involved,
but the physical picture behind it will become crystal clear.
Furthermore, the formal expressions are only required to establish
 certain properties of the construction, but are not required for the
calculation
of physical properties. This is a great strength of the
relational formalism.

Let us summarise the situation.
After having solved the second class
constraints and having identified lapse and shift fields as Lagrange
multipliers, we are left with the following canonical pairs
\be \label{3.1}
(q_{ab},p^{ab}),\;\;(\xi,\pi),\;(T,P),\;(S^j,P_j) \; ,
\ee
subject to the following first class constraints
\ba \label{3.2}
c_a^{\rm tot} &=& c_a + c_a^{\rm dust}\;, \hspace{0.5cm} c_a^{\rm dust} =P \;
T_{,a}+P_j \; S^j_{\; ,a} 
\nonumber\\
c^{\rm tot} &=& c + c^{\rm dust}\;, \hspace{0.5cm} c^{\rm
dust}=-\sqrt{P^2+q^{ab}
c_a^{\rm dust} c_b^{\rm dust}}
\ea
where $c_a,\;c$ are independent of the dust variables
$\left\{T,\;P,\;S^j,P_j\right\}$.
We already used $P<0$.

As explained in \ref{s2.3}, we aim at deparametrisation of the
theory and therefore solve (\ref{3.2}) for the dust momenta, leading to
the equivalent form of the constraints
\ba \label{3.3}
\tilde{c}^{\rm tot} &=& P+h\;, \hspace{0.5cm} h=\sqrt{c^2-q^{ab} c_a c_b} 
\nonumber\\
\tilde{c}_j^{\rm tot} &=& P_j+h_j\;, \hspace{0.5cm} h_j=S^a_j\; \left(-h
T_{,a}+c_a\right)
\ea
with $S^a_j S_{\;,a}^k =\delta_j^k,\;\; S^a_j \; S_{\;,b}^j =\delta_b^a$,
hence $S^a_j$ is the inverse of $S^j_{\; ,a}$ (assuming, as before, that
$S:\;{\cal X}\to
{\cal S}$ is a diffeomorphism). These constraints are mutually
Poisson commuting\footnote{One can either prove this by direct calculation,
or one uses the following simple argument: the Poisson bracket between
the constraints must be proportional to a linear combination of constraints,
because the constraint algebra is first class. Since the constraints are linear
in the dust momenta,
the result of the Poisson bracket calculation no longer depends on them.
Therefore, the coefficients of proportionality must vanish.}.
However, only $\tilde{c}^{\rm tot}$ is
in deparametrised form (i.e. $h$ is independent of $T,S^j$), but
$\tilde{c}^{\rm tot}_j$
is not. In particular, we can only conclude that the $h(x)$ are mutually
Poisson commuting. Still, this will be enough for our purposes\footnote{
In what follows we will drop the tilde in noting the constraints
for notational simplicity.}.

Following the works~\cite{8,5,6}, we describe the construction of fully gauge
invariant completions
of phase space functions. Consider the smeared constraint
\be \label{3.4}
K_\beta\equiv \int_{{\cal X}} \;{\rm d}^3x\; \left[\beta(x) c^{\rm
tot}(x)+\beta^j(x)
c^{\rm tot}_j(x)\right] 
\ee
where $\beta(x),\;\beta^j(x)$ are phase space independent smearing functions in
the range
of $T(x),\;S^j(x)$.
Under a gauge transformation generated by this constraint, an
arbitrary phase space function $f$ is mapped to:
\be \label{3.5}
\alpha_{\beta}(f)\equiv \sum_{n=0}^\infty \; \frac{1}{n!}\;\{K_\beta,f\}_{(n)}\;
.
\ee
The fully gauge invariant completion of $f$ is given by
\be \label{3.6}
O_f[\tau,\sigma]\equiv\Big[\alpha_\beta(f)\Big]_{
\!\!\!\!\!\beta\to\tau-T\atop\beta^j\to\sigma^j-S^j} \; .
\ee
Here, the functions $\tau(x),\;\sigma^j(x)$ are also in the range
of $T(x),\;S^j(x)$, respectively\footnote{We denote the {\sl
functional
dependence} of (\ref{3.6}) on the functions $\tau(x),\;\sigma^j(x)$ by
square brackets. Below we show that it is sufficient to choose
those functions to be constant and replace the square
brackets by round ones for notational convenience.}.
It is important to {\sl first}
calculate the Poisson brackets appearing in (\ref{3.5}) with the phase space
independent functions
$\beta,\beta^j$, and {\sl afterwards} to replace them with the phase space
dependent
functions $\tau-T,\;\sigma^j-S^j$, respectively.
 This connection can be established based on the gauge transformation
properties of $T, S^j$:
$\alpha_\beta(T)=T+\beta,\; \alpha_\beta(S^j)=S^j+\beta^j$. Hence,
\be \label{3.7}
O_f[\tau,\sigma] \equiv
\Big[\alpha_\beta(f)\Big]_{\!\!\!\!\alpha_\beta(T)=\tau\atop\alpha_\beta(S^j)
=\sigma^j} \; .
\ee
Indeed, (\ref{3.7}) motivates the following interpretation:
$O_f[\tau,\sigma]$ is the gauge invariant completion of  $f$,
which in the gauge $T=\tau,\;S^j=\sigma^j$ takes the value $f$. This is not the
only interpretation we entertain a different one below.

For the purpose of this paper it suffices to consider the infinite series
appearing in the gauge invariant completions as expressions useful for
formal manipulations. There is no need to actually calculate these series for
any physical problem.

Further important properties~\cite{27,14} of the completion are:
\be \label{3.8}
\{O_f[\tau,\sigma],O_{f'}[\tau,\sigma]\}=\{O_f[\tau,\sigma],O_{f^{\prime}}[\tau,
\sigma]\}^*=O_{\{f,f'\}^*}[\tau,\sigma]
\ee
\be \label{3.10}
O_{f+f'}[\tau,\sigma]=O_f[\tau,\sigma]+O_{f'}[\tau,\sigma]\;, \hspace{0.5cm}
O_{f\cdot f'}[\tau,\sigma]=O_f[\tau,\sigma]\cdot O_{f'}[\tau,\sigma]\; .
\ee
Here, $\{.,.\}^\ast$ is the Dirac bracket\footnote{
For completeness, we note the definition of the Dirac bracket:
\be \label{3.9}
\{f,f'\}^\ast\equiv \{f,f'\}-\int_{{\cal X}} \; {\rm d}^3x\;\sum_{\mu=0}^3
\left[\{f,c^{\rm tot}_\mu(x)\}\{f',S^\mu(x)\}
-\{f',c^{\rm tot}_\mu(x)\}\{f,S^\mu(x)\}\right]
\ee
where $c_0^{\rm tot}\equiv c^{\rm tot},\;S^0:=T$.
The Dirac bracket is antisymmetric and
$\{f,c^{\rm tot}_\mu(x)\}^\ast=\{f,S^\mu(x)\}^\ast=0$ everywhere.
This follows from the fact that
$c^{\rm tot}_\mu(x)$ and $S^\mu(x)$ are mutually Poisson commuting, and that
$\{c^{\rm tot}_\mu(x),S^\nu(y)\}=\delta(x,y)\delta_\mu^\nu$.}
\cite{1} associated with the constraints
and the {\sl gauge fixing functions} $T,S^j$.
Relations (\ref{3.8}) and (\ref{3.10})
show that the map $f\mapsto O_f[\tau,\sigma]$ is a Poisson homomorphism of
the algebra of functions on phase space with pointwise multiplication,
equipped with the Dirac
bracket\footnote{The Dirac bracket coincides with the Poisson bracket
on gauge invariant functions. It is degenerate, since it
annihilates
coinstraints and {\sl gauge fixing functions}. Hence, it defines only a Poisson
structure, but not a symplectic structure on the full phase space.} as Poisson
structure.

In particular, for a general functional
$f=f[q_{ab}(x),P^{ab}(x),\xi(x),\pi(x),T(x),P(x),S^j(x),P_j(x)]$ the
following useful identity holds:
\be \label{3.11}
O_f=f\big[O_{q_{ab}(x)},O_{P^{ab}(x)},O_{\xi(x)},O_{\pi(x)},
O_{T(x)},O_{P(x)},O_{S^j(x)},O_{P_j(x)}\big](\tau,\sigma) \; .
\ee
This has important consequences: (\ref{3.11}) ensures that it suffices
to know the completions of the elementary phase space variables.
In fact, we are only interested in those functions that are independent
of the dust variables $\left\{T,S^j,P,P_j\right\}$.
The reason for this is that, first of all, $P,P_j$ are expressible in terms
of all other variables on the constraint surface.
Alternatively, since the constraints are
mutually Poisson commuting, we have
\be \label{3.12}
O_{P(x)}=O_{c^{\rm tot}(x)}+O_{h(x)}=c^{\rm tot}(x)+O_{h(x)}\;, \hspace{0.5cm}
O_{P_j(x)}=O_{c^{\rm tot}_j(x)}+O_{h_j(x)}=c^{\rm tot}_j(x)+O_{h_j(x)} \; .
\ee
Hence, these functions are known, once we know the completion of the
remaining variables. Secondly, 
\be \label{3.13}
O_{T(x)}[\tau,\sigma]=\tau(x)\;, \hspace{0.5cm}
O_{S^j(x)}[\tau,\sigma]=\sigma^j(x) \
\ee
are phase space independent. Thus, the only
interesting variables to consider are $\left\{q_{ab},P^{ab},\xi,\pi\right\}$.

In what follows we consider only dust -- independent functions $f$.
For those
(\ref{3.8}) simplifies to
\be \label{3.14}
\{O_f[\tau,\sigma],O_{f'}[\tau,\sigma]\}=O_{\{f,f'\}}[\tau,\sigma]\; .
\ee
Equations
(\ref{3.10}) and (\ref{3.8}) imply that $f\mapsto O_f[\tau,\sigma]$ is a
{\sl Poisson automorphism}
of the Poisson subalgebra of functions that do not depend on the dust
variables with the ordinary Poisson bracket as Poisson structure. This will be
absolutely crucial for all what follows.

Further useful properties of the completion are:
\be \label{3.15}
O_f[\tau,\sigma]=O^{(2)}_{O^{(1)}_f[\sigma]}[\tau]
\ee
where (recall (\ref{3.4}), (\ref{3.5}) and (\ref{3.6}))
\ba \label{3.16}
O^{(1)}_f[\sigma] &=& \Big[\alpha_\beta(f)\Big]_{\!\!\!\!\!\!\!\!\!\!\beta\to
0\atop\beta^j\to
\sigma^j-S^j}
\nonumber\\
O^{(2)}_f[\tau] &=& [\alpha_\beta(f)]_{\beta\to \tau-T\atop \!\!\!\!\!\beta^j\to
0} \; .
\ea
This follows from the fact that the constraints are mutually Poisson
commuting and $\{c^{\rm tot}(x),S^j(y)\}=0$.
The important consequence of (\ref{3.15}) is that we can accomplish
full gauge invariance in two stages: we establish first
invariance under the action of the spatial diffeomorphism constraint,
and afterwards achieve invariance with respect to gauge transformations
generated by the Hamiltonian
constraint.
This holds even under more general circumstances~\cite{6}, i.e.~when
the constraints can not be deparametrised.

\subsection{Implementing spatial diffeomorphism invariance}
\label{s3.1}

Keeping the physical interpretation of the completion in mind,
the map $f\mapsto O^{(1)}_f[\sigma]$ can be worked out explicitly.
In the first stage of the construction, the corresponding smeared constraint
reads
\be
K_\beta=\int_{{\cal X}} \; {\rm d}^3x\; \beta^j(x)\;c^{\rm tot}_j(x) \; .
\ee
Given a phase space function $f$, its completion $O^{(1)}_f$ with respect to
gauge transformations
generated by $K_\beta$ becomes
\ba \label{3.17}
O^{(1)}_f[\sigma] &=& \sum_{n=0}^\infty \frac{1}{n!}
\left[\{K_\beta,f\}_{(n)}\right]_{\beta^j\to \sigma^j-S^j}
\nonumber\\
&=& f+\sum_{n=1}^\infty\;\frac{1}{n!}
\int_{{\cal X}} \; {\rm d}^3x_1\; [\sigma^{j_1}(x_1)-S^{j_1}(x_1)]\;\dots\;
\int_{{\cal X}} \; {\rm d}^3x_n\; [\sigma^{j_n}(x_n)-S^{j_n}(x_n)]\;
\nonumber\\
&& \times\; \left\{c^{\rm tot}_{j_1}(x_1),\left\{c^{\rm tot}_{j_2}(x_2),\dots,
\left\{c^{\rm tot}_{j_n}(x_n),f\right\}\dots\right\}\right\}\; .
\ea
Let us begin with $f=\xi(x)$. We claim that
\be \label{3.18}
\left\{K_\beta,\xi(x)\right\}_{(n)} = \left[\beta^{j_1}\dots\beta^{j_n}\;
v_{j_1}\dots v_{j_n}\cdot \xi\right](x) 
\ee
where $v_j$ is the vector field defined by
\be \label{3.18a}
v_j\cdot \xi(x):=S^a_j(x)\; \xi_{,a}(x) \; .
\ee
In fact the vectors $v_j$ are mutually commuting.
\ba \label{3.18b}
[v_j,v_k]
& = & S^a_j S^b_{k,a} \partial_b-j\leftrightarrow k
\nonumber\\
& = & -S^a_j S^b_l S^l_{,ca} S^c_k  \partial_b-j\leftrightarrow k
\nonumber\\
& = &  -S^a_j S^b_l S^l_{,ac} S^c_k  \partial_b-j\leftrightarrow k
\nonumber\\
& = &  S^a_j S^b_{l,c} S^l_{,a} S^c_k  \partial_b-j\leftrightarrow k
\nonumber\\
& = &  S^b_{j,c}  S^c_k  \partial_b-j\leftrightarrow k
\nonumber\\
& = & S^a_k S^b_{j,a}   \partial_b-j\leftrightarrow k\nonumber\\
&=&0
\ea
To prove (\ref{3.18}) by induction over $n$ we need 
\be \label{3.19}
\{K_\beta,S^a_j(x)\} = -S^a_k(x) \{K_\beta, S^k_{,b}(x)\} S^b_j(x)=
-S^a_k(x) \beta^k_{,b}(x) S^b_j(x)=-[v_j\cdot \beta^k] S^a_k
\ee
For $n=1$ we have 
\be \label{3.20}
\{K_\beta,\xi(x)\}_{(1)} = [\beta^j S^a_j \xi_{,a}](x)=\beta^j v_j\cdot 
\xi
\ee
which coincides with (\ref{3.18}). 
Suppose that (\ref{3.18}) is correct up to $n$, then
\ba \label{3.21}
\{K_\beta,\xi\}_{(n+1)} &=&
\beta^{j_1} \dots \beta^{j_n}\;\left\{K_\beta,
v_{j_1} \dots v_{j_n}\cdot \xi\right\}
\nonumber\\
&=&
\beta^{j_1} \dots \beta^{j_n}\;\left[
v_{j_1} \dots v_{j_n}\cdot \left\{K_\beta,\xi\right\}
+\sum_{l=1}^n v_{j_1} \dots v_{j_{l-1}} \left\{K_\beta, S^a_{j_l}\right\}
\partial_a
v_{j_{l+1}} \dots v_{j_n} \cdot \xi\right]
\nonumber\\
&=&
\beta^{j_1} \dots \beta^{j_n}\;\left[
v_{j_1} \dots v_{j_n}\cdot \beta^{j_{n+1}} v_{j_{n+1}} \cdot\xi
-\sum_{l=1}^n v_{j_1} \dots v_{j_{l-1}} \left[v_{j_l} \beta^{j_{n+1}}\right]
v_{j_{n+1}}
v_{j_{l+1}} \dots v_{j_n} \cdot \xi\right]
\nonumber\\
&=&
\beta^{j_1} \dots \beta^{j_n}\;\left[
v_{j_1} \dots v_{j_n}\cdot \beta^{j_{n+1}} v_{j_{n+1}} \cdot\xi
-\sum_{l=1}^n v_{j_1} \dots v_{j_{l-1}} \left[v_{j_l} \beta^{j_{n+1}}\right]
v_{j_{l+1}} \dots v_{j_{n+1}} \cdot \xi\right]
\nonumber\\
&=&
\beta^{j_1} \dots \beta^{j_n}\;\left[
v_{j_1} \dots v_{j_n}\cdot \beta^{j_{n+1}} v_{j_{n+1}} \cdot\xi
-\left(v_{j_1} \dots v_{j_n} \beta^{j_{n+1}} - \beta_{j_{n+1}} v_{j_1} \dots
v_{j_n}\right)
v_{j_{n+1}} \cdot \xi\right]
\nonumber\\
&=& \beta^{j_1} \dots \beta^{j_{n+1}}\;
v_{j_1} \dots v_{j_{n+1} }\cdot\xi \; .
\ea
where we used commutativity of the $v_j$ and the Leibniz rule. 
\\
It follows that
\be \label{3.22}
O^{(1)}_{\xi(x)}[\sigma]=\xi(x)+\sum_{n=1}^\infty \; \frac{1}{n!}\;
\left[\sigma^{j_1}(x)-S^{j_1}(x)\right] \dots
\left[\sigma^{j_n}(x)-S^{j_n}(x)\right] \;
v_{j_1} \dots v_{j_n} \cdot \xi(x) \; .
\ee
Using $v_j \cdot S^k=\delta_j^k$ and commutativity
of the $v_j$, we find with $\beta^j:=\sigma^j-S^j$ that
\ba \label{3.23}
v_k\cdot O^{(1)}_{\xi(x)}[\sigma]
&=& v_k\cdot \xi+\sum_{n=1}^\infty
\left[\frac{1}{(n-1)!} \left[v_k\cdot \beta^j\right] \beta^{j_1} \dots
\beta^{j_{n-1}} \;
v_j v_{j_1} \dots v_{j_{n-1}} \cdot \xi+\frac{1}{n!}
\beta^{j_1} \dots \beta^{j_n} \;
v_k v_{j_1} \dots v_{j_n} \cdot \xi\right]
\nonumber\\
&=& v_k\cdot \xi+\left[v_k \cdot \beta^j\right] v_j\cdot \xi
+ \sum_{n=1}^\infty
\frac{1}{n!} \beta^{j_1} \dots \beta^{j_n} \;
\left[\left[v_k\cdot \beta^j\right] v_j v_{j_1} \dots v_{j_n} \cdot \xi+
v_k v_{j_1} \dots v_{j_n} \cdot \xi\right]
\nonumber\\
&=& \sum_{n=0}^\infty
\frac{1}{(n)!} \beta^{j_1} \dots \beta^{j_n} \;
\left[v_k\cdot \sigma^j\right] v_j v_{j_1} \dots v_{j_n} \cdot \xi \; .
\ea
The interpretation of (\ref{3.22}) becomes clear for the choice
$\sigma^j(x)=\sigma^j=$const., for which
(\ref{3.23}) vanishes identically. In other words, the completion
$O^{(1)}_{\xi(x)}[\sigma]$ does not depend on $x$ at all.
Hence, for this choice of $\sigma^j$, we are free to choose
$x$ in $O^{(1)}_{\xi(x)}[\sigma]$ in order to simplify (\ref{3.22}).
Since (\ref{3.22}) is a power expansion in
$\left(\sigma^{j}(x)-S^{j}\right)(x)$,
and $S^{j}$ is a diffeomorphism, we choose $x=x_\sigma$, with $x_\sigma$ being
the unique solution
of $S^j(x)=\sigma^j$. Then\footnote{We switched to the notation
$O^{(1)}_f(\sigma)$ to indicate
the choice $\sigma^j(x)=\sigma^j=$const.},
\be \label{3.24}
O^{(1)}_{\xi(x)}(\sigma)=\xi(x_\sigma)=\left[\xi(x)\right]_{S^j(x)=\sigma^j} \;
.
\ee
The completion $O^{(1)}_{\xi(x)}(\sigma)$ of $\xi(x)$ has also a simple
integral representation:
\be \label{3.25}
O^{(1)}_{\xi(x)}(\sigma)=\int_{{\cal X}} \; {\rm d}^3x\; \left|\det(\partial
S(x)/\partial x)\right|\;\delta\left(S(x),\sigma\right)\;\xi(x) \; .
\ee

The significance of choosing $\sigma^j=$const is the following:
Clearly, the choice $\sigma^j(x)=$ const. is not in the range of
$S^j(x)$, which is supposed to be a diffeomorphism.
Thus, the interpretation of $O^{(1)}_f(\sigma)$ as
the value of $f$ in the gauge $S^j=\sigma^j$ is obsolete.
However, given a function $\sigma^j(x)$, instead of solving
$S^j(x)=\sigma^j(x)$ for the values of the function $S^j$
for all $x$, we could solve it for $x$, while keeping
the function $S^j$ arbitrary. This is the appropriate interpretation
of $O^{(1)}_f(\sigma)$.
This is possible because $O^{(1)}_f[\sigma]$
is (at least formally) gauge invariant, whether or not $S^j=\sigma^j$ is a
good choice of gauge. It is fully sufficient to do this because, as shown
in \cite{3} and as we will show in appendix \ref{s3.2}, the partially reduced
phase space (with respect to the spatial diffeomorphism constraint)
is completely determined by the $O^{(1)}_f(\sigma)$, hence the
$O^{(1)}_f[\sigma]$ must be hugely redundant.

We can now compute the spatially diffeomorphism invariant extensions
for the remaining phase space variables without any additional effort, by
switching
first to variables which are spatial scalars on $\cal X$, using
$J:=\det(\partial S/\partial x)$, which we assume to be positive
(orientation preserving diffeomorphism):
\be \label{3.26}
\left(\xi,\pi/J\right)\;, \; \left(T,P/J\right)\;, \;
\left(q_{jk}\equiv q_{ab}\; S^a_j S^b_k\;,
p^{jk}\equiv S^j_{\;,a} S^k_{\; ,b}\; p^{ab}/J\right)\; .
\ee
The image of these quantities, evaluated at $x$, under the completion
$O^{(1)}_{f}(\sigma)$ simply consists in replacing $x$ by $x_\sigma$, where
$x_\sigma$ solves $S^j(x)=\sigma^j$, just as in (\ref{3.24}).
The scalars (\ref{3.26}) on ${\cal X}$ are the
pull backs of the original tensor (densities) under the diffeomorphism
$\sigma \mapsto x_\sigma$ evaluated at $\sigma$. Thus, they are tensor
(densities) of the same type, but live now on the dust space manifold 
$\cal S$.

This statement sounds
contradictory because of the following subtlety: We have
e.g. the three quantities
$P(x),\;\tilde{P}(x)=P(x)/J(x),\;\tilde{P}(\sigma)=\tilde{P}(x_\sigma)$.
On $\cal X$, $P(x)$ is a scalar density while $\tilde{P}(x)$ is a
scalar. Pulling back $P(x)$ to ${\cal S}=S({\cal X})$ by the
diffeomorphism $\sigma\mapsto S^{-1}(\sigma)$ results in
$\tilde{P}(\sigma)$. But pulling back $\tilde{P}(x)$ back to $\cal S$
results in the {\it same} quantity $\tilde{P}(\sigma)$. Since a
diffeomorphism does not change the density weight, we would get the
contradiction that $\tilde{P}(\sigma)$ has both density weights zero and
one on $\cal S$. The resolution of the puzzle is that what determines
the density weight of $P(x)$ on $\cal X$ is its transformation behaviour
under canonical transformations generated by the
total spatial diffeomorphism constraint $c_a^{\rm tot}=c_a^{\rm dust}+c_a$ where
$c_a^{\rm dust},\;c_a$ are the dust and non dust contributions respectively.
After the reduction of $c_a^{\rm tot}$, what determines the density weight
of $\tilde{P}(\sigma)$ on $\cal S$ is its transformation behaviour under
$([c_a+P
T_{,a}]S^a_j/J)(x_\sigma)=\tilde{c}_j(\sigma)+\tilde{P}(\sigma)
\tilde{T}_{,j}(\sigma)$ and this shows that $\tilde{P}(\sigma)$ has
density weight one\footnote{
In order to avoid confusion of the reader we mention that any quantity 
$f$ 
on $\cal X$ which has positive density weight is mapped to zero under
$f\mapsto O^{(1)}_f(\sigma)$. Let us again consider the example $f=P$.
We have $\tilde{P}(\sigma)=P(x_\sigma)\det(\partial 
S^{-1}(\sigma)/\partial \sigma)$ which is perfectly finite. However
by the Poisson automorphism formula 
$O^{(1)}_{P(x)}=
O^{(1)}_{J(x)\; \tilde{P}(x)}= 
O^{(1)}_{J(x)}\;\tilde{P}(\sigma)
=\det(\partial \sigma/\partial x)\;\tilde{P}(\sigma)=0$
since $\sigma=$const.}.\\
\\
We will denote the images under $f\mapsto O^{(1)}_f(\sigma)$ by
\be \label{3.27}
\left(\tilde{\xi}(\sigma),\tilde{\pi}(\sigma)\right)\;, \; 
\left(\tilde{T}(\sigma),
\tilde{P}(\sigma)\right)\;, \; \left(\tilde{q}_{ij}(\sigma)
\tilde{p}^{ij}(\sigma)\right)\; .
\ee
In appendix \ref{s3.2} we show that the quantities (\ref{3.27}) can be also
obtained through 
symplectic reduction which is an alternative method to show that the pairs in
(\ref{3.27}) are conjugate and as it was done in \cite{3}.

\subsection{Implementing invariance with respect to the Hamiltonian
constraint}
\label{s3.3}

Having completed the elementary phase space variables with respect
to the spatial diffeomorphism constraint, it remains to render those
variables invariant under the action of the Hamiltonian constraint.
This amounts to calculate the image of those variables under the map
$f\mapsto O^{(2)}_f[\sigma]$, for any $f$ in (\ref{3.27}).
For $f$ independent of $T,P$, the completion of $f$ with respect to
the Hamiltonian constraint is given by
\be \label{3.32}
O^{(2)}_f[\tau]=\sum_{n=0}^\infty\;
\frac{1}{n!}\;\left\{h[\tau],f\right\}_{(n)}\; , \hspace{0.5cm}
h[\tau]=\int_{{\cal X}} \; {\rm d}^3x\; \left(\tau(x)-T(x)\right)\;h(x) \; .
\ee
Only if we choose $\tau(x)=\tau={\rm const.}$ (\ref{3.32}) is invariant under
diffeomorphisms. Hence we choose $\tau(x)=\tau={\rm const.}$ 
which allows to rewrite (\ref{3.32}) entirely in terms of
the variables (\ref{3.27}).
As a reminder of this choice, we denote the completion by
$O^{(2)}_{f}(\tau)$.
In this case (\ref{3.32}) can be written as
\be \label{3.33}
O^{(2)}_f(\tau)=\sum_{n=0}^\infty\;
\frac{1}{n!}\;\{\tilde{h}(\tau),f\}_{(n)}\;,\hspace{0.5cm}
\tilde{h}(\tau)=\int_{{\cal S}} {\rm d}^3\sigma\;
(\tau-\tilde{T}(\sigma))\;\tilde{h}(\sigma) \;
\ee
with $\tilde{h}(\sigma)$ denoting the image of $h(x)$ under the
replacement\footnote{
The proof of this statement is based on the fact that the replacement
corresponds to a diffeomorphism and that $h(\tau)$ is the integral
of a scalar density of weight on, for $\tau=$const.}
of $\left\{\xi(x),\pi(x),q_{ab}(x),p^{ab}(x)\right\}$ by \\
$\left\{\tilde{\xi}(\sigma),\tilde{\pi}(\sigma),\tilde{q}_{jk}(\sigma),
\tilde{p}^{jk}(\sigma)\right\}$, respectively. Explicitly, denoting
\be \label{3.34} \tilde{c}(\sigma)\equiv
\left[\frac{c(x)}{J(x)}\right]_{S(x)=\sigma}\hspace{0.5cm}
\tilde{c}_j(\sigma)\equiv
\left[\frac{c_j(x)}{J(x)}\right]_{S(x)=\sigma}\; , \ee where, as
before, $c_j(x)=S^a_j(x) \; c_a(x)$, we find \be \label{3.35}
\tilde{h}(\sigma)=\sqrt{\tilde{c}^2-\tilde{q}^{jk} \; \tilde{c}_j
\tilde{c}_k}(\sigma) \; . \ee It is easy to see that \be
\label{3.36}
\frac{d}{d\tau}\;O^{(2)}_f(\tau)=\{\HF,O^{(2)}_f(\tau)\} \ee
with \be \label{3.37} \HF:=\int_{{\cal S}}\; d^3\sigma\;
\tilde{h}(\sigma) \ee is the {\sl physical Hamiltonian} (not
Hamiltonian density) of the deparametrised system.

We denote the fully gauge invariant completions of the Hamiltonian constraint,
the spatial diffeomorphism
constraints\footnote{Explicit expression for the constraints in terms of the
fully gauge invariant phase space variables are given in the next
section, see (\ref{3.54}).}
and the physical Hamilton density,
respectively, as
\ba \label{3.39}
C(\tau,\sigma)&\equiv& O^{(2)}_{\tilde{c}(\sigma)}(\tau) \hspace{0.5cm}
C_j(\tau,\sigma)\equiv O^{(2)}_{\tilde{c}_j(\sigma)}(\tau)\; ,\nonumber \\
H(\tau,\sigma)&\equiv& O^{(2)}_{\tilde{h}(\sigma)}(\tau)\; .
\ea
It is worth emphasising again that $H(\tau,\sigma)$ is the physical energy
density associated to the physical Hamiltonian when the dust fields are considered as clocks of the system.
The fully gauge invariant completions of the phase space variables
for matter and gravity are denoted by
\ba\label{3.38}
\Xi(\tau,\sigma)&\equiv& O^{(2)}_{\tilde{\xi}(\sigma)}(\tau) \hspace{0.5cm}
\Pi(\tau,\sigma)\equiv O^{(2)}_{\tilde{\pi}(\sigma)}(\tau),\;,\nonumber \\
Q_{ij}(\tau,\sigma)&\equiv& O^{(2)}_{\tilde{q}_{ij}(\sigma)}(\tau)\hspace{0.5cm}
P^{ij}(\tau,\sigma)\equiv O^{(2)}_{\tilde{p}^{ij}(\sigma)}(\tau)\;.
\ea
The matter scalar field $\Xi(\tau,\sigma)$ and its conjugate momentum
$\Pi(\tau,\sigma)$ are observable quantities since gauge invariant. The same applies to the 
three-metric $Q_{ij}(\tau,\sigma)$ and its canonical momentum field
$P^{ij}(\tau,\sigma)$. Moreover, the completion is non -- perturbative,
i.e.~full non-Abelian gauge invariance has been accomplished.

\subsection{Constants of the physical Motion}
\label{s3.4}

In the previous section we successfully constructed fully gauge invariant
quantities for a specific deparmetrising system. In some sense,
the construction frees the true degrees of freedom from the constraints,
replacing them by conservation laws which govern the physical motion
of observable quantities. Indeed, we have the following first integrals
of physical motion ({\sl conservation laws}):
\be \label{3.40}
\frac{d}{d\tau} C_j(\tau,\sigma)=0\;, \hspace{0.5cm}
\frac{d}{d\tau} H(\tau,\sigma)=0\; .
\ee
These equations express invariance under the physical evolution generated
by $\HF$, as opposed to gauge invariance. The functions
$C_j,\;H$, representing physical three -- momentum and energy,
{\sl are} already gauge invariant.

We proceed the  proof of (\ref{3.40}). Recall that the original constraints
$c^{\rm tot}(x)\;, c^{\rm tot}_j(x)$ are mutually
Poisson commuting. Using (\ref{3.3}), this means in particular,
\be \label{3.41}
\left\{c^{\rm tot}(x),c^{\rm tot}(y)\right\} =
\left\{P(x)+h(x),P(y)+h(y)\right\}=\left\{h(x),h(y)\right\}=0 
\ee
where we used that the $P(x)$ are mutually Poisson commuting and that
$h(x)$ is independent of the dust variables.
Next, consider the smeared spatial diffeomorphism generator
\be
c(u)\equiv \int_{{\cal X}} \; {\rm d}^3x \;
u^a(x) \; c_a(x) \;.
\ee
The smeared constraint acts on $h(y)$ as it should,
\be
\left\{c(u),h(y)\right\}=\left[u^a h\right]_{,a}(y) 
\ee
or, after functional differentiation with respect
to the smearing functions $u^a$:
\be \label{3.42}
\left\{c_a(x),h(y)\right\}=\partial_{y^a} \left(\delta(x,y) h(y)\right) \; .
\ee
This follows from the properties of
$c_a$, generating spatial diffeomorphims on the matter and gravity variables,
and $h$, being a scalar density of weight one and only depending on
the non -- dust variables.
Furthermore, the spatial diffeomorphsims form an algebra with
$\{c(u),c(u')\}=c([u',u])$. From this follows again by functional
differentiation
\be \label{3.43}
\left\{c_a(x),c_b(y)\right\}=
\left[\partial_{y^b} \delta(x,y)\right] c_a(y)-\left[\partial_{x^a}
\delta(y,x)\right] c_b(x)\;.
\ee
Let us investigate the implications of (\ref{3.41}--\ref{3.43})
for
\ba \label{3.44}
\tilde{h}(\sigma) &=& \left[\frac{h(x)}{J(x)}\right]_{S(x)=\sigma}=
\int_{{\cal X}}\; {\rm d}^3x\; \delta\left(S(x),\sigma\right)\; h(x) 
\nonumber\\
\tilde{c}_j(\sigma) &=& \left[\frac{c_a(x)\;
S^a_j(x)}{J(x)}\right]_{S(x)=\sigma}= \int_{{\cal X}}\; {\rm d}^3x\;
\delta\left(S(x),\sigma\right)\; S^a_j(x) c_a(x) \; . \ea First of
all, \be \label{3.45}
\left\{\tilde{h}(\sigma),\tilde{h}(\sigma')\right\}= \int_{{\cal X}}
\; {\rm d}^3x \; \int_{{\cal X}} \; {\rm d}^3y \;
\delta\left(S(x),\sigma\right) \; \delta\left(S(y),\sigma'\right) \;
\left\{h(x),h(y)\right\}=0 \ee where we used that the $S^j(x)$
are mutually commuting, as well as with the $h(y)$. Second, denoting
the pullback of the smeared diffeomorphism generator with
$\tilde{c}(\tilde{u})$ for some smearing functions
$\tilde{u}^j(\sigma)$, we have \ba \label{3.46}
\left\{\tilde{c}(\tilde{u}),\tilde{h}(\sigma')\right\} & =&
\int_{{\cal S}}\; {\rm d}^3\sigma \; \tilde{u}^j(\sigma)\;
\int_{{\cal X}} \; {\rm d}^3x \; \int_{{\cal X}} \; {\rm d}^3y \;
\delta\left(S(x),\sigma) \; \delta(S(y),\sigma'\right) \; S^a_j(x)\;
\left\{c_a(x),h(y)\right\}
\nonumber\\
&=& \int_{{\cal S}}\; {\rm d}^3\sigma \; \tilde{u}^j(\sigma)\;
\int_{{\cal X}} \; {\rm d}^3x \; \int_{{\cal X}} \; {\rm d}^3y \;
\delta\left(S(x),\sigma\right) \; \delta\left(S(y),\sigma'\right) \; S^a_j(x)\;
\partial_{y^a} \left(\delta(x,y) h(y)\right)
\nonumber\\
&=& - \int_{{\cal S}}\; {\rm d}^3\sigma \; \tilde{u}^j(\sigma)\;
\int_{{\cal X}} \; {\rm d}^3x \;
\delta\left(S(x),\sigma\right) \; \left[\partial_{x^a}
\delta\left(S(x),\sigma'\right)\right] \;
S^a_j(x)h(x)\;
\nonumber\\
& =& - \int_{{\cal S}}\; {\rm d}^3\sigma \; \tilde{u}^j(\sigma)\;
\int_{{\cal X}} \; {\rm d}^3x
\delta\left(S(x),\sigma\right) \; \left[\partial_{\tilde{\sigma}^k}
\delta\left(\tilde{\sigma},\sigma'\right)\right]_{\tilde{\sigma}=S(x)} \;
S^k_{,a}(x)
S^a_j(x)\; h(x)
\nonumber\\
& =& - \int_{{\cal S}}\; {\rm d}^3\sigma \; \tilde{u}^j(\sigma)\;
\int_{{\cal S}} \; {\rm d}^3\sigma_1 \; \delta\left(\sigma_1,\sigma\right)\;
\; \left[\partial_{\sigma_1^j} \delta\left(\sigma_1,\sigma'\right)\right] \;
\left[\frac{h(x)}{J(x)}\right]_{S(x)=\sigma_1}
\nonumber\\
& =& -
\int_{{\cal S}} \; {\rm d}^3\sigma_1 \; \tilde{u}^j(\sigma_1)\;
\; \left[\delta\left(\sigma_1,\sigma'\right)\right]_{,\sigma_1^j} \;
\tilde{h}(\sigma_1)
\nonumber\\
& =&
\int_{{\cal S}} \; {\rm d}^3\sigma_1 \;
\left[\tilde{u}^j(\sigma_1)\;\tilde{h}(\sigma_1)\right]_{,\sigma_1^j}
\; \delta\left(\sigma_1,\sigma'\right)
\nonumber\\
& =&
\left[\tilde{u}^j(\sigma')\;\tilde{h}(\sigma')\right]_{,\sigma^{\prime j}} \; .
\ea
The last implication follows from
\be \label{3.47}
\tilde{c}(\tilde{u})=c(u_S)\;, \hspace{0.5cm} u^a_S(x)=S^a_j(x)
\tilde{u}^j(S(x)) 
\ee
where the vector fields $u_S$ are phase space dependent (they depend on
$S$) and using the fact that the $S^j(x)$ and $c_a(y)$ are mutually
Poisson commuting. Then,
\ba \label{3.48}
&& \left\{\tilde{c}(\tilde{u}),\tilde{c}(\tilde{u}')\right\} =
c\left(\left[u'_S,u_S\right]\right)
\nonumber\\
&=& \int_{{\cal X}}\; {\rm d}^3x\;
\left[u^{\prime b}_S(x) u^a_{S,b}(x)
-u^b_S(x) u^{\prime a}_{S,b}(x)\right] \; c_a(x)
\nonumber\\
&=& \int_{{\cal X}}\; {\rm d}^3x\;
S^b_j(x)\;
\Bigg[\tilde{u}^{\prime j} (S(x)) \left(S^a_k(x) S^l_{,b}(x)\;
\tilde{u}^k_{,l}(S(x))-S^a_l(x) S^l_{,cb}(x) S^c_k(x)
\tilde{u}^k(S(x))\right)
\nonumber\\
&& -\tilde{u}^j(S(x)) \left(S^a_k(x) \tilde{u}^{\prime k}_{,l}(S(x))
-S^a_l(x) S^l_{,cb}(x) S^c_k(x)
\tilde{u}^{\prime k}(S(x))\right)\Bigg] \; c_a(x)
\nonumber\\
&=& \int_{{\cal X}}\; {\rm d}^3x\;
\left[\tilde{u}^{\prime j}(S(x)) \tilde{u}^k_{,j}(S(x))
-\tilde{u}^j(S(x))  \tilde{u}^{\prime k}_{,j}(S(x))\right]\;
c_a(x) S^a_k(x)
\nonumber\\
& =&
\int_{{\cal S}}\; {\rm d}^3\sigma\;
\int_{{\cal X}}\; {\rm d}^3x\; \delta\left(S(x),\sigma\right)
\left[\tilde{u}^{\prime j}(\sigma) \tilde{u}^k_{,j}(\sigma)
-\tilde{u}^j(\sigma)  \tilde{u}^{\prime k}_{,j}(\sigma)\right]\;
c_a(x) S^a_k(x)
\nonumber\\
&=& \tilde{c}\left(\left[\tilde{u}',\tilde{u}\right]\right) \; .
\ea
Hence, equations (\ref{3.41}--\ref{3.43}) are exactly reproduced 
by (\ref{3.45}), (\ref{3.46}) and (\ref{3.48}).

We can now easily finish the proof of (\ref{3.40}).
In (\ref{3.33}) we introduced $\tilde{h}(\tau)$. From (\ref{3.45})
follows that
\be \label{3.49}
\big\{\tilde{h}(\tau),\tilde{h}(\sigma)\big\}=0 \; .
\ee
This implies in particular that
\be \label{3.50}
\tilde{h}(\sigma)=H(\sigma)=O^{(2)}_{\tilde{h}(\sigma)}(\tau)
\ee
is already an observable quantity\footnote{Note, however, although
$H(\sigma)=\tilde{h}(\sigma)$, this is not true for the corresponding spatial
diffeomorphism constraints, $\tilde{c}_j(\sigma)\not=C_j(\sigma)$!}.
Hence, from the definition
of $\HF$ and (\ref{3.45}) we find $\{\HF,\tilde{h}(\sigma)\}=0$.
Furthermore,
\be \label{3.51}
\left\{\HF,C_j(\tau,\sigma)\right\}
=\{O^{(2)}_{\HF}(\tau),O^{(2)}_{\tilde{c}_j(\sigma)}(\tau)\}
=O^{(2)}_{\{\HF, \tilde{c}_j(\sigma)\}}(\tau)
=0 \; .
\ee

Alternatively, a more direct way to understand this result is to make
use of the series representation (\ref{3.33}) and of
\be
\tilde{h}(\tau)=\tau
\HF-\tilde{h}[\tilde{T}]\hspace{0.5cm}
\tilde{h}[\tilde{T}]=
\int_{{\cal S}}\; {\rm d}^3\sigma\; \tilde{T}(\sigma)\;\tilde{h}(\sigma)
\; .
\ee
Since the Hamiltonian vector fields $X_1,\;X_2$ of $\HF$ and
$\tilde{h}[\tilde{T}]$, respectively,
are commuting, we may write for (\ref{3.33})
\ba \label{3.52}
C_j(\sigma,\tau)
&=& \exp(\tau X_1-X_2)\cdot \tilde{c}_j(\sigma)=
\exp(-X_2)\cdot [\exp(\tau X_1) \cdot \tilde{c}_j(\sigma)]
\nonumber\\
&=& \exp(-X_2)\cdot \tilde{c}_j(\sigma)=
\sum_{n=0}^\infty \;\frac{(-1)^n}{n!}\;
\{\tilde{h}[\tilde{T}],\tilde{\sigma}_j(\sigma)\}_{(n)} \;
\ea
which is clearly $\tau$ -- independent.

We end this section by giving an explicit expressions for
the physical Hamiltonian in terms of purely gauge invariant quantities:
\be \label{3.53}
H(\sigma)=\sqrt{C(\tau,\sigma)^2-Q^{jk}(\tau,\sigma)\;C_j(\sigma)
\;C_k(\sigma)}
\ee
Note that $C,\;Q^{jk}$ are not independent of the physical time $\tau$.
Of course, $C(\tau,\sigma),\; C_j(\sigma)$ are
obtained from $\tilde{c}(\sigma),\; \tilde{c}_j(\sigma)$ simply by
replacing everywhere the functional dependence on
$\left\{\tilde{\xi}(\sigma),\tilde{\pi}(\sigma),\tilde{q}_{jk}(\sigma),
p^{jk}(\sigma)\right\}$
by that on
$\left\{\Xi(\tau,\sigma),\Pi(\tau,\sigma),Q_{jk}(\tau,\sigma),
P^{jk}(\tau,\sigma)\right\}$. In greater detail,
\ba \label{3.54}
C_j(\sigma) &=& \left[-2 Q_{jk} (D_k P^{kl})+\Pi (D_j\Xi)\right](\tau,\sigma)
\nonumber\\
C(\tau,\sigma) &=&
\frac{1}{\kappa}\left[\frac{1}{\sqrt{\det(Q)}}\left(Q_{jm} Q_{kn}-\frac{1}{2}
Q_{jk}
Q_{mn}\right)P^{jk} P^{mn}- \sqrt{\det(Q)} R^{(3)}[Q] +2\Lambda
\sqrt{\det(Q)}\right](\tau,\sigma)
\nonumber\\
&&+ \frac{1}{2\lambda}\left[\frac{\Pi^2}{\sqrt{\det(Q)}}+\sqrt{\det(Q)}
\left(Q^{jk}\; (D_j\Xi)\;(D_k \Xi)+v(\Xi)\right)\right](\tau,\sigma)
\nonumber\\
&\equiv& C_{\rm geo}(\tau,\sigma)+C_{\rm matter}(\tau,\sigma) 
\ea
with $D_j$ denoting the covariant differential compatible with $Q_{jk}$.

\section{Physical Equations of Motion}
\label{s4}

In this section\footnote{
For the purposes of this section we assume that $\cal X$ and, equivalently,
$\cal S$ have no boundary. In order to allow for more general topologies,
we consider boundary terms in the next section.
The calculations
of the present section are not affected by the presence of such a boundary term,
because it only cancels the boundary term that would appear in the calculation
of
this section.}
we derive the physical evolution
of the gauge invariant functions
$\{\Xi,\;\Pi,\;Q_{jk},\;P^{jk}\}$, generated by the
true Hamiltonian $\HF$, in the first order (Hamilton) and
second order (Lagrange) formulation.
In other words, we study the true evolution of matter degrees of freedom and
gravity
with respect to the physical reference system (dust).

\subsection{First Order (Hamiltonian) Formulation}
\label{s4.1}

For a generic observable $F$, we denote\footnote{
Furthermore, for notational ease we drop the dependence on $(\tau,\sigma)$
when no confusions can arise.}
its $\tau$--derivative simply by an
overdot, $\dot{F}$. Then,
\ba \label{4.1}
\dot{F} &=& \left\{\HF,F\right\}
=\int_{{\cal S}}\;{\rm d}^3\sigma\;\{H(\sigma),F\}
\\
&=& \int_{{\cal
S}}\;{\rm
d}^3\sigma\;\frac{1}{H(\sigma)}\left(C(\sigma)\left\{C(\sigma),F\right\}
-Q^{jk}(\sigma) C_k(\sigma) \left\{C_j(\sigma),F\right\}+\frac{1}{2}
Q^{im}(\sigma)C_m(\sigma)Q^{jn}(\sigma)C_{n}(\sigma) \left\{Q_{ij}(\sigma),F\right\}\right)\;\nonumber .
\ea

Let us introduce {\sl dynamical shift} and {\sl dynamical lapse} fields by
\ba \label{4.2}
N_j &\equiv& - C_j/H 
\nonumber\\
N &\equiv& C/H = \sqrt{1+Q^{jk} N_j N_k} \; .
\ea
Notice that $N_j$ is a constant of the physical motion, but
neither are $N$ nor $N^j=Q^{jk} N_k$.
Then (\ref{4.1}) can be rewritten in the familiar looking form
\be \label{4.3}
\dot{F}
= \int_{{\cal
S}}\;{\rm d}^3\sigma\;\left(N(\sigma)\left\{C(\sigma),F\right\}+N^j(\sigma)
\left\{C_j(\sigma),F\right\}+\frac{1}{2} H(\sigma) N^i(\sigma) N^{j}(\sigma)
\left\{Q_{ij}(\sigma),F\right\}\right) \; .
\ee
The first two terms in (\ref{4.3}) are {\sl exactly the same} as those
in the gauge variant derivation of the equation of motion,
derived with respect the primary Hamiltonian
\be \label{4.4}
\HF_{\rm primary}(N,\vec{N})=\int_{{\cal S}} \; {\rm d}^3\sigma\;
\left(N(\sigma)
C(\sigma)+N^j(\sigma) C_j(\sigma)\right) \;.
\ee
Here, $N,\;N^j$ are viewed as phase space independent functions. The third
term in (\ref{4.3}), on the other hand, is a genuine correction to the
gauge variant formalism. However, it enters only in the physical evolution
equation
of $P^{jk}$. Hence,
\ba \label{4.5}
\dot{\Xi} &=&  \frac{N}{\sqrt{\det(Q)}}\Pi+{\cal L}_{\vec{N}} \Xi 
\nonumber\\
\dot{\Pi} &=&\partial_j \left[N \sqrt{\det(Q)}\; Q^{jk} \Xi_{,k}\right]
-\frac{N}{2}\sqrt{\det(Q)}\; v'(\Xi)
+ {\cal L}_{\vec{N}} \Pi 
\nonumber\\
\dot{Q}_{jk} &=& \frac{2N}{\sqrt{\det(Q)}}\; G_{jkmn} \;
P^{mn}+ \left({\cal L}_{\vec{N}} Q\right)_{jk} 
\nonumber\\
\dot{P}^{jk} &=&
 N\left[-\frac{Q_{mn}}{\sqrt{\det{Q}}}\left(2 P^{jm} P^{kn}-P^{jk}
P^{mn}\right)
+\frac{\kappa}{2}Q^{jk}\;C -
\Q\;Q^{jk}\left(2\Lambda +\frac{\kappa}{2\lambda}\left(\Xi^{,m}\Xi_{,m}
+v(\Xi)\right)\right)\right]
\nonumber\\
&& +\Q \left[G^{-1}\right]^{jkmn}\left((D_mD_n N)-NR_{mn}[Q]\right)
+\frac{\kappa}{2\lambda}N\Q\;\Xi^{,j}\Xi^{,k}
\nonumber\\
&& -\frac{\kappa}{2}\; H \; Q^{jm} Q^{kn} N_m N_n
 +({\cal L}_{\vec{N}} P)^{jk} 
\ea
with ${\cal L}_{\vec{N}}$ denoting the Lie derivative\footnote{
For the explicit calculation of the Lie derivative it is
important to note that $\Pi\;, P^{jk}$ are tensor densities
of weight one in dust space.}
with respect
to the vector field $\vec{N}$ with components $N^j=Q^{jk} N_k$,
and we have defined the DeWitt metric on symmetric tensors as
\be
\label{Gjkmn}
G_{jkmn}\equiv
\frac{1}{2}\left(Q_{jm}Q_{nk}+Q_{jn}Q_{mk}-Q_{jk} Q_{mn}\right) 
\ee
which has the inverse
\be
\label{GInvjkmn}
\left[G^{-1}\right]^{jkmn}=\frac{1}{2}\left(Q^{jm} Q^{nk}+Q^{jn}Q^{mk}-2Q^{jk}
Q^{mn}\right)
\ee
that is $G_{jkmn} \left[G^{-1}\right]^{nmpq}=\delta^p_{(j} \delta^q_{k)}$.
The Ricci tensor of $Q$ is denoted by $R_{jk}[Q]$ and
$C=C_{\rm geo}+C_{\rm matter}$ denotes the split of the Hamiltonian constraint
(with the dust reference system excluded) into gravitational\footnote{
Note that we have included a cosmological constant term
$+2\sqrt{\det(q)})\Lambda$
in $C_{\rm geo}$}
and matter
contribution, as shown explicitly in (\ref{3.54}).

It is already evident that the dust model we utilised
as a physical reference system has the great advantage that, remarkably,
equations (\ref{4.5}) are almost exactly of the same form as
the corresponding equations in the gauge variant formalism,
the only difference being the last term on the right hand side
of the physical evolution equation for $\dot{P}^{jk}$.
In other words, introducing a physical reference system must necessarily
lead to corrections compared to gauge fixing, because the physical
reference system will communicate via gravitational interaction
with the original system under consideration. In the sense described
above, the dust reference system creates only a minimal modification ---
it is the minimal extension of the original gravity -- matter system
that extracts the true degrees of freedom and
allows for their physical evolution.

The other difference is that instead of having constraints imposed
on the phase space variables, $C=C_j=0$, now the dynamics of the
true degrees of freedom is subject to conservation laws
$\dot{H}=\dot{C}_j=0$. Thus, in solving (\ref{4.5}) we may prescribe
arbitrary functions $\epsilon(\sigma),\;\epsilon_j(\sigma)$ which
play the role of the (constant in $\tau$--time) energy and momentum
density, respectively. The  substitution\footnote{ The letter
$\epsilon$ is chosen to indicate that these values are small,
appropriate for {\sl test clocks and rods}. In this way it can be
guaranteed that the dust, although gravitationally coupled with the
original system, will not alter the dynamics of the original system
in an uncontrolled fashion.} $H=\epsilon,\;C_j=-\epsilon_j$ will be
crucial in what follows. In fact, in order to derive the second
order equations of motion, $\dot{\Xi},\dot{Q}_{jk}$ in (\ref{4.5})
has to be solved for $\Pi,P^{jk}$. Without the conservation laws,
this would be impossible, since $\Pi,P^{jk}$ enter the expressions
for $H,C,C_j$ in a non trivial way, i.e.~solving for them would lead
to algebraic equations of higher than fourth order. The substitution
will also be crucial for the derivation of the effective Lagrangian,
by the inverse Legendre transform, corresponding to $\HF$, see
appendix \ref{s6}.


\subsection{Second Order (Lagrangian) Formulation}
\label{s4.2} In this section we will use the first order
(Hamiltonian) equations of motion and derive the corresponding
second order (Lagrangian) equations of motion for the configuration
variables $\Xi$ and $Q_{jk}$, respectively. We will sketch the main
steps of these calculations in section \ref{DeriveEOM}. The reader
who is just interested in the results should skip this section and
go directly to section \ref{SumEOM} where the final equations are
summarised.
\subsection{Derivation of the Second Order Equations of Motion}
\label{DeriveEOM}
In this section we want to derive the second order equations of
motion for $\Xi$ and $Q_{jk}$, respectively. These second order
equations will be functions of the configuration variables
$\Xi,Q_{jk}$ and their corresponding velocities
$\dot{\Xi},\dot{Q}_{jk}$, respectively. This can be achieved by
solving for the conjugate momenta $\Pi$ and $P^{jk}$ in terms of
their corresponding velocities $\dot{\Xi}$ and $\dot{Q}_{jk}$ via the equation
of motion. The
relation between the conjugate momenta and their velocities is
given through the first order Hamiltonian equations which were
displayed in the last section in equation (\ref{4.5}).
\\
\\
We begin with the matter equation for $\Xi$. First, we have to take
the time derivative of the first order equation for $\Xi$ given in
equation (\ref{4.5}). This yields \ba \label{DDXi}
\ddot{\Xi}&=&\Big[\frac{\dot{N}}{\Q}-N\frac{(\Q)^{\bf\dot{}}}{\det{Q}}\Big]
\Pi+\frac{N}{\Q}\dot{\Pi}+{\cal L}_{\dot{\vec{N}}}\Xi+{\cal
L}_{\vec{N}}\dot{\Xi}. \ea As discussed in section \ref{s3.4}, the
shift vector $N_j:=-C_j/H$ is a constant of motion since
$\dot{C}_j=\dot{H}=0$. Therefore for the Lie derivative with respect
to $\dot{\vec{N}}$ the only non vanishing contribution is the one
including $\dot{Q}^{ij}$, \be \big({\cal
L}_{\dot{\vec{N}}}\Xi\big)=(Q^{ij}N_j)^{\bf\dot{}}\Xi_{,i}=\dot{Q}^{ij}N_j\Xi_{,
i}.
\ee We will use this result later on, but for now we will work with
the compact form of the Lie derivatives as written in equation
(\ref{DDXi}). Solving for $\Pi$ in terms of $\dot{\Xi}$ we get from
equation (\ref{4.5}) \be \label{PiVel}
\Pi(\Xi,\dot{\Xi},Q_{jk})=\frac{\Q}{N}\Big(\dot{\Xi}-{\cal
L}_{\vec{N}}\Xi\Big) \ee and thus have expressed $\Pi$ as a function
of the velocity $\dot{\Xi}$. In order to stress that $\Pi$ has to be
understood as a function of $\dot{\Xi}$, we have explicitly written
the function's arguments in this section. 
Notice that strictly speaking $\Pi$ also appears in
$N=\sqrt{1+Q^{jk}C_jC_k/H^2}$ and $N^j=-Q^jkC_k/H$. However, $C_j$ and $H$ are
treated as constants of motion as discussed before. The same applies to $P^{jk}$
below.
Next we insert this result
into equation (\ref{4.5}), obtaining \ba \label{DPiVel}
\dot{\Pi}(\Xi,\dot{\Xi})&=&[N \sqrt{\det(Q)}\; Q^{jk}
\Xi_{,k}]_{,j}-\frac{N}{2}\sqrt{\det(Q)}\; v'(\Xi) + {\cal
L}_{\vec{N}}\Big(\frac{\Q}{N}\big(\dot{\Xi}-{\cal
L}_{\vec{N}}\Xi\big)\Big)\\
&=&[N \sqrt{\det(Q)}\; Q^{jk}
\Xi_{,k}]_{,j}-\frac{N}{2}\sqrt{\det(Q)}\; v'(\Xi) +
\big(\dot{\Xi}-{\cal L}_{\vec{N}}\Xi\big)\Big({\cal
L}_{\vec{N}}\frac{\Q}{N}\Big) + \frac{\Q}{N}\Big({\cal
L}_{\vec{N}}\big(\dot{\Xi}-{\cal L}_{\vec{N}}\Xi\big)\Big).\nonumber
\ea The final second order equation of motion for $\Xi$ can be
derived by inserting equation (\ref{PiVel}) and (\ref{DPiVel}) into
equation (\ref{DDXi}). The result is \ba \ddot{\Xi}&=&\Big[
\frac{\dot{N}}{N}-\frac{(\Q)^{\bf\dot{}}}{\Q}+\frac{N}{\Q}\Big({\cal
L}_{\vec{N}}\frac{\Q}{N}\Big)\Big]\big(\dot{\Xi}-{\cal
L}_{\vec{N}}\Xi\big)
+Q^{jk} \Xi_{,k}\Big[\frac{N}{\Q}\big[N\Q]_{,j}\Big]\nonumber\\
&& +N^2\big[\Delta\Xi +[Q^{jk}]_{,j} \Xi_{,k}-\frac{1}{2}\;
v'(\Xi)\big] +2\big({\cal L}_{\vec{N}}\dot{\Xi}\big)+\big({\cal
L}_{\dot{\vec{N}}}\Xi\big) -\big({\cal L}_{\vec{N}}\big({\cal
L}_{\vec{N}}\Xi\big)\big). \ea The same procedure has to be repeated
for the gravitational equations now. Applying another time
derivative to the first order equation of $Q_{jk}$ in equation
(\ref{4.5}) yields \be \label{DDQ}
\ddot{Q}_{jk}=\Big(2\Big[\frac{\dot{N}}{\Q}
-N\frac{(\Q)^{\bf\dot{}}}{\det{Q}}\Big]G_{jkmn}+\frac{2N}{\Q}\dot{G}_{jkmn}
\Big)P^{mn}+\frac{2N}{\Q}G_{jkmn}\dot{P}^{mn} +\big({\cal
L}_{\vec{\dot{N}}}Q\big)_{jk}+\big({\cal
L}_{\vec{N}}\dot{Q}\big)_{jk}. \ee In order to solve for $P^{jk}$ in
terms of $\dot{Q}_{jk}$ we use the inverse of the tensor $G_{jkmn}$
denoted by $[G^{-1}]^{jkmn}$ and defined in equation
(\ref{GInvjkmn}). This results in \be \label{PVel}
P^{jk}(Q_{jk},\dot{Q}_{jk})=\frac{\Q}{2N}[G^{-1}]^{jkmn}\big(\dot{Q}_{mn}-\big({
\cal L}_{\vec{N}}Q\big)_{mn}\big). \ee Since the equation for
$\dot{P}^{jk}$ in (\ref{4.5}) contains $C$ which includes the
geometry as well as the matter part of the Hamiltonian constraint
(see equation (\ref{3.54}) for its explicit definition), it is a
function of the variables $Q_{jk},P^{jk},\Xi$ and $\Pi$. This was
different for $\Pi$ where its time derivative involved the matter
momentum only. Thus, in order to express $\dot{P}^{jk}$ as a
function of configuration variables and velocities, we use equation
(\ref{PiVel}) and (\ref{PVel}) and replace the momenta occurring in
$\dot{P}^{jk}$. Rewriting $C_{\mathrm{geo}}$ by means of the DeWitt
bimetric $G_{jkmn}$ we get \be
C_{\mathrm{geo}}=\frac{1}{\kappa}\Big[\frac{1}{\Q}G_{jkmn}P^{jk}P^{mn}
+\Q\big(2\Lambda-R)\Big]. \ee Using the relation in equation
(\ref{PVel}) and the fact that
$G_{jkmn}[G^{-1}]^{jkrs}=\delta^{r}_{(m}\delta^{s}_{n)}$, we obtain
\be \label{CgeoVel}
C_{\mathrm{geo}}(Q_{jk},\dot{Q}_{jk})=\frac{1}{\kappa}\Big[\frac{\Q}{4N^2}[G^{-1
}]^{jkmn}\big(\dot{Q}_{jk}-\big({\cal
L}_{\vec{N}}Q\big)_{jk}\big)\big(\dot{Q}_{mn}-\big({\cal
L}_{\vec{N}}Q\big)_{mn}\big) +\Q\big(2\Lambda-R)\Big]. \ee For the
matter part of the Hamiltonian constraint we obtain by means of
equation (\ref{PiVel}) \be \label{CmatVel}
C_{\mathrm{matter}}(\Xi,\dot{\Xi},Q_{jk})
=\frac{1}{2\lambda}\Big[\frac{\Q}{N^2}\big(\dot{\Xi}-{\cal
L}_{\vec{N}}\Xi\big)^2+\Q\big(Q^{jk}\Xi_{,j}\Xi_{,k}+v(\Xi)\big)\Big].
\ee There are two other terms in $\dot{P}^{jk}$ which include the
conjugate momenta $P^{jk}$. One is the first term on the right hand
side of equation (\ref{4.5}) being quadratic in $P^{jk}$ and the
second is the Lie derivative of $P^{jk}$. Reinserting into those
terms the relation shown in equation (\ref{PVel}), we end up with
the following expression for $\dot{P}^{jk}$ as a function of
configuration and velocity variables: \ba \label{DPVel}
\dot{P}^{jk}(Q_{jk},\dot{Q}_{jk},\Xi,\dot{\Xi}) &=&
 -\frac{\Q}{2N}Q_{mn}\Big([G^{-1}]^{jmrs}[G^{-1}]^{kntu}-\frac{1}{2}[G^{-1}]^{
mnrs}[G^{-1}]^{jktu}\Big)
\big(\dot{Q}_{rs}-\big({\cal L}_{\vec{N}}Q\big)_{rs}\big)\nonumber\\
&&
\big(\dot{Q}_{tu}-\big({\cal L}_{\vec{N}}Q\big)_{tu}\big)
+N\Big[\frac{\kappa}{2}Q^{jk}\;C
-\Q\;Q^{jk}\big(2\Lambda
+\frac{\kappa}{2\lambda}\big(\Xi^{m}\Xi_{m}+v(\Xi)\big)\big)\Big]
\nonumber\\
&& +\Q [G^{-1}]^{jkmn}\Big((D_mD_n
N)-NR_{mn}[Q]\Big)+\frac{\kappa}{2\lambda}N\Q\;\Xi^{,j}\Xi^{,k}
\nonumber\\
&&
+\Big(({\cal L}_{\vec{N}} \frac{\Q}{2N})[G^{-1}]^{jkmn}+\frac{\Q}{2N}({\cal
L}_{\vec{N}} [G^{-1}]\big)^{jkmn}\Big)\big(\dot{Q}_{mn}-\big({\cal
L}_{\vec{N}}Q\big)_{mn}\big)
\nonumber\\
&& +\Big(({\cal L}_{\vec{N}}\dot{Q}\big)_{mn}-\big({\cal
L}_{\vec{N}}\big({\cal
L}_{\vec{N}}Q\big)\big)_{mn}\Big)\frac{\Q}{2N}[G^{-1}]^{jkmn}
-\frac{\kappa}{2}H Q^{jm} Q^{kn} N_m N_n.\ea Next we insert
the expressions for $P^{jk}$ and $\dot{P}^{jk}$ in equations
(\ref{PVel}) and (\ref{DPVel}), respectively, into equation
(\ref{DDQ}) for $\ddot{Q}_{jk}$. Keeping in mind that
$G_{jkmn}Q^{mn}=-1/2Q_{jk}$, we end up with \ba \ddot{Q}_{jk}&=&
\Big[\frac{\dot{N}}{N}-\frac{(\Q)^{\bf\dot{}}}{\Q}+\frac{N}{\Q}\big({\cal
L}_{\vec{N}}\frac{\Q}{N}\big)\Big]
\big(\dot{Q}_{jk} -\big({\cal L}_{\vec{N}}Q\big)_{jk}\big)\nonumber\\
&&-\big(\dot{Q}_{rs} -\big({\cal L}_{\vec{N}}Q\big)_{rs}\big)G_{jkmn}\nonumber\\
&&
\Big[[\dot{G}^{-1}]^{mnrs}+\big({\cal L}[G^{-1}]\big)^{mnrs}
+Q_{tu}\big(\dot{Q}_{vw} -\big({\cal L}_{\vec{N}}Q\big)_{vw}\big)
\Big([G^{-1}]^{mtrs}[G^{-1}]^{nuvw}-\frac{1}{2}[G^{-1}]^{mnvw}[G^{-1}]^{turs}
\Big)\Big]\nonumber\\
&&-Q_{jk}\Big[\frac{N^2\kappa}{2\Q}\;C - N^2\big(2\Lambda
+\frac{\kappa}{2\lambda}v(\Xi)\big)\Big]
+N^2\Big[\frac{\kappa}{\lambda}\Xi_{,j}\Xi_{,k}-2R_{jk}\Big]
+2N\big(D_jD_kN\big)
-\frac{NH\kappa}{\Q}G_{jkmn}N^mN^n
\nonumber\\
&&+2\big({\cal L}_{\vec{N}}\dot{Q}\big)_{jk} +\big({\cal
L}_{\dot{\vec{N}}}Q\big)_{jk}-\big({\cal L}_{\vec{N}}\big({\cal
L}_{\vec{N}}Q\big)\big)_{jk}.
\ea Here we used that
$\dot{G}_{jkmn}[G^{1-}]^{mnrs}=-G_{jkmn}[\dot{G}^{-1}]^{jkrs}$ which
follows from $\big(G_{jkmn}[G^{-1}]^{mnrs}\big)^{\bf\dot{}}=0$ and
\ba \lefteqn{
\frac{2N}{\Q}G_{jkmn}\Big[-N\Q\;Q^{mn}\frac{\kappa}{2\lambda}\Xi^{,r}\Xi_{,r}
+N\Q\;\frac{\kappa}{2\lambda}\Xi^{,m}\Xi^{,n}\Big]}\nonumber\\
&=&\frac{2N}{\Q}\Big[+\frac{N\Q\;\kappa}{4\lambda}\;Q_{jk}\Xi^{,r}\Xi_{,r}
+\frac{N\Q\;\kappa}{2\lambda}\;\Xi_{,j}\Xi_{,k}-\frac{N\Q\kappa}{4\lambda}\;Q_{
jk}\Xi^{,r}\Xi_{,r}\Big]\nonumber\\
&=&N^2\frac{\kappa}{\lambda}\Xi_{,j}\Xi_{,k}. \ea A straightforward,
but tedious calculation shows that the second term on the right-hand
side of the equation for $\ddot{Q}_{jk}$ simplifies greatly: \ba
 \lefteqn{-\big(\dot{Q}_{rs} -\big({\cal
L}_{\vec{N}}Q\big)_{rs}\big)G_{jkmn}}\nonumber\\
&&\Big[[\dot{G}^{-1}]^{mnrs}+\big({\cal L}[G^{-1}]\big)^{mnrs}
+Q_{tu}\big(\dot{Q}_{vw} -\big({\cal L}_{\vec{N}}Q\big)_{vw}\big)
\Big([G^{-1}]^{mtrs}[G^{-1}]^{nuvw}-\frac{1}{2}[G^{-1}]^{mnvw}[G^{-1}]^{turs}
\Big)\Big]\nonumber\\
&=&
Q^{mn}\Big(\dot{Q}_{mj}-\big({\cal
L}_{\vec{N}}Q\big)_{mj}\Big)\Big(\dot{Q}_{nk}-\big({\cal
L}_{\vec{N}}Q\big)_{nk}\Big).
\ea
Consequently, the final form for the second order equation of motion for
$Q_{jk}$ is given by
\ba
\ddot{Q}_{jk}&=&
\Big[\frac{\dot{N}}{N}-\frac{(\Q)^{\bf\dot{}}}{\Q}+\frac{N}{\Q}\big({\cal
L}_{\vec{N}}\frac{\Q}{N}\big)\Big]
\big(\dot{Q}_{jk} -\big({\cal L}_{\vec{N}}Q\big)_{jk}\big)\nonumber\\
&&
+Q^{mn}\Big(\dot{Q}_{mj}-\big({\cal
L}_{\vec{N}}Q\big)_{mj}\Big)\Big(\dot{Q}_{nk}-\big({\cal
L}_{\vec{N}}Q\big)_{nk}\Big)
\nonumber\\
&&+Q_{jk}\Big[-\frac{N^2\kappa}{2\Q}\;C + N^2\big(2\Lambda
+\frac{\kappa}{2\lambda}v(\Xi)\big)\Big]
+N^2\Big[\frac{\kappa}{\lambda}\Xi_{,j}\Xi_{,k}-2R_{jk}\Big]
+2N\big(D_kD_kN\big)
\nonumber\\
&&+2\big({\cal L}_{\vec{N}}\dot{Q}\big)_{jk} +\big({\cal
L}_{\dot{\vec{N}}}Q\big)_{jk}-\big({\cal L}_{\vec{N}}\big({\cal
L}_{\vec{N}}Q\big)\big)_{jk} -\frac{NH\kappa}{\Q}G_{jkmn}N^mN^n. \ea This
finishes our derivation of the (general) second order equation of
motion for $\Xi$ and $Q_{jk}$, respectively.
\subsection{Summary of Second Order Equations of Motion}
\label{SumEOM}
The second order equations of motion for the (manifestly) gauge
invariant matter scalar field $\Xi$ and the (manifestly) gauge
invariant three metric $Q_{jk}$ have the following form: \ba
\label{ResDDXi} \ddot{\Xi}&=&\Big[
\frac{\dot{N}}{N}-\frac{(\Q)^{\bf\dot{}}}{\Q}+\frac{N}{\Q}\Big({\cal
L}_{\vec{N}}\frac{\Q}{N}\Big)\Big]\big(\dot{\Xi}-{\cal
L}_{\vec{N}}\Xi\big)
+Q^{jk} \Xi_{,k}\Big[\frac{N}{\Q}\big[N\Q]_{,j}\Big]\nonumber\\
&& +N^2\big[\Delta\Xi +[Q^{jk}]_{,j} \Xi_{,k}-\frac{1}{2}\; v'(\Xi)\big]
+2\big({\cal L}_{\vec{N}}\dot{\Xi}\big)+\big({\cal L}_{\dot{\vec{N}}}\Xi\big)
-\big({\cal L}_{\vec{N}}\big({\cal L}_{\vec{N}}\Xi\big)\big)
\ea
and
\ba
\label{ResDDQ}
\ddot{Q}_{jk}&=&
\Big[\frac{\dot{N}}{N}-\frac{(\Q)^{\bf\dot{}}}{\Q}+\frac{N}{\Q}\big({\cal
L}_{\vec{N}}\frac{\Q}{N}\big)\Big]
\big(\dot{Q}_{jk} -\big({\cal L}_{\vec{N}}Q\big)_{jk}\big)\nonumber\\
&&
+Q^{mn}\Big(\dot{Q}_{mj}-\big({\cal
L}_{\vec{N}}Q\big)_{mj}\Big)\Big(\dot{Q}_{nk}-\big({\cal
L}_{\vec{N}}Q\big)_{nk}\Big)
\nonumber\\
&&+Q_{jk}\Big[-\frac{N^2\kappa}{2\Q}\;C + N^2\big(2\Lambda
+\frac{\kappa}{2\lambda}v(\Xi)\big)\Big]
+N^2\Big[\frac{\kappa}{\lambda}\Xi_{,j}\Xi_{,k}-2R_{jk}\Big]
+2N\big(D_jD_kN\big)
\nonumber\\
&&+2\big({\cal L}_{\vec{N}}\dot{Q}\big)_{jk} +\big({\cal
L}_{\dot{\vec{N}}}Q\big)_{jk}-\big({\cal L}_{\vec{N}}\big({\cal
L}_{\vec{N}}Q\big)\big)_{jk} -\frac{NH\kappa}{\Q}G_{jkmn}N^mN^n. \ea The
term $C=C_{\mathrm{geo}}+C_{\mathrm{matter}}$ occurring in the
equation for $\ddot{Q}_{jk}$ has to be understood as a function of
configuration and velocity variables and is explicitly given by \ba
\label{CgeoCmat}
C_{\mathrm{geo}}&=&\frac{1}{\kappa}\Big[\frac{\Q}{4N^2}[G^{-1}]^{jkmn}\big(\dot{
Q}_{jk}-\big({\cal
L}_{\vec{N}}Q\big)_{jk}\big)\big(\dot{Q}_{mn}-\big({\cal
L}_{\vec{N}}Q\big)_{mn}\big)
+\Q\big(2\Lambda-R)\Big]\nonumber\\
C_{\mathrm{matter}}
&=&\frac{1}{2\lambda}\Big[\frac{\Q}{N^2}\big(\dot{\Xi}-{\cal
L}_{\vec{N}}\Xi\big)^2+\Q\big(Q^{jk}\Xi_{,j}\Xi_{,k}+v(\Xi)\big)\Big].
\ea Apart from the fact that we have a dynamical lapse function
given by $N=C/H$, as well as a dynamical shift vector defined as
$N_j=-C_j/H$, the only deviation from the standard Einstein equations
in canonical form is the last term on the right-hand side in
equation (\ref{ResDDQ}). This term, being quadratic in $N_j$ and
therefore quadratic in
$C_j=C^{\mathrm{geo}}_j+C^{\mathrm{matter}}_j$, vanishes for
instance when FRW spacetimes are considered. In our companion paper
\cite{8a}, we specialize these equations precisely to this context
and show that the equations above reproduce the correct FRW
equations.


\section{Physical Hamiltonian, Boundary Term and ADM Hamiltonian}
\label{s5}

As long as the dust space $\cal S$ (and therefore $\cal X$) has no boundary,
$\HF$
is functionally differentiable, which we always assumed so far.
However, for more general topologies we are forced to consider
boundary conditions. In this section we show how to deal with asymptotically
flat boundary conditions for illustrative purposes. More general
situations can be considered analogously.

Recall \cite{30} that asymptotically flat initial data are subject
to the following boundary conditions
\ba \label{5.1}
q_{ab}(x) &=& \delta_{ab}+\frac{f_{ab}(\Omega)}{r}+{\cal O}(r^{-2})\; ,
\nonumber\\
p^{ab}(x) &=& \frac{g^{ab}(\Omega)}{r}+{\cal O}(r^{-3})\;,
\nonumber\\
\xi(x) &=& \frac{f(\Omega)}{r^2} + {\cal O}(r^{-3})\; ,
\nonumber\\
\pi(x) &=& \frac{g(\Omega)}{r^2} + {\cal O}(r^{-3})\; .
\ea
Here, $x^a$ is an asymptotic coordinate system, $r^2\equiv \delta_{ab} x^a
x^b$ and $\Omega$ denotes the angular dependence of the unit vector $x^a/r$
(on the asymptotic sphere). The functions
$f_{ab},\;f,\; g^{ab},\; g$ are assumed to be smooth. Moreover, $f_{ab}$ is
an even function under reflection at spatial infinity on the sphere,
while $g^{ab}$ is odd ($f,g$ do not underly parity restrictions). Notice
that these boundary conditions directly translate into analogous ones
for the substitutions
$Q_{jk}\leftrightarrow q_{ab},\;
P^{jk}\leftrightarrow p^{ab},\;
\Xi\leftrightarrow \xi,\;
\Pi\leftrightarrow \pi$, because switching from $\cal X$ to $\cal S$ is
just a diffeomorphism.

The part of the differential of $\HF$ that gives rise to a boundary term is
\be \label{5.2}
\delta_{{\rm boundary}}\HF=\int_{{\cal S}} \; {\rm d}^3\sigma \; \left[N
\delta_{{\rm boundary}}C+N^j \delta_{{\rm boundary}}C_j\right]\; ,
\ee
which coincides precisely with the boundary terms produced by
the canonical Hamiltonian\footnote{With the substitutions
${\cal S}\leftrightarrow {\cal X},\;
Q_{jk}\leftrightarrow q_{ab},\;
P^{jk}\leftrightarrow p^{ab},\;
\Xi\leftrightarrow \xi,\;
\Pi\leftrightarrow \pi$ implied.}
\be \label{5.3}
H_{{\rm canonical}}=\int_{{\cal X}} \; {\rm d}^3x \; \left[nc+n^a c_a\right]
\ee
without dust. Here, the lapse and shift functions $n,n^a$ are Lagrange
multiplier
rather than dynamical quantities like $N,N^j$.
Therefore, the usual boundary terms~\cite{30} can be adopted, once the
asymptotic behaviour of the dynamical lapse and shift functions $N,N^j$
have been determined, which, in turn, is completely specified by $N_j$ because
$N^j=Q^{jk}
N_k,\;N=\sqrt{1+Q^{jk} N_j N_k}$.
The scalar field contribution to
$C,C_j$ decays as $1/r^4$, thus it is sufficient to consider the geometrical
contribution. $C^{\rm geo}_j=-2 D_k P^k_j$ decays as $1/r^3$ and is even
asymptotically. The term quadratic in $P^{jk}$ and the term quadratic in
the Christoffel symbols in $C^{\rm geo}$ decays as $1/r^4$, while the
term linear in the Cristoffel symbols decays as $1/r^3$ and is even. We
conclude that $N_j=-\frac{C_j}{H}$ (with $H=\sqrt{C^2-Q^{jk} C_j C_k}$)
is
asymptotically constant and even.
The same is true for $N$, which is anyway bounded from below by unity.
The usual computation \cite{15b,30} yields
\ba \label{5.4}
\delta_{{\rm boundary}}\HF &=& -\delta B'(N)-\delta \vec{B}'(\vec{N}')
\nonumber\\
\kappa \delta B'(N) &=&
\int_{S^2}\; \sqrt{\det(Q)} Q^{jk} Q^{mn}
\left[(D_j N) {\rm d}S_k \delta(Q_{mn}-\delta_{mn})
-(D_m N) dS_n \delta\left(Q_{jk}-\delta_{jk}\right)\right]
\nonumber\\
&& +
\int_{S^2}\; \sqrt{\det(Q)} Q^{jk} N \left[-{\rm d}S_j \delta\Gamma^m_{mk}+
S_k
\delta \Gamma^k_{jk}\right]
\nonumber\\
\kappa \delta\vec{B}'(\vec{N}) &=& 2\int_{S^2}\; {\rm d}S_j\;  N^k \delta P^j_k
\; ,
\ea
where ${\rm d}S_j =x^j/r {\rm d}\Omega\;,\; {\rm d}\Omega$ the volume element of
$S^2$.
The prime is to indicate that, contrary to what the notation suggests,
the terms shown are, a priori, not total differentials. In the usual formalism
they are, because lapse and shift functions are Lagrange multipliers and
do not depend on phase space. Here, however, they are dynamical and
we must worry about the variation $\delta N_j$.

It turns out that the boundary conditions need to be refined in order to
make $\HF$ differentiable. We will not analyze the most general boundary
conditions in this paper, but just make a specific choice that will be
sufficient for our purposes. Namely, we will impose in addition that
$C_j^{\rm geo}=-2 D_k P^k_j$
decays as $1/r^{3+\epsilon},\; (\epsilon>0)$ at spatial infinity. Then
also $C_j$ falls off as
$1/r^{3+\epsilon}$. Since $C$ decays like $1/r^3$, it
follows that also
$H=\sqrt{C^2-Q^{jk} C_j C_k}$ decays as $1/r^3$, whence
$N_j=-C_j/H$
decays as $1/r^\epsilon$. This implies that $\delta N_j$ decays
as \footnote{We could more generally have imposed
that $\delta N_j$ falls off like $1/r^\epsilon$ at spatial infinity.}
$1/r^\epsilon$.
Thus, $\delta N=[Q^{jk} N_k \delta N_j+N_j N_k \delta Q^{jk}/2]/N$ decays
as \footnote{The choice $\epsilon=1/2$ seems to be
appropriate in order to reproduce the asymptotic Schwarzschild decay for
dynamical lapse and shift. However, this is not the case here, because we are
automatically in a freely falling (dust) frame and $\tau$ is eigentime.
Hence, $g_{\tau\tau}=-N^2+Q^{jk} N_j N_k-1$, whatever choice for $N_j$
is made, it is independent of $\tau$.}
$1/r^{2\epsilon}$.

It follows that $\delta\vec{B}'(\vec{N})=0$ and
\be \label{5.5}
\kappa \delta B'(N) = \kappa \delta E_{\rm ADM},\;
E_{\rm ADM} =
\int_{S^2}\; \sqrt{\det(Q)} Q^{jk}  \left[-{\rm d}S_j \; \Gamma^m_{mk}+S_k
\delta \Gamma^k_{jk}\right]
\ee
reduces to the variation of the ADM energy.
The correct Hamiltonian is given by
\be \label{5.6}
\HF=E_{ADM}+\int_{{\cal S}}\; {\rm d}^3\sigma\; H(\sigma) \;.
\ee
It is reassuring that in the asymptotically flat context, we have
automatically a boundary term in the Hamiltonian, which is just the ADM
energy. The additional bulk term comes from the dust energy density and
{\sl does not vanish} on the constraint surface.
\\
\\
 Before we close this section, let us also mention the physical 
 Hamiltonian system under consideration there exists a Lagrangian from 
 which derives by Legendre transformation on the phase space defined by 
 the physical observables. Curiously, the corresponding effective action 
turns out to be 
 non-local in dust space but local in dust time. It can be computed via 
 a fixed point equation to any order in the spatial derivatives of the 
fields. Details can be found in appendix \ref{s6}. 
%


\section{Linear, Manifestly Gauge Invariant Perturbation Theory}
\label{s7} In section \ref{s4} we derived the (manifestly) gauge
invariant second order equation of motion for the scalar field $\Xi$
and the three metric $Q_{jk}$. Now we want to consider small
perturbations around a given background whose corresponding
quantities will be denoted by a bar,  namely $\ov{\Xi},\ov{Q}_{jk}$.
The linear perturbations are then defined as
$\delta\Xi:=\Xi-\ov{\Xi}$ and $\delta Q_{jk}:=Q_{jk}-\ov{Q}_{jk}$,
respectively. Note that these perturbations are also (manifestly)
gauge invariant because they are defined as a difference of two
gauge invariant quantities. Consequently, any power of $\delta\Xi$
and $\delta Q_{jk}$ will also be a (manifestly) gauge invariant
quantity such that in the framework introduced in this paper gauge
invariant perturbation theory up to arbitrary order is not only
possible, but also straightforward. This is a definite strength of
our approach compared to the traditional one, see section \ref{s8}
for a detailed discussion.
\\
\\
However, in this section we will restrict ourselves to linear
(manifestly) gauge invariant perturbation theory. That is any
function $F(Q_{jk},\Xi)$ will be expanded up to linear order in the
perturbations $\delta\Xi$ and $\delta Q_{jk}$. We denote by $\delta
F$ the linear order term in the Taylor expansion of the expression
$F(Q_{jk},\Xi)-F(\ov{Q}_{jk},\ov{\Xi})$. A calculation of higher
order terms will be the content of a future publication. Usually, in
cosmological perturbation theory one chooses an FRW background and
considers small perturbations around it. Here we will derive the
equations of motion for the linear perturbations assuming an
 arbitrary background. In our companion paper, we will specialise
the results derived here to the case of an FRW background and show
that we can reproduce the standard results as presented, e.g., in
\cite{2}. The reader who is only interested in the final form of
the perturbed equation of motions should go directly to section
\ref{SummPer}, where a summary of the results is provided.
\subsection{Derivation of the Equation of Motion for $\delta\Xi$}
Let us go back to the second order equation of motion for $\Xi$
shown in equation (\ref{ResDDXi}). Since its perturbation involves
several terms we will, for reasons of book keeping, split this
equation into several parts and consider the perturbation of those
parts separately. Comparing the equation of $\Xi$ with the one
for $Q_{ij}$ in equation (\ref{ResDDQ}), we realise that in both equations the first
term on the right-hand side includes an identical term, namely the
expression in the square brackets. Therefore it is convenient to
derive its perturbation in a closed form such that the result can
then also be used for the derivation of the equation of motion of
$\delta Q_{ij}$. Throughout this section we will repeatedly need the
perturbation of $\Q$ and its inverse, given by \ba \label{DQ}
\delta\Q&=&\frac{1}{2}\Qb\;\ov{Q}^{ij}\delta Q_{ij}\nonumber\\
\delta\big(\frac{1}{\Q}\big)&=&-\frac{1}{2}\frac{1}{\Qb}\;\ov{Q}^{ij}\delta
Q_{ij}. \ea Next, considering the definition of the lapse functions
given by $N=C/H$ and the definition of $H=\sqrt{C^2-Q^{ij}C_iC_j}$
we obtain $N=\sqrt{1+Q^{ij}N_iN_j}$. Thus, as mentioned before, $N$
is not an independent variable, but can be expressed in terms of
$Q_{ij}$ and the shift vector, which itself is a function of the
elementary variables $\Xi,Q_{ij}$. However, as shown in our
companion paper \cite{8a}, the perturbation of $N_j$ is again a
constant of motion, that is $\delta\dot{N}_j=0$. For this reason it
is convenient to work with $\delta Q_{jk},\delta\Xi$ and $\delta
N_j$, although, strictly speaking, $\delta N_j$ is not an
independent perturbation. Keeping this in mind the perturbation of
$N$ yields \be \label{DN} \delta
N=\Big[-\frac{\ov{N}}{2}\Big(\frac{\ov{N}^j\ov{N}^{k}}{\ov{N}^2}\Big)\Big]\delta
Q_{jk} +\Big[\ov{N}\Nv{j}\Big]\delta N_j .\ee From the explicit form
of $\delta N$ and $\delta\Q$ we can derive the following expressions
which we will need below: \ba \delta\Big(\frac{\dot{N}}{N}\Big) &=&
\Big[-\frac{\partial}{\partial\tau}\frac{1}{2}\Big(\frac{\ov{N}^j\ov{N}^k}{\ov{N
}^2}\Big)\Big]\delta Q_{jk}
+\Big[\Big(\frac{\ov{N}^j}{\ov{N}^2}\Big)^{\bf\dot{}}\;\;\Big]\delta N_j\\
\delta\Big(-\frac{\big(\Q\big)^{\bf\dot{}}}{\Q}\Big)&=&
\Big[-\frac{\partial}{\partial\tau}\frac{1}{2}\ov{Q}^{jk}\Big]\delta
Q_{jk}.\nonumber \ea Here the derivative with respect to $\tau$ is
understood to act on everything to its right, including the
perturbations. We also used that $\delta N_j$ is a constant of
motion, so the term proportional to $\delta\dot{N}_j$ does not
contribute. In order to calculate the perturbation of the Lie
derivative term occurring in the square brackets in eqn.
(\ref{ResDDXi}), we compute \ba \delta\Big(\frac{N}{\Q}\Big)&=&
\Big[-\Big(\frac{\ov{N}}{\Qb}\Big)\QN{j}{k}\Big]\delta Q_{jk}
+\Big[\Nv{j}\Big]\delta N_j\nonumber\\
\delta\Big(\frac{\Q}{N}\Big)&=&
\Big[\Big(\frac{\Qb}{\ov{N}}\Big)\QN{j}{k}\Big]\delta Q_{jk}
+\Big[-\frac{\Qb}{\ov{N}}\Nv{j}\Big]\delta N_j.
\ea
The Lie derivative term yields then
\ba
\delta\Big(\big({\cal L}_{\N}\frac{\Q}{N}\big)\Big)
&=&
\Big[{\cal L}_{\Nb}\Big(\frac{\Qb}{\ov{N}}\QN{j}{k}\Big)\Big]\delta Q_{jk}\nonumber\\
&&
+\Big[- {\cal L}_{\Nb}\frac{\Qb}{\ov{N}}\Big(\Nv{j}\Big)\Big]\delta N_j\nonumber\\
&&
+\Big( {\cal L}_{\delta\N}\frac{\Qb}{\ov{N}}\Big).
\ea Similar to the $\tau$-derivative, the Lie derivative ${\cal
L}_{\N}$ acts on all terms to its right, including the linear
perturbations. For the moment we keep the Lie derivative with
respect to $\delta\N$ in the compact form above. At a later stage we
will write down these terms explicitly and express them in terms of
$\delta Q_{jk}$ and $\delta N_j$. Having calculated the variations
of all components of the first square bracket in eqn.
(\ref{ResDDXi}), we can now give the final result: \ba \label{ResSB}
\delta\Big[
\frac{\dot{N}}{N}-\frac{(\Q)^{\bf\dot{}}}{\Q}+\frac{N}{\Q}\Big({\cal
L}_{\vec{N}}\frac{\Q}{N}\Big)\Big]
&=&\Big[-\Big(\frac{\partial}{\partial\tau}-{\cal
L}_{\Nb}\Big)\Big(\QN{j}{k}\Big)\Big]\delta Q_{jk}\nonumber\\
&&
+\Big[\Big(\frac{\partial}{\partial\tau}-{\cal
L}_{\Nb}\Big)\Nv{j}\Big]\delta N_j\nonumber\\
&&
+\frac{\ov{N}}{\Qb}\Big({\cal
L}_{\delta\N}\frac{\Qb}{\ov{N}}\Big). \ea The last term in
this equation can be written explicitly in terms of $\delta Q_{mn}$
and $\delta N_m$ as \ba \label{SBLie} \frac{\ov{N}}{\Qb}\Big({\cal
L}_{\delta\N}\frac{\Qb}{\ov{N}}\Big) &=&
\Big[-\frac{\ov{N}}{\Qb}\frac{\partial}{\partial
x^m}\Big(\frac{\Qb}{\ov{N}}\ov{Q}^{jm}\ov{N}^k\Big)\Big] \delta
Q_{jk}\nonumber\\
&&
 +\Big[\frac{\ov{N}}{\Qb}\frac{\partial}{\partial
x^k}\Big(\frac{\Qb}{\ov{N}}\ov{Q}^{jk}\Big)\Big]
\delta N_{j}.\ea As the terms in equation (\ref{ResSB}) are multiplied by
$(\dot{\Xi} -({\cal L}_{\N}\Xi))$ in equation (\ref{ResDDXi}), we
also need the perturbation of the latter term. It is given by \ba
\label{SBfac} \delta\big(\dot{\Xi}-\big({\cal L}_{\N}\Xi\big)\big)
&=&\Big(\frac{\partial}{\partial\tau}-{\cal
L}_{\Nb}\Big)\delta\Xi-\big({\cal L}_{\delta\N}\ov{\Xi}\big). \ea
Next we determine the perturbation of $Q^{jk}\Xi_{,k}$ which enters
the second term on the right-hand side of equation (\ref{ResDDXi}):
\ba \delta\big(Q^{jk}\Xi_{,k}\Big)&=&
\Big[-\ov{Q}^{jm}\ov{Q}^{kn}\Xi_{,k}\Big]\delta
Q_{mn}+\Big[\ov{Q}^{jk}\frac{\partial}{\partial x^k}\Big]\delta\Xi.
\ea The perturbation of the term that is multiplied with
$Q^{jk}\Xi_{,k}$ yields \ba
\lefteqn{\delta\Big[\frac{N}{\Q}\big(N\Q)_{,j}\Big]}\\
&=&
\Big[-\frac{\ov{N}}{\Qb}\QN{j}{k}\big[\ov{N}\Qb]_{,j}+\frac{\ov{N}}{\Qb}\frac{
\partial}{\partial
x^j}\Big(\frac{1}{2}\Big(\ov{Q}^{jk}-\frac{\ov{N}^j\ov{N}^k}{\ov{N}^2}\Big)\ov{N
}\Qb\Big)\Big]\delta Q_{jk}\nonumber\\
&&+\Big[\frac{\ov{N}}{\Qb}\Nv{j}\big[\ov{N}\Qb]_{,j}+\frac{\ov{N}}{\Qb}\frac{
\partial}{\partial x^j}\Big(\Nv{j}\ov{N}\Qb\Big)\Big]\delta N_j\nonumber.
\ea We will split the third term occurring on the right-hand side of
equation (\ref{ResDDXi}) into $N^2$ and the remaining part given by
$(\Delta\Xi+[Q^{jk}]_{,j}\Xi_{,k}-\frac{1}{2}v^{\prime}(\Xi))$.
Their respective perturbations are \ba \delta N^2&=&2\ov{N}\delta
N=\Big[-\ov{N}^j\ov{N}^k\Big]\delta Q_{jk}
+\Big[2\ov{N}^j\Big]\delta N_j\\
\delta\Big(\Delta\Xi+[Q^{jk}]_{,j}\Xi_{,k}-\frac{1}{2}v^{\prime}(\Xi)\Big)
&=& \Big[-\frac{\partial}{\partial
x^n}\Big(\ov{Q}^{jm}\ov{Q}^{kn}\Xi_{,m}\Big)\Big]\delta Q_{jk}\nonumber\\
&&
+\Big[\Delta + [\ov{Q}^{mn}]_{,n}\frac{\partial}{\partial
x^m}-\frac{1}{2}v^{\prime\prime}(\ov{\Xi})\Big]\delta\Xi.\nonumber
\ea Finally, we have to calculate the perturbation of the last three
terms in equation (\ref{ResDDXi}), involving Lie derivatives of
$\Xi$ and $\dot{\Xi}$, respectively. We obtain \ba
\delta\Big(2\big({\cal L}_{\vec{N}}\dot{\Xi}\big)+\big({\cal
L}_{\dot{\vec{N}}}\Xi\big) -\big({\cal L}_{\vec{N}}\big({\cal
L}_{\vec{N}}\Xi\big)\big)\Big) &=&\Big[{\cal
L}_{\Nb}\big(\frac{\partial}{\partial\tau}-{\cal L}_{\Nb}\Big)
+\frac{\partial}{\partial\tau}{\cal L}_{\Nb}\Big]\delta\Xi\\
&&+\Big({\cal L}_{\delta N}\big(\frac{\partial}{\partial\tau}-{\cal
L}_{\Nb}\big)+\big(\frac{\partial}{\partial\tau}-{\cal
L}_{\Nb}\big){\cal L}_{\delta N}\Big)\Big[\ov{\Xi}\Big]\nonumber.
\ea Adding up all the contributions and factoring out the linear
perturbations $\delta\Xi,\delta Q_{jk}$ and $\delta N_j$, we can
rewrite the second order EOM as \ba \label{ResDDdXi} \delta\ddot{\Xi}&=& \Big[
\Big[\frac{\dot{\ov{N}}}{\ov{N}}-\frac{(\Qb)^{\bf\dot{}}}{
\Qb}+\frac{\ov{N}}{\Qb}\Big({\cal
L}_{\Nb}\frac{\Qb}{\ov{N}}\Big)\Big]\big(\frac{\partial}{\partial\tau}-{\cal
L}_{\Nb}\big)
+\ov{Q}^{jk}\Big[\frac{\ov{N}}{\Qb}\big(\ov{N}\Qb)_{,j}\Big]\frac{\partial}{
\partial x^k}\\
&&
\quad
+\ov{N}^2\Big(\Delta+[\ov{Q}^{mn}]_{,n}\frac{\partial}{\partial
x^m}-\frac{1}{2}v^{\prime\prime}(\ov{\Xi})\Big)
+\Big({\cal L}_{\Nb}\big(\frac{\partial}{\partial\tau}-{\cal L}_{\Nb}\Big)
+\frac{\partial}{\partial\tau}{\cal L}_{\Nb}\Big)
\Big]\delta\Xi\nonumber\\
&&
+\Big[
- \big(\dot{\ov{\Xi}}-\big({\cal
L}_{\Nb}\ov{\Xi}\big)\big)
\Big[
\Big(\frac{\partial}{\partial\tau}-{\cal
L}_{\Nb}\Big)\Big(\QN{j}{k}\Big)
+\frac{\ov{N}}{\Qb}\frac{\partial}{\partial
x^m}\Big(\frac{\Qb}{\ov{N}}\ov{Q}^{jm}\ov{N}^k\Big)
\Big]\nonumber\\
&&
\quad
-\ov{Q}^{jm}\ov{Q}^{kn}\ov{\Xi}_{,n}\big[\frac{\ov{N}}{\Q}\big(\ov{N}\Q)_{,m}
\big]
+\Big[\frac{\dot{\ov{N}}}{\ov{N}}-\frac{(\Qb)^{\bf\dot{}}}{
\Qb}+\frac{\ov{N}}{\Qb}\Big({\cal
L}_{\Nb}\frac{\Qb}{\ov{N}}\Big)\Big]\Big(\ov{Q}^{jm}\ov{\Xi}_{,m}\ov{N}
^k\Big)\Big)
\nonumber\\
&&
\quad
+\big(\ov{Q}^{mn}\ov{\Xi}_{,n}\big)\Big(-\frac{\ov{N}}{\Qb}\QN{j}{k}\big[\ov{N}
\Qb]_{,m}\nonumber\\
&&
\quad
+\frac{\ov{N}}{\Qb}\frac{\partial}{\partial
x^m}\Big(\frac{1}{2}\Big(\ov{Q}^{jk}-\frac{\ov{N}^j\ov{N}^k}{\ov{N}^2}\Big)\ov{N
}\Qb\Big)\Big)
\nonumber\\
&&
\quad
-\ov{N}^2\Big(\Delta\ov{\Xi}+[\ov{Q}^{mn}]_{,m}\ov{\Xi}_{,n}-\frac{1}{2}v^{
\prime}(\ov{\Xi})\Big)\Big(\frac{\ov{N}^j\ov{N}^k}{\ov{N}^2}\Big)
-\ov{N}^2\frac{\partial}{\partial
x^n}\Big(\ov{Q}^{jm}\ov{Q}^{kn}\ov{\Xi}_{,m}\Big)\nonumber\\
&&
\quad
-\ov{Q}^{jm}\ov{Q}^{kn}\ov{N}_m\big[\dot{\ov{\Xi}}-\big({\cal
L}_{\Nb}\ov{\Xi}\big)\big]_{,n}
-\big(\frac{\partial}{\partial\tau}-{\cal
L}_{\Nb}\big)\Big(\ov{Q}^{jm}\ov{Q}^{kn}\ov{N}_m\ov{\Xi}_{,n}\Big)
\Big]\delta Q_{jk}\nonumber\\
&&
+\Big[
\big(\dot{\ov{\Xi}}-\big({\cal L}_{\Nb}\ov{\Xi}\big)\big)
\Big[\Big(\frac{\partial}{\partial\tau}-{\cal L}_{\Nb}\Big)\Nv{j}
+\frac{\ov{N}}{\Qb}\frac{\partial}{\partial
x^k}\Big(\frac{\Qb}{\ov{N}}\ov{Q}^{jk}\Big)
\Big]\nonumber\\
&&
\quad
+2\ov{N}^2\Big(\Delta\ov{\Xi}+[\ov{Q}^{mn}]_{,m}\ov{\Xi}_{,n}-\frac{1}{2}v^{
\prime}(\ov{\Xi})\Big)\Nv{j}
\nonumber\\
&&
\quad
+\big(\ov{Q}^{mn}\ov{\Xi}_{,n}\big)\Big(\frac{\ov{N}}{\Qb}\Nv{j}\big[\ov{N}\Qb]_
{,m}
+\frac{\ov{N}}{\Qb}\frac{
\partial}{\partial x^j}\Big(\Nv{j}\ov{N}\Qb\Big)\Big)\nonumber\\
&& \quad -\Big[\frac{\dot{\ov{N}}}{\ov{N}}-\frac{(\Qb)^{\bf\dot{}}}{
\Qb}+\frac{\ov{N}}{\Qb}\Big({\cal
L}_{\Nb}\frac{\Qb}{\ov{N}}\Big)\Big]\Big(\ov{Q}^{jk}\ov{\Xi}_{,k}\Big)
\nonumber\\
&&
\quad
+\ov{Q}^{jk}\big[\dot{\ov{\Xi}}-\big({\cal
L}_{\Nb}\ov{\Xi}\big)\big]_{,k}
+\big(\frac{\partial}{\partial\tau}-{\cal
L}_{\Nb}\big)\big(\ov{Q}^{jk}\ov{\Xi}_{,k}\big) \Big]\delta
N_j.\nonumber \ea Here we used that the last term occurring in
equation (\ref{ResDDXi}), which involves the Lie derivative with
respect to $\delta\vec{N}$, can again be expressed in terms of the
perturbations $\delta Q_{jk}$ and $\delta N_j$, explicitly given by
\ba 
\lefteqn{\Big({\cal L}_{\delta N}\big(\frac{\partial}{\partial\tau}-{\cal
L}_{\Nb}\big)+\big(\frac{\partial}{\partial\tau}-{\cal
L}_{\Nb}\big){\cal L}_{\delta N}\Big)\Big[\ov{\Xi}\Big]}\\ &=&
\Big[\ov{Q}^{jk}\big[\dot{\Xi}-\big({\cal L}_{\Nb}\Xi\big)\big]_{,k}
+\big(\frac{\partial}{\partial\tau}-{\cal
L}_{\Nb}\big)\big(\ov{Q}^{jk}\ov{\Xi}_{,k}\big)\Big]\delta N_j\nonumber\\
&&\Big[-\ov{Q}^{jm}\ov{Q}^{kn}\ov{N}_m\big[\dot{\ov{\Xi}}-\big({\cal
L}_{\Nb}\ov{\Xi}\big)\big]_{,n}
-\big(\frac{\partial}{\partial\tau}-{\cal
L}_{\Nb}\big)\Big(\ov{Q}^{jm}\ov{Q}^{kn}\ov{N}_m\ov{\Xi}_{,n}\Big)\Big]\delta
Q_{jk}.\nonumber \ea Analogously, the last term on the right hand
side of equation (\ref{ResSB}), which also involves a Lie derivative
with respect to $\delta\vec{N}$, can be expressed as \be
\frac{\ov{N}}{\Qb}\Big({\cal
L}_{\delta\vec{N}}\frac{\Qb}{\ov{N}}\Big)=\Big[-\frac{\ov{N}}{\Qb}\frac{\partial}{\partial
x^m}\Big(\frac{\Qb}{\ov{N}}\ov{Q}^{jm}\ov{N}^k\Big)\Big]\delta
Q_{jk} +\Big[\frac{\ov{N}}{\Qb}\frac{\partial}{\partial
x^k}\Big(\frac{\Qb}{\ov{N}}\ov{Q}^{jk}\Big)\Big]\delta N_j. \ee
Moreover $\big({\cal L}_{\delta\vec{N}}\ov{\Xi}\big)$ occurring in
equation (\ref{SBfac}) is given by \be \big({\cal
L}_{\delta\vec{N}}\ov{\Xi}\big)=\Big[-\ov{Q}^{jm}\ov{\Xi}_{,m}\ov{N}^k\Big]
\delta
Q_{jk}+\Big[\ov{Q}^{jk}\Xi_{,k}\Big]\delta N_j. \ee We would like to
emphasise again that the partial and Lie derivatives act on
everything to their right and therefore also on the perturbations.
That is the reason why for instance $\delta\dot{\Xi}$ and
$\delta\dot{Q}_{jk}$ have not been factored out separately. This
finishes our derivation of the second order equation of motion for
the scalar field perturbation $\delta\Xi$. In the next section we
will discuss the corresponding equation for the metric perturbation
$\delta Q_{jk}$.
\subsection{Derivation of the Equation of Motion for $\delta Q_{jk}$}
\label{DerDQ}
Similar to the derivation of the second order equation of motion for
$\delta\Xi$ we will also split the equation for $\ddot{Q}_{jk}$ in
equation (\ref{ResDDQ}) into several terms whose perturbations are
then considered separately. More precisely, we will split the
equation into seven separate terms, with the three terms involving
Lie derivatives in the last line counted as one. Starting with the
first term on the right-hand side, we recall that the perturbation
of the sum of terms in the square brackets has already been computed
during the derivation of the equation for $\delta\ddot{\Xi}$. Thus,
we can take over those results, as shown in equation (\ref{ResSB}).
The perturbation of the term $(\dot{Q}_{jk}-({\cal L}_{\N}Q)_{jk})$
is given by \ba \label{Per1final} \delta\big(\dot{Q}_{jk}-\big({\cal
L}_{\N}Q\big)_{jk}\big) &=& \big(\frac{\partial}{\partial\tau}-{\cal
L}_{\Nb}\big)\delta Q_{jk}-\big({\cal L}_{\delta N}\ov{Q}\big)_{jk},
\ea whereby \ba \label{Per1Lie} \big({\cal L}_{\delta N}Q\big)_{jk}
&=& \Big[-\ov{Q}^{mr}\ov{N}^n\Big[Q_{jk}\Big]_{,r}
-\ov{Q}_{km}\frac{\partial}{\partial
x^j}\Big(\ov{Q}^{mr}\ov{N}^n\Big)
-\ov{Q}_{jm}\frac{\partial}{\partial
x^k}\Big(\ov{Q}^{mr}\ov{N}^n\Big)
\Big]\delta Q_{mn}\\
&& +\Big[\ov{Q}^{mn}\Big[Q_{jk}\Big]_{,n}
+\ov{Q}_{kn}\frac{\partial}{\partial x^j}\Big(\ov{Q}^{mn}\Big)
+\ov{Q}_{jn}\frac{\partial}{\partial x^k}\Big(\ov{Q}^{mn}\Big)
\Big]\delta N_{m}.\nonumber \ea Consequently, we have all
ingredients needed for the perturbation of the first term. However,
we will not present the final expression in the main text since it
is quite lengthly. Nevertheless, the interested reader can find the
explicit form in appendix \ref{LPA} in equation (\ref{Per1final}).
The perturbation of the second term on the right-hand side yields
\ba \label{Per2final}
\lefteqn{\delta\Big(Q^{mn}\Big(\dot{Q}_{mj}-\big({\cal
L}_{\vec{N}}Q\big)_{mj}\Big)
\Big(\dot{Q}_{nk}-\big({\cal L}_{\vec{N}}Q\big)_{nk}\Big)\Big)}\\
&=&
\Big[-\ov{Q}^{mr}\ov{Q}^{ns}\Big(\dot{\ov{Q}}_{rj}-\big({\cal
L}_{\Nb}\ov{Q}\big)_{rj}\Big)\Big(\dot{\ov{Q}}_{sk}
-\big({\cal L}_{\Nb}\ov{Q}\big)_{sk}\Big)\Big]\delta Q_{mn}
+\Big[2\ov{Q}^{mn}\Big(\dot{\ov{Q}}_{n(k}-\big({\cal
L}_{\Nb}\ov{Q}\big)_{n(k}\Big)\big(\frac{\partial}{\partial\tau}-{\cal
L}_{\Nb}\big)\Big]
\delta Q_{j)m}\nonumber\\
&& +\Big[-2\ov{Q}^{mn}\Big(\dot{\ov{Q}}_{n(k}-\big({\cal
L}_{\Nb}\ov{Q}\big)_{n(k}\Big)\Big]\big({\cal L}_{\delta
N}\ov{Q}\big)_{j)m}.\nonumber \ea The perturbation of the third term
occurring on the right-hand side requires a bit more work, because
it involves $C=C^{\mathrm{geo}}+C^{\mathrm{matter}}$. Thus we will
perform this calculation in two steps. First we will ignore the
explicit form of $\delta C$ in terms of $\delta Q_{jk},\delta\Xi$
and $\delta N_j$. The resulting expression looks like \ba
\label{Per31}
\lefteqn{\delta\Big(Q_{jk}\Big[-\frac{N^2\kappa}{2\Q}\;C +
N^2\big(2\Lambda +\frac{\kappa}{2\lambda}v(\Xi)\big)\Big]\Big)
}\\
&=&
\Big[\ov{Q}_{jk}\Big[\QN{m}{n}\frac{\kappa\ov{N}^2}{2\Qb}\ov{C}+\frac{\ov{N}
^m\ov{N}^n}{2\ov{N}^2}\ov{N}^2\Big(\frac{\kappa}{2\Qb}\ov{C}
-2\big(2\Lambda+\frac{\kappa}{2\lambda}
v(\Xi)\big)\Big)\Big)\Big]\Big]\delta Q_{mn}\nonumber\\
&&
+\Big[\ov{N}^2\Big(-\frac{\kappa}{2\Qb}\ov{C}+2\Lambda+\frac{\kappa}{2\lambda}
v(\ov{\Xi})\Big)\Big]\delta Q_{jk}
+\Big[2\ov{Q}_{jk}\ov{N}^2\Big(-\frac{\kappa}{2\Qb}\ov{C} +2\Lambda
+\frac{\kappa}{2\lambda}v(\ov{\Xi})\Big)\Big(\frac{\ov{N}^m}{\ov{N}^2}\Big)\Big]
\delta N_m\nonumber\\
&& +\Big[-\ov{Q}_{jk}\ov{N}^2\frac{\kappa}{2\Qb}\Big]\delta C
+\Big[\ov{Q}_{jk}\ov{N}^2\frac{\kappa}{2\lambda}v^{\prime}(\ov{\Xi})\Big]
\delta\Xi.\nonumber
\ea Due to its length, the explicit calculation of $\delta C$ can be
found in appendix \ref{LPA}. However, when actually inserting the
expression of $\delta C$ into equation (\ref{Per31}), some of the
terms in the expression for $\delta C$ cancel with existing terms in
equation (\ref{Per31}). As a result, the final expression for the
perturbation of the third term in equation (\ref{Per3final}) is
slightly less involved. It is given by \ba \label{Per3final}
\lefteqn{\delta\Big(Q_{jk}\Big[-\frac{N^2\kappa}{2\Q}\;C +
N^2\big(2\Lambda +\frac{\kappa}{2\lambda}v(\Xi)\big)\Big]\Big)
}\\
&=&
\Big[
-\ov{Q}_{jk}\Big(
\frac{1}{2}\frac{\ov{N}^m\ov{N}^n}{\ov{N}^2}\ov{N}^2\Big(2\Lambda-\ov{R}+\frac{
\kappa}{2\lambda}\Big(v(\ov{\Xi})-\ov{Q}^{rs}\ov{\Xi}_{,r}\ov{\Xi}_{,s}\Big)
\Big)
+\frac{\ov{N}^2}{2}\Big(\ov{R}^{mn}
-[\ov{G}^{-1}]^{mnrs}\ov{D}_r\ov{D}_s\Big)\nonumber\\
&&
\quad\quad\quad
+\frac{\kappa}{2\lambda}\ov{Q}^{mr}\ov{Q}^{ns}\big(\dot{\Xi}-\big({\cal
L}_{\Nb}\ov{\Xi}\big)\big)\ov{N}_r\ov{\Xi}_{,s}
-\frac{\kappa\ov{N}^2}{2\lambda}\ov{Q}^{rm}\ov{Q}^{sn}\ov{\Xi}_{,r}\ov{\Xi}_{,s}
\nonumber\\
&&
\quad\quad\quad
+\frac{1}{4}\Big(\dot{\ov{Q}}_{rs}-\big({\cal L}_{\Nb}\ov{Q}\big)_{rs}\Big)
\Big[[\ov{G}^{-1}]^{mnrs}\Big(\frac{\partial}{\partial\tau}-{\cal L}_{\Nb}\Big)
-\Big(\dot{\ov{Q}}_{tu}-\big({\cal
L}_{\Nb}\ov{Q}\big)_{tu}\Big)\ov{Q}^{sn}[\ov{G}^{-1}]^{turm}\Big]\nonumber\\
&&
\quad\quad\quad
+\frac{1}{4}\Big(\dot{\ov{Q}}_{rs}-\big({\cal L}_{\Nb}Q\big)_{rs}\Big)
[\overline{G}^{-1}]^{rsvw}
\Big[\ov{Q}^{mt}\ov{N}^n[\ov{Q}_{vw}]_{,t}
+2\overline{Q}_{tw}\frac{\partial}{\partial x^v}\Big(\ov{Q}^{mt}\ov{N}^n\Big)\Big]
\Big)
\Big]\delta Q_{mn}\nonumber\\
&&
\Big[
-\ov{Q}_{jk}\Big(
-\Nv{m}\ov{N}^2\Big(2\Lambda-\ov{R}-\frac{\kappa}{2\lambda}\big(v(\ov{\Xi})-\ov{
Q}^{rs}\ov{\Xi}_{,r}\ov{\Xi}_{,s}\big)\Big)
-\frac{\kappa}{2\lambda}\big(\dot{\Xi}-\big({\cal
L}_{\Nb}\ov{\Xi}\big)\big)\ov{Q}^{mn}\ov{\Xi}_{,n}
\nonumber\\
&&
\quad\quad\quad
-\frac{1}{4}\Big(\dot{\ov{Q}}_{rs}-\big({\cal L}_{\Nb}Q\big)_{rs}\Big)
[\ov{G}^{-1}]^{rsvw}
\Big[\ov{Q}^{mt}[\ov{Q}_{vw}]_{,t}
+2\ov{Q}_{tw}\frac{\partial}{\partial x^v}\Big(\ov{Q}^{mt}\Big)\Big]\Big)\delta
N_m\nonumber\\
&&
+\Big[-\ov{Q}_{jk}
\Big(
\frac{\ov{N}^2\kappa}{2\lambda}\Big(\frac{1}{\ov{N}^2}\Big(\dot{\ov{\Xi}}-\big({
\cal L}_{\Nb}\ov{\Xi}\big)\Big)\Big(\frac{\partial}{\partial\tau}-{\cal
L}_{\Nb}\Big)
+\ov{Q}^{mn}\ov{\Xi}_{,n}\frac{\partial}{\partial x^m}
-\frac{1}{2}v^{\prime}(\ov{\Xi})\Big)\Big)
\Big]\delta\Xi\nonumber\\
&&
+\Big[\ov{N}^2\Big(-\frac{\kappa}{2\Qb}\ov{C}+2\Lambda+\frac{\kappa}{2\lambda}
v(\ov{\Xi})\Big)\Big]\delta Q_{jk}.
\nonumber
\ea Proceeding with the next term on the right-hand side of equation
(\ref{ResDDQ}), we obtain for its perturbation the following result:
\ba \label{Per4final}
\delta\Big(N^2\big(\frac{\kappa}{\lambda}\Xi_{,j}\Xi_{,k}
-2R_{jk}\Big)\Big) &=& \Big[
\frac{1}{2}\frac{\ov{N}^m\ov{N}^n}{\ov{N}^2}\Big(-2\ov{N}^2\big(\frac{\kappa}{
\lambda}\ov{\Xi}_{,j}\ov{\Xi}_{,k}-2\ov{R}_{jk}\big)\Big)+\ov{N}^2\ov{D}_j\ov{D}
_k\ov{Q}^{mn}\Big]\delta Q_{mn}\\
&&
+\Big[\ov{N}^2\ov{D}_m\ov{D}_n\ov{Q}^{mn}\Big]\delta Q_{jk}
+\Big[-2\ov{N}^2\ov{D}_n\ov{D}_{(k}\ov{Q}^{mn}\Big]\delta Q_{j)m}\nonumber\\
&&
+\Big[\Nv{m}\Big(2\ov{N}^2\big(\frac{\kappa}{\lambda}\ov{\Xi}_{,j}\ov{\Xi}_{,k}
-2\ov{R}_{jk}\big)\Big)\Big]\delta
N_m
+\Big[4\ov{N}^2\frac{\kappa}{\lambda}\ov{\Xi}_{(,k}\frac{\partial}{\partial
x^{j)}}\Big]\delta\Xi.\nonumber \ea For this calculation we used the
fact that the perturbation of the Ricci tensor can be expressed in
terms of the perturbations $\delta Q_{jk}$. The explicit relation
reads \be \delta
R_{jk}=\Big[-\frac{1}{2}\ov{D}_m\ov{D}_n\ov{Q}^{mn}\Big]\delta
Q_{jk} +\Big[-\frac{1}{2}\ov{D}_j\ov{D}_k\ov{Q}^{mn}\Big]\delta
Q_{mn} +\Big[\ov{D}_n\ov{D}_{(k}\ov{Q}^{mn}\Big]\delta Q_{j)m}. \ee
The next term in equation (\ref{ResDDQ}) includes covariant
derivatives. Therefore we will have to consider the perturbation of
the Christoffel symbols $\Gamma^m_{jk}$. These can again be written
in terms of metric perturbations as shown below \be \delta
\Gamma^{m}_{jk}=\frac{\ov{Q}^{mn}}{2}\Big(\big(\ov{D}_j\delta
Q_{nk}\big) +\big(\ov{D}_k\delta Q_{jn}\big) -\big(\ov{D}_n\delta
Q_{jk}\big)\Big). \ee Using this we end up with \ba
\label{Per5final} \delta\big(2ND_jD_kN) &=&
\Big[\frac{1}{2}\frac{\ov{N}^m\ov{N}^n}{\ov{N}^2}\Big(-4\ov{N}\big(\ov{D}_j\ov{D
}_k\ov{N}\big)\Big)-2\ov{N}^2\ov{D}_j\ov{D}_k\Big(\frac{1}{2}\frac{\ov{N}
^m\ov{N}^n}{\ov{N}^2}\Big)\Big]\delta Q_{mn}\\
&&
+\Big[\Nv{m}\Big(4\ov{N}\big(\ov{D}_j\ov{D}_k\ov{N}\big)\Big)+2\ov{N}^2\ov{
D}_j\ov{D}_k\Big(\Nv{m}\Big)\Big]\delta N_m\nonumber\\
&&
+\Big[2\ov{N}\big(\ov{D}_n\ov{N}\big)\ov{D}_{(k}\ov{Q}^{mn}\Big]\delta
Q_{j),m}
+\Big[-\ov{N}^2\big(\ov{D}_{n}\ov{N}\big)\ov{D}_{m}\ov{Q}^{mn}\Big]\delta
Q_{jk}.\nonumber \ea Note that covariant derivatives surrounded by
round bracket such as $(D_jD_k...)$ act on the elements inside the
round brackets only. By contrast, covariant derivatives not
surrounded by round brackets act on everything to their right,
including also the perturbations.

Next we deal with the three terms on the right-hand side of equation
(\ref{ResDDQ}) that include Lie derivatives with respect to
$\delta\vec{N}$. Those Lie derivatives are again functions of
$\delta Q_{jk}$, $\delta N_j$, and partial derivatives thereof,
because we have $\delta N^m=-\ov{Q}^{mr}\ov{Q}^{ns}\ov{N}_s\delta
Q_{rn}+\ov{Q}^{mn}\delta N_n$. In this section, however, we will
work with the following compact form only:
 \ba \label{Per6final}
\delta\Big(2\big({\cal L}_{\vec{N}}\dot{Q}\big)_{jk}+\big({\cal
L}_{\dot{\vec{N}}}Q\big)_{jk} -\big({\cal L}_{\vec{N}}\big({\cal
L}_{\vec{N}}Q\big)\big)_{jk}\Big) &=&\Big[{\cal
L}_{\Nb}\big(\frac{\partial}{\partial\tau}-{\cal L}_{\Nb}\Big)
+\frac{\partial}{\partial\tau}{\cal L}_{\Nb}\Big]\delta Q_{jk}\\
&& +\Big({\cal L}_{\delta N}\big(\frac{\partial}{\partial\tau}-{\cal
L}_{\Nb}\big)+\big(\frac{\partial}{\partial\tau}-{\cal
L}_{\Nb}\big){\cal L}_{\delta N}\Big)\Big[\ov{Q}_{jk}\Big].\nonumber
\ea In the next section, we will rewrite the second order equation
of motion in a more concise form, using coefficient functions. That
will allow us to include the Lie derivative term with respect to
$\delta\vec{N}$ in the following, more explicit form: \ba
\label{Per6Lie} \lefteqn{\Big({\cal L}_{\delta
N}\big(\frac{\partial}{\partial\tau}-{\cal
L}_{\Nb}\big)+\big(\frac{\partial}{\partial\tau}-{\cal
L}_{\Nb}\big){\cal
L}_{\delta N}\Big)\Big[\ov{Q}_{jk}\Big]}\\
&=&
\Big[
-\ov{Q}^{mr}\ov{N}^n\Big[\dot{Q}_{jk}-\big({\cal
L}_{\Nb}\ov{Q}\big)_{jk}\Big]_{,r}
-\big(\dot{\ov{Q}}_{km}-\big({\cal
L}_{\Nb}\ov{Q}\big)_{kr}\big)\frac{\partial}{\partial
x^j}\Big(\ov{Q}^{mr}\ov{N}^n\Big)
-\big(\dot{\ov{Q}}_{jm}-\big({\cal
L}_{\Nb}\ov{Q}\big)_{jr}\big)\frac{\partial}{\partial
x^k}\Big(\ov{Q}^{mr}\ov{N}^n\Big)\nonumber\\
&&
\quad
+\Big(\frac{\partial}{\partial\tau}-{\cal L}_{\Nb}\Big)
\Big(-\ov{Q}^{mr}\ov{N}^n\Big[Q_{jk}\Big]_{,r}
-\ov{Q}_{km}\frac{\partial}{\partial x^j}\Big(\ov{Q}^{mr}\ov{N}^n\Big)
-\ov{Q}_{jm}\frac{\partial}{\partial x^k}\Big(\ov{Q}^{mr}\ov{N}^n\Big)\Big)
\Big]\delta Q_{mn}\nonumber\\
&&
+
\Big[
\ov{Q}^{mn}\Big[\dot{\ov{Q}}_{jk}-\big({\cal L}_{\Nb}\ov{Q}\big)_{jk}\Big]_{,n}
+\big(\dot{\ov{Q}}_{kn}-\big({\cal
L}_{\Nb}\ov{Q}\big)_{kn}\big)\frac{\partial}{\partial x^j}\Big(\ov{Q}^{mn}\Big)
+\big(\dot{\ov{Q}}_{jn}-\big({\cal
L}_{\Nb}\ov{Q}\big)_{jn}\big)\frac{\partial}{\partial
x^k}\Big(\ov{Q}^{mn}\Big)\nonumber\\
&&
\quad
+\Big(\frac{\partial}{\partial\tau}-{\cal L}_{\Nb}\Big)
\Big(\ov{Q}^{mn}\Big[Q_{jk}\Big]_{,n}
+\ov{Q}_{kn}\frac{\partial}{\partial x^j}\Big(\ov{Q}^{mn}\Big)
+\ov{Q}_{jn}\frac{\partial}{\partial x^k}\Big(\ov{Q}^{mn}\Big)\Big)\nonumber
\Big]\delta N_{m}
\ea

Another term which includes Lie derivatives with respect to
$\delta\vec{N}$ appeared previously as part of equation
(\ref{Per2final}). Performing the Lie derivative also in this case,
we end up with \ba \label{Per2Lie} \lefteqn{
\Big[-2\ov{Q}^{mn}\Big(\dot{Q}_{n(k}-\big({\cal
L}_{\Nb}\ov{Q}\big)_{n(k}\Big)\Big]\big({\cal L}_{\delta N}Q\big)_{j)m}}\\
&=& \Big[\Big(-2\ov{Q}^{tu}\Big(\dot{\ov{Q}}_{t(k}-\big({\cal
L}_{\Nb}\ov{Q}\big)_{t(k}\Big)\Big)
\Big(-\ov{Q}^{mr}\ov{N}^n\Big[\ov{Q}_{j)u}\Big]_{,r}
-\ov{Q}_{um}\frac{\partial}{\partial
x^{j)}}\Big(\ov{Q}^{mr}\ov{N}^n\Big)
-\ov{Q}_{j)m}\frac{\partial}{\partial
x^u}\Big(\ov{Q}^{mr}\ov{N}^n\Big)\Big) \Big]\delta Q_{mn}
\nonumber\\
&& \Big[\Big(-2\ov{Q}^{tu}\Big(\dot{Q}_{t(k}-\big({\cal
L}_{\Nb}\ov{Q}\big)_{t(k}\Big)\Big)
\Big(\ov{Q}^{mn}\Big[\ov{Q}_{j)u}\Big]_{,n}
+\ov{Q}_{un}\frac{\partial}{\partial x^{j)}}\Big(\ov{Q}^{mn}\Big)
+\ov{Q}_{j)n}\frac{\partial}{\partial x^u}\Big(\ov{Q}^{mn}\Big)\Big)
\Big]\delta N_{m}.\nonumber \ea The other two terms that include Lie
derivatives with respect to $\delta\vec{N}$ are parts of the
perturbation of the first term.  Their explicit expressions are
given in equation (\ref{SBLie}) and equation (\ref{Per1Lie}),
respectively. Note that these terms are written out explicitly in
appendix \ref{LPA}, where the final
form of the perturbation of the first term is calculated.
\\
Finally, we consider the last remaining term from equation
(\ref{ResDDQ}). It involves the Hamiltonian density $H$, which we
found out to be a constant of motion in section (3.4). In our
companion paper \cite{8a} we show that also $\delta N_j$ and $\delta
H$ are constants of motion. Therefore, in complete analogy to the
case of $\delta N_j$, we will factor out $\delta H$. We thus obtain
\ba \label{Per7final} \delta\Big(-\frac{N\kappa}{\Q}HG_{jkmn}N^mN^n\Big)
&=& \Big[-\frac{\ov{N}\kappa}{\Qb}\ov{N}^n\ov{N}^m\ov{G}_{jkmn}\Big]\delta
H
+\Big[-\frac{\ov{N}\kappa}{\Qb}\ov{H}\ov{Q}^{rm}\ov{N}^n\ov{N}^s\ov{G}_{snr(k}\Big]
\delta Q_{j)m}\nonumber\\
&&
+\Big[\QN{m}{n}\frac{\ov{N}\kappa}{\Qb}\ov{H}\ov{N}^r\ov{N}^s\ov{G}_{jkrs}\nonumber\\
&&
\quad
+\frac{\ov{N}\kappa}{\Qb}\ov{H}\ov{Q}^{sm}\ov{N}^r\ov{N}^n\ov{G}_{jkrs}\Big]\delta
Q_{mn}\nonumber\\
&&
+\Big[-\Nv{m}\frac{\ov{N}\kappa}{\Qb}\ov{H}\ov{N}^r\ov{N}^s\ov{G}_{jkrs}\Big]\delta
N_m.
\ea Here we used that \be \delta G_{jkmn}=-G_{jkrs}G_{mntu}\delta
[G^{-1}]^{rstu}. \ee Now we have finally derived all the individual
parts that are needed in order to write down the equation of motion
for $\delta Q_{jk}$. However, by just looking at the various
individual terms, it is clear that they are already considerably
more complicated than for the corresponding case of the matter
equation of motion for $\delta\Xi$. Nevertheless, we decided to
present the final equation in detailed form on the next page, in
particular to convey a sense of how much more involved the
geometrical part of the perturbed equations is compared to the
matter part. In the next section we will rewrite both equations of
motion, the one for $\delta\Xi$ as well as the one for $\delta
Q_{jk}$, in a more transparent form where all the complicated
background coefficients in front of the perturbations are hidden in
certain coefficient functions. In our companion paper, we will then
specialise those coefficient functions to the case of an FRW
spacetime and show that the general equations derived in the last
two subsection are (up to small correction caused by our dust clock)
in agreement with the well-known perturbation equations as
discussed, e.g., in \cite{2}. Note that we still kept the compact
form of the Lie derivative with respect to $\delta N$ in equation
(\ref{DDdQfinal}), because we wanted to present this equation on one
page only.
\newpage
\ba
\label{DDdQfinal}
\delta\ddot{Q}_{jk}
&=&
\Big[-\ov{Q}_{jk}
\Big(
\frac{\ov{N}^2\kappa}{2\lambda}\Big(\frac{1}{\ov{N}^2}\Big(\dot{\ov{\Xi}}-\big({
\cal L}_{\Nb}\ov{\Xi}\big)\Big)\Big(\frac{\partial}{\partial\tau}-{\cal
L}_{\Nb}\Big)
+\ov{Q}^{mn}\ov{\Xi}_{,n}\frac{\partial}{\partial x^m}
-\frac{1}{2}v^{\prime}(\ov{\Xi})\Big)\Big)
+\Big(4\ov{N}^2\frac{\kappa}{\lambda}\ov{\Xi}_{(,k}\frac{\partial}{\partial
x^{j)}}\Big)
\Big]\delta\Xi\nonumber\\
&&
\Big[\Big[
\frac{\dot{\ov{N}}}{\ov{N}}-\frac{(\Qb)^{\bf\dot{}}}{\Qb}+\frac{\ov{N}}{\Qb}
\Big({\cal
L}_{\Nb}\frac{\Qb}{\ov{N}}\Big)\Big]\Big(\ov{Q}^{rn}\ov{N}_n[\ov{Q}_{jk}]_{,r}
+\ov{Q}_{rk}\frac{\partial}{\partial
x^j}\Big(\ov{Q}^{rn}\ov{N}^m\Big)+\ov{Q}_{rj}\frac{\partial}{\partial
x^k}\Big(\ov{Q}^{rn}\ov{N}^m\Big)\Big)\nonumber\\
&&
\quad
-\Big(\dot{\ov{Q}}_{jk}-\big({\cal L}_{\Nb}\ov{Q}\big)_{jk}\Big)
\Big(\big(\frac{\partial}{\partial\tau}-{\cal L}_{\Nb}\big)\Big(\QN{m}{n}\Big)
+\frac{\ov{N}}{\Qb}\frac{\partial}{\partial
x^k}\big(\frac{\Qb}{\ov{N}}\ov{N}^m\ov{Q}^{rn}\big)\Big)\nonumber\\
&&
\quad
+\Big(-\ov{Q}^{mr}\ov{Q}^{ns}\Big(\dot{\ov{Q}}_{rj}-\big({\cal
L}_{\Nb}\ov{Q}\big)_{rj}\Big)\Big(\dot{\ov{Q}}_{sk}
-\big({\cal L}_{\Nb}\ov{Q}\big)_{sk}\Big)\Big)\nonumber\\
&&
\quad
+\Big(
-\ov{Q}_{jk}\Big[
\frac{1}{2}\frac{\ov{N}^m\ov{N}^n}{\ov{N}^2}\ov{N}^2\Big(2\Lambda-\ov{R}+\frac{
\kappa}{2\lambda}\Big(v(\ov{\Xi})-\ov{Q}^{rs}\ov{\Xi}_{,r}\ov{\Xi}_{,s}\Big)
\Big)
+\frac{\ov{N}^2}{2}\Big(\ov{R}^{mn}
-[\ov{G}^{-1}]^{mnrs}\ov{D}_r\ov{D}_s\Big)\nonumber\\
&&
\quad\quad\quad\quad\quad
+\frac{\kappa}{2\lambda}\ov{Q}^{mr}\ov{Q}^{ns}\big(\dot{\Xi}-\big({\cal
L}_{\Nb}\ov{\Xi}\big)\big)\ov{N}_r\ov{\Xi}_{,s}
-\frac{\kappa\ov{N}^2}{2\lambda}\ov{Q}^{rm}\ov{Q}^{sn}\ov{\Xi}_{,r}\ov{\Xi}_{,s}
\nonumber\\
&&
\quad\quad\quad\quad\quad
+\frac{1}{4}\Big(\dot{\ov{Q}}_{rs}-\big({\cal L}_{\Nb}\ov{Q}\big)_{rs}\Big)
\Big[[\ov{G}^{-1}]^{mnrs}\Big(\frac{\partial}{\partial\tau}-{\cal L}_{\Nb}\Big)
-\Big(\dot{\ov{Q}}_{tu}-\big({\cal
L}_{\Nb}\ov{Q}\big)_{tu}\Big)\ov{Q}^{sn}[\ov{G}^{-1}]^{turm}\Big]\nonumber\\
&&
\quad\quad\quad\quad\quad
+\frac{1}{4}\Big(\dot{\ov{Q}}_{rs}-\big({\cal L}_{\Nb}Q\big)_{rs}\Big)
[\overline{G}^{-1}]^{rsvw}
\Big[\ov{Q}^{mt}\ov{N}^n[\ov{Q}_{vw}]_{,t}
+2\overline{Q}_{tw}\frac{\partial}{\partial x^v}\Big(\ov{Q}^{mt}\ov{N}^n\Big)\Big]\Big]\Big)\nonumber\\
&&
\quad
+\Big(
\frac{1}{2}\frac{\ov{N}^m\ov{N}^n}{\ov{N}^2}
\Big(-2\ov{N}^2\big(\frac{\kappa}{\lambda}\ov{\Xi}_{,j}\ov{\Xi}_{,k}-2\ov{R}_{jk
}\big)-4\ov{N}\big(\ov{D}_j\ov{D}_k\ov{N}\big)\Big)+\ov{N}^2\ov{D}_j\ov{D}_k\ov{
Q}^{mn}-2\ov{N}^2\ov{D}_j\ov{D}_k\frac{1}{2}\frac{\ov{N}^m\ov{N}^n}{\ov{N}
^2}\Big)\nonumber\\
&&
\quad
+\Big(\QN{m}{n}\frac{\ov{N}\kappa}{\Qb}\ov{H}\ov{N}^r\ov{N}^s\ov{G}_{jkrs}
+\frac{\ov{N}\kappa}{\Qb}\ov{H}\ov{Q}^{sm}\ov{N}^r\ov{N}^n\ov{G}_{jkrs}\Big)
\Big]\delta Q_{mn}
\nonumber\\
&&
+\Big[
-\Big[\frac{\dot{\ov{N}}}{\ov{N}}-\frac{(\Qb)^{\bf\dot{}}}{\Qb}+\frac{\ov{N}}{
\Qb}\Big({\cal
L}_{\Nb}\frac{\Qb}{\ov{N}}\Big)\Big]
\Big(\ov{Q}^{mn}[\ov{Q}_{jk}]_{,n}+\ov{Q}_{nk}\frac{\partial}{\partial
x^j}\Big(\ov{Q}^{mn}\Big)+\ov{Q}_{nj}\frac{\partial}{\partial
x^k}\Big(\ov{Q}^{mn}\Big)\nonumber\\
&&
\quad
+\Big(\dot{\ov{Q}}_{jk}-\big({\cal L}_{\Nb}\ov{Q}\big)_{jk}\Big)
\Big(\big(\frac{\partial}{\partial\tau}-{\cal L}_{\Nb}\big)\Big(\Nv{m}\Big)
+\frac{\ov{N}}{\Qb}\frac{\partial}{\partial
x^n}\big(\frac{\Qb}{\ov{N}}\ov{Q}^{mn}\big)\Big)
\nonumber\\
&&
\quad
+\Big(
-\ov{Q}_{jk}\Big[
-\Nv{m}\ov{N}^2\Big(2\Lambda-\ov{R}-\frac{\kappa}{2\lambda}\big(v(\ov{\Xi})-\ov{
Q}^{rs}\ov{\Xi}_{,r}\ov{\Xi}_{,s}\big)\Big)
-\frac{\kappa}{2\lambda}\big(\dot{\Xi}-\big({\cal
L}_{\Nb}\ov{\Xi}\big)\big)\ov{Q}^{mn}\ov{\Xi}_{,n}
\nonumber\\
&&
\quad\quad\quad\quad\quad
-\frac{1}{4}\Big(\dot{\ov{Q}}_{rs}-\big({\cal L}_{\Nb}Q\big)_{rs}\Big)
[\ov{G}^{-1}]^{rsvw}
\Big[\ov{Q}^{mt}[\ov{Q}_{vw}]_{,t}
+2\ov{Q}_{tw}\frac{\partial}{\partial x^v}\Big(\ov{Q}^{mt}\Big)\Big]\Big]\Big)
\nonumber\\
&&
\quad
+\Big(-\Nv{m}\frac{\ov{N}\kappa}{\Qb}\ov{H}\ov{N}^r\ov{N}^s\ov{G}_{jkrs}\Big)
\nonumber\\
&&
\quad
+\Big(\Nv{m}\Big(2\ov{N}^2\big(\frac{\kappa}{\lambda}\ov{\Xi}_{,j}\ov{\Xi}_{,k}
-2\ov{R}_{jk}\big)
+4\ov{N}\big(\ov{D}_j\ov{D}_k\ov{N}\big)\Big)+2\ov{N}^2\ov{D}_j\ov{D}_k\Big(\Nv{
m}
\Big)
\Big]\delta N_{m}\nonumber\\
&&
+\Big[
\Big[\frac{\dot{\ov{N}}}{\ov{N}}-\frac{(\Qb)^{\bf\dot{}}}{\Qb}+\frac{\ov{N}}{\Qb
}\Big({\cal
L}_{\Nb}\frac{\Qb}{\ov{N}}\Big)\Big]\Big(\frac{\partial}{\partial\tau}-{\cal
L}_{\Nb}\Big)
+\Big(\ov{N}^2\Big(-\frac{\kappa}{2\Qb}\ov{C}+2\Lambda+\frac{\kappa}{2\lambda}
v(\ov{\Xi})\Big)\Big)\nonumber\\
&&
\quad
+\Big(\ov{N}^2\ov{D}_m\ov{D}_n\ov{Q}^{mn}\Big)+\Big(-\ov{N}^2\big(\ov{D}_{n}\ov{
N}\big)\ov{D}_{m}\ov{Q}^{mn}\Big)
+\Big(\Big({\cal L}_{\Nb}\big(\frac{\partial}{\partial\tau}-{\cal L}_{\Nb}\Big)
+\frac{\partial}{\partial\tau}{\cal L}_{\Nb}\Big)
\Big]\delta Q_{jk}\nonumber\\
&&
+\Big[\Big(2\ov{Q}^{mn}\Big(\dot{Q}_{n(k}-\big({\cal
L}_{\Nb}\ov{Q}\big)_{n(k}\Big)\big(\frac{\partial}{\partial\tau}-{\cal
L}_{\Nb}\big)\Big)
+ \Big(-2\ov{N}^2\ov{D}_n\ov{D}_{(k}\ov{Q}^{mn}\Big)
+\Big(2\ov{N}\big(\ov{D}_n\ov{N}\big)\ov{D}_{(k}\ov{Q}^{mn}\Big)\nonumber\\
&&
\quad
+\Big(-\frac{\ov{N}\kappa}{\Qb}\ov{H}\ov{Q}^{rm}\ov{N}^n\ov{N}^s\ov{G}_{snr(k}\Big)
\Big]\delta Q_{j)m}
+\Big[-\frac{\ov{N}\kappa}{\Qb}\ov{N}^n\ov{N}^m\ov{G}_{jkmn}\Big]\delta H\nonumber\\
&& +\Big({\cal L}_{\delta N}\big(\frac{\partial}{\partial\tau}-{\cal
L}_{\Nb}\big)+\big(\frac{\partial}{\partial\tau}-{\cal
L}_{\Nb}\big){\cal L}_{\delta N}\Big)\Big[\ov{Q}_{jk}\Big]
+\Big[-2\ov{Q}^{mn}\Big(\dot{Q}_{n(k}-\big({\cal
L}_{\Nb}\ov{Q}\big)_{n(k}\Big)\Big]\big({\cal L}_{\delta
N}\ov{Q}\big)_{j)m}. \ea
\subsection{Summary of the Equations of Motion for $\delta\Xi$ and $\delta
Q_{jk}$}
\label{SummPer} In the last two sections we derived the second order
equations of motion for $\delta\Xi$ and $\delta Q_{jk}$. The results
of our calculations can  be found in equation (\ref{ResDDdXi}) and
equation (\ref{DDdQfinal}), respectively. However, these equations
are quite complex and not very transparent in that form. For this
reason, we want to rewrite them in a form where we can still
recognize their general form but where they take a much simpler
form. We will hide the precise details of the various background
quantities that occur as coefficients in front of the linear
perturbations in certain coefficient functions that will be
introduced below. These coefficients will be operator valued since
they also involve objects such as partial or Lie derivatives, as can
be seen from eqns. (\ref{ResDDdXi}) and (\ref{DDdQfinal}). As
explained before, apart from the elementary perturbations $\delta
Q_{jk}$ and $\delta\Xi$, the second order equations of motion
contain also the perturbation of the shift vector $\delta N_j$ and
the (physical) Hamiltonian density $\delta H$. Both are functions of
$\delta Q_{jk}$ and $\delta\Xi$, and therefore not independent
perturbations. However, it turns out that these two quantities are
constants of motion, so it is convenient to keep them in the
equations.
\\
Starting with the second order equation for $\delta\Xi$, its general
structure is of the kind \be \label{SimpXi} \fbox{\parbox{6cm}{
$\big[C_{\Xi}\big]\delta\Xi=\big[C_{\Xi}\big]^{jk}\delta Q_{jk} +
\big[C_{\Xi}\big]^j\delta N_j$}} \ee where the coefficients are
given by \ba \label{XiCoeffXi} [C_{\Xi}]&:=&
\left[\frac{\partial^2}{\partial\tau^2} -\Big({\cal
L}_{\Nb}\big(\frac{\partial}{\partial\tau}-{\cal L}_{\Nb}\Big)
+\frac{\partial}{\partial\tau}{\cal L}_{\Nb}\Big)\right.
\\
&&
-\Big[\frac{\dot{\ov{N}}}{\ov{N}}-\frac{(\Qb)^{\bf\dot{}}}{
\Qb}+\frac{\ov{N}}{\Qb}\Big({\cal
L}_{\Nb}\frac{\Qb}{\ov{N}}\Big)\Big]\big(\frac{\partial}{\partial\tau}-{\cal
L}_{\Nb}\big)\nonumber\\
&&
-\ov{Q}^{jk}\Big[\frac{\ov{N}}{\Qb}\big(\ov{N}\Qb)_{,j}\Big]\frac{\partial}{
\partial x^k}\
-\ov{N}^2\Big(\Delta +[\ov{Q}^{mn}]_{,n}\frac{\partial}{\partial
x^m}\Big)+\frac{1}{2}\ov{N}^2v^{\prime\prime}(\ov{\Xi})
\left.\right]\nonumber
\ea
and
\ba
[C_{\Xi}]^{jk}&:=&
\label{XiCoeffQ}
\Big[
- \big(\dot{\ov{\Xi}}-\big({\cal
L}_{\Nb}\ov{\Xi}\big)\big)
\Big[
\Big(\frac{\partial}{\partial\tau}-{\cal
L}_{\Nb}\Big)\Big(\QN{j}{k}\Big)
+\frac{\ov{N}}{\Qb}\frac{\partial}{\partial
x^m}\Big(\frac{\Qb}{\ov{N}}\ov{Q}^{jm}\ov{N}^k\Big)
\Big]\\
&&
\quad
-\big(\frac{\partial}{\partial\tau}-{\cal
L}_{\Nb}\big)\Big(\ov{Q}^{jm}\ov{Q}^{kn}\ov{N}_m\ov{\Xi}_{,n}\Big)
-\ov{N}^2\Big(\Delta\ov{\Xi}
+[\ov{Q}^{mn}]_{,m}\ov{\Xi}_{,n}-\frac{1}{2}v^{
\prime}(\ov{\Xi})\Big)\Big(\frac{\ov{N}^j\ov{N}^k}{\ov{N}^2}\Big)
\nonumber\\
&&
\quad
-\ov{N}^2\frac{\partial}{\partial
x^n}\Big(\ov{Q}^{jm}\ov{Q}^{kn}\ov{\Xi}_{,m}\Big)
-\ov{Q}^{jm}\ov{Q}^{kn}\ov{N}_m\big[\dot{\ov{\Xi}}-\big({\cal
L}_{\Nb}\ov{\Xi}\big)\big]_{,n}
-\ov{Q}^{jm}\ov{Q}^{kn}\ov{\Xi}_{,n}\big[\frac{N}{\Q}\big(N\Q)_{,m}\big]
\nonumber\\
&&
\quad
+\Big[\frac{\dot{\ov{N}}}{\ov{N}}-\frac{(\Qb)^{\bf\dot{}}}{
\Qb}+\frac{\ov{N}}{\Qb}\Big({\cal
L}_{\Nb}\frac{\Qb}{\ov{N}}\Big)\Big]\Big(\ov{Q}^{jm}\ov{\Xi}_{,m}\ov{N}^k\Big)
\nonumber\\
&&\quad
+\big(\ov{Q}^{mn}\ov{\Xi}_{,n}\big)\Big(-\frac{\ov{N}}{\Qb}\QN{j}{k}\big[\ov{N}
\Qb]_{,j} 
\nonumber\\
&&
\quad
+\frac{\ov{N}}{\Qb}\frac{\partial}{\partial
x^m}\Big(\frac{1}{2}\Big(\ov{Q}^{jk}-\frac{\ov{N}^j\ov{N}^k}{\ov{N}^2}\Big)\ov{N
}\Qb\Big)\Big) \Big].\nonumber \ea Note that all partial and Lie
derivatives act on all terms to their right, including the linear
perturbations. This is also the reason why terms as for instance
$\ddot{\Xi}$ or $\dot{Q}_{jk}$ do not occur in the simple form of
equation (\ref{SimpXi}).
\\
The third coefficient is given by \ba \label{XiCoeffNj}
[C_{\Xi}]^j&:=& \Big[\big(\frac{\partial}{\partial\tau}-{\cal
L}_{\Nb}\big)\big(\ov{Q}^{jk}\ov{\Xi}_{,k}\big)
\big(\dot{\ov{\Xi}}-\big({\cal L}_{\Nb}\ov{\Xi}\big)\big)
+\big(\dot{\ov{\Xi}}-\big({\cal
L}_{\Nb}\ov{\Xi}\big)\big)\Big[\Big(\frac{\partial}{\partial\tau}-{\cal
L}_{\Nb}\Big)\Nv{j} +\frac{\ov{N}}{\Qb}\frac{\partial}{\partial
x^k}\Big(\frac{\Qb}{\ov{N}}\ov{Q}^{jk}\Big)
\Big]\nonumber\\
&&
\quad
+2\ov{N}^2\Big(\Delta\ov{\Xi}+[\ov{Q}^{jk}]_{,j}\ov{\Xi}_{,k}-\frac{1}{2}v^{
\prime}(\ov{\Xi})\Big)\Nv{j}
-\Big[\frac{\dot{\ov{N}}}{\ov{N}}-\frac{(\Qb)^{\bf\dot{}}}{
\Qb}+\frac{\ov{N}}{\Qb}\Big({\cal
L}_{\Nb}\frac{\Qb}{\ov{N}}\Big)\Big]\Big(\ov{Q}^{jk}\ov{\Xi}_{,k}\Big)
\nonumber\\
&&
\quad
+\big(\ov{Q}^{mn}\ov{\Xi}_{,n}\big)\Big(\frac{\ov{N}}{\Qb}\Nv{j}\big[\ov{N}\Qb]_
{,m}
+\frac{\ov{N}}{\Qb}\frac{
\partial}{\partial x^j}\Big(\Nv{j}\ov{N}\Qb\Big)\Big)
+\ov{Q}^{jk}\big[\dot{\Xi}-\big({\cal L}_{\Nb}\Xi\big)\big]_{,k}
\Big]\nonumber.\\
 \ea

As was to be expected, the corresponding equation for the
perturbation of the three metric $\delta Q_{jk}$ includes more than
just three terms. It is of the form \be \label{SimpQ}
\fbox{\parbox{13.5cm}{$ [C_{Q}]\delta
Q_{jk}=[A_Q]_{jk}\delta\Xi+[B_{Q}]_{jk}\delta H +
[C_Q]^{m}_{(k}\delta Q_{j)m} +[C_Q]^{mn}_{jk}\delta Q_{mn}
+[C_{Q}]^m_{jk}\delta N_m$}} \ee The various coefficients
introduced in the equation above are given as follows: \ba
\label{QCoeffQjk} [C_Q]&:=&
\Big[\frac{\partial^2}{\partial\tau^2} -\Big(\Big({\cal
L}_{\Nb}\big(\frac{\partial}{\partial\tau}-{\cal L}_{\Nb}\Big)
+\frac{\partial}{\partial\tau}{\cal L}_{\Nb}\Big) -\Big(\Big({\cal
L}_{\Nb}\big(\frac{\partial}{\partial\tau}-{\cal L}_{\Nb}\Big)
+\frac{\partial}{\partial\tau}{\cal L}_{\Nb}\Big)
-\Big(\ov{N}^2\ov{D}_m\ov{D}_n\ov{Q}^{mn}\Big)\nonumber
\\
&&
\quad
-\Big[\frac{\dot{\ov{N}}}{\ov{N}}-\frac{(\Qb)^{\bf\dot{}}}{\Qb}+\frac{\ov{N}}{
\Qb}\Big({\cal
L}_{\Nb}\frac{\Qb}{\ov{N}}\Big)\Big]\Big(\frac{\partial}{\partial\tau}-{\cal
L}_{\Nb}\Big)\nonumber\\
&& \quad
-\Big(\ov{N}^2\Big(-\frac{\kappa}{2\Qb}\ov{C}+2\Lambda+\frac{\kappa}{2\lambda}
v(\ov{\Xi})\Big)\Big)
-\Big(-\ov{N}^2\big(\ov{D}_{n}\ov{N}\big)\ov{D}_{m}\ov{Q}^{mn}\Big)
\Big].
 \ea
 Here we use the notation that covariant derivatives
surrounded by round bracket such as $(D_jD_k...)$ act on the
elements inside the round brackets only. By contrast, covariant
derivatives not surrounded by round brackets act on everything to
their right, including also the perturbations. The coefficient for
$\delta\Xi$ can be explicitly written as \ba \label{QCoeffXi}
[A_Q]_{jk}&:=& \Big[-\ov{Q}_{jk} \Big(
\frac{\ov{N}^2\kappa}{2\lambda}\Big(\frac{1}{\ov{N}^2}\Big(\dot{\ov{\Xi}}-\big({
\cal
L}_{\Nb}\ov{\Xi}\big)\Big)\Big(\frac{\partial}{\partial\tau}-{\cal
L}_{\Nb}\Big) +\ov{Q}^{mn}\ov{\Xi}_{,n}\frac{\partial}{\partial x^m}
-\frac{1}{2}v^{\prime}(\ov{\Xi})\Big)\Big)
+\Big(4\ov{N}^2\frac{\kappa}{\lambda}\ov{\Xi}_{(,k}\frac{\partial}{\partial
x^{j)}}\Big)
\Big].\nonumber\\
\ea For the coefficient belonging to the linear perturbation of the
Hamiltonian density $\delta H$ we get: \ba \label{QCoeffH}
[B_Q]_{jk}&:=&\Big[-\frac{\ov{N}\kappa}{\Qb}\ov{N}^n\ov{N}^m\ov{G}_{jkmn}\Big].
\ea The coefficient $[C_Q]^m_{(k}$ takes the form: \ba
\label{QCoeffQjm} [C_Q]^m_{(k}&:=&
\Big[\Big(2\ov{Q}^{mn}\Big(\dot{Q}_{n(k}-\big({\cal
L}_{\Nb}\ov{Q}\big)_{n(k}\Big)\big(\frac{\partial}{\partial\tau}-{\cal
L}_{\Nb}\big)\Big) +
\Big(-2\ov{N}^2\ov{D}_n\ov{D}_{(k}\ov{Q}^{mn}\Big)
+\Big(2\ov{N}\big(\ov{D}_n\ov{N}\big)\ov{D}_{(k}\ov{Q}^{mn}\Big)\nonumber\\
&& \quad
+\Big(-\frac{\ov{N}\kappa}{\Qb}\ov{H}\ov{Q}^{rm}\ov{N}^n\ov{N}^s\ov{G}_{snr(k}\Big)
\Big].
 \ea The last two coefficients are the ones for $\delta Q_{mn}$
and $\delta N_m$. These are the most complicated ones for the second
order equation of motion of $\delta Q_{jk}$. We will list them
below: \ba \label{QCoeffQmn} [C_Q]^{mn}_{jk}&:=& \left[\left[
\frac{\dot{\ov{N}}}{\ov{N}}-\frac{(\Qb)^{\bf\dot{}}}{\Qb}+\frac{\ov{N}}{\Qb}
\Big({\cal
L}_{\Nb}\frac{\Qb}{\ov{N}}\Big)\right]\right.\nonumber\\
&&\quad\quad\quad
\Big(\ov{Q}^{rn}\ov{N}_n[\ov{Q}_{jk}]_{,r}+\ov{Q}_{rk}\frac{\partial}{\partial
x^j}\Big(\ov{Q}^{rn}\ov{N}^m\Big)+\ov{Q}_{rj}\frac{\partial}{\partial
x^k}\Big(\ov{Q}^{rn}\ov{N}^m\Big)\nonumber\\
&&
\quad
-\Big(\dot{\ov{Q}}_{jk}-\big({\cal L}_{\Nb}\ov{Q}\big)_{jk}\Big)
\Big(\big(\frac{\partial}{\partial\tau}-{\cal L}_{\Nb}\big)\Big(\QN{m}{n}\Big)
+\frac{\ov{N}}{\Qb}\frac{\partial}{\partial
x^k}\big(\frac{\Qb}{\ov{N}}\ov{N}^m\ov{Q}^{rn}\big)\Big)\nonumber\\
&&
+\Big(-\ov{Q}^{mr}\ov{Q}^{ns}\Big(\dot{\ov{Q}}_{rj}-\big({\cal
L}_{\Nb}\ov{Q}\big)_{rj}\Big)\Big(\dot{\ov{Q}}_{sk}
-\big({\cal L}_{\Nb}\ov{Q}\big)_{sk}\Big)\Big)\nonumber\\
&&
+\Big(
-\ov{Q}_{jk}\Big[
\frac{1}{2}\frac{\ov{N}^m\ov{N}^n}{\ov{N}^2}\ov{N}^2\Big(2\Lambda-\ov{R}+\frac{
\kappa}{2\lambda}\Big(v(\ov{\Xi})-\ov{Q}^{rs}\ov{\Xi}_{,r}\ov{\Xi}_{,s}\Big)
\Big)
+\frac{\ov{N}^2}{2}\Big(\ov{R}^{mn}
-[\ov{G}^{-1}]^{mnrs}\ov{D}_r\ov{D}_s\Big)\nonumber\\
&&
\quad\quad\quad\quad\quad
+\frac{\kappa}{2\lambda}\ov{Q}^{mr}\ov{Q}^{ns}\big(\dot{\Xi}-\big({\cal
L}_{\Nb}\ov{\Xi}\big)\big)\ov{N}_r\ov{\Xi}_{,s}
-\frac{\kappa\ov{N}^2}{2\lambda}\ov{Q}^{rm}\ov{Q}^{sn}\ov{\Xi}_{,r}\ov{\Xi}_{,s}
\nonumber\\
&&
\quad\quad\quad\quad\quad
+\frac{1}{4}\Big(\dot{\ov{Q}}_{rs}-\big({\cal L}_{\Nb}\ov{Q}\big)_{rs}\Big)
\Big[[\ov{G}^{-1}]^{mnrs}\Big(\frac{\partial}{\partial\tau}-{\cal L}_{\Nb}\Big)
-\Big(\dot{\ov{Q}}_{tu}-\big({\cal
L}_{\Nb}\ov{Q}\big)_{tu}\Big)\ov{Q}^{sn}[\ov{G}^{-1}]^{turm}\Big]\nonumber\\
&&
\quad\quad\quad\quad\quad
+\frac{1}{4}\Big(\dot{\ov{Q}}_{rs}-\big({\cal L}_{\Nb}Q\big)_{rs}\Big)
[\overline{G}^{-1}]^{rsvw}
\Big[\ov{Q}^{mt}\ov{N}^n[\ov{Q}_{vw}]_{,t}
+2\overline{Q}_{tw}\frac{\partial}{\partial x^v}\Big(\ov{Q}^{mt}\ov{N}^n\Big)\Big]\Big]\Big)\nonumber\\
&&
+\Big(
\frac{1}{2}\frac{\ov{N}^m\ov{N}^n}{\ov{N}^2}\Big(-2\ov{N}^2\big(\frac{\kappa}{
\lambda}\ov{\Xi}_{,j}\ov{\Xi}_{,k}-2\ov{R}_{jk}\big)-4\ov{N}\big(\ov{D}_j\ov{D}
_k\ov{N}\big)\Big)
+\ov{N}^2\ov{D}_j\ov{D}_k\ov{Q}^{mn}
-2\ov{N}^2\ov{D}_j\ov{D}_k\Big(\frac{1}{2}\frac{\ov{N}^m\ov{N}^n}{\ov{N}^2}
\Big)\Big)\nonumber\\
&&
+\Big(\QN{m}{n}\frac{\ov{N}\kappa}{\Qb}\ov{H}\ov{N}^r\ov{N}^s\ov{G}_{jkrs}
+\frac{\ov{N}\kappa}{\Qb}\ov{H}\ov{Q}^{sm}\ov{N}^r\ov{N}^n\ov{G}_{jkrs}
\Big)\nonumber\\
&&
\quad
+\Big(
-\ov{Q}^{mr}\ov{N}^n\Big[\dot{\ov{Q}}_{jk}-\big({\cal
L}_{\Nb}\ov{Q}\big)_{jk}\Big]_{,r}
-\big(\dot{\ov{Q}}_{km}-\big({\cal
L}_{\Nb}\ov{Q}\big)_{kr}\big)\frac{\partial}{\partial
x^j}\Big(\ov{Q}^{mr}\ov{N}^n\Big)\nonumber\\
&&
\quad
-\big(\dot{\ov{Q}}_{jm}-\big({\cal
L}_{\Nb}\ov{Q}\big)_{jr}\big)\frac{\partial}{\partial
x^k}\Big(\ov{Q}^{mr}\ov{N}^n\Big)\nonumber\\
&&
\quad
+\Big(-2\ov{Q}^{tu}\Big(\dot{\ov{Q}}_{t(k}-\big({\cal
L}_{\Nb}\ov{Q}\big)_{t(k}\Big)\Big)
\Big(-\ov{Q}^{mr}\ov{N}^n\Big[Q_{j)u}\Big]_{,r}
-\ov{Q}_{um}\frac{\partial}{\partial x^{j)}}\Big(\ov{Q}^{mr}\ov{N}^n\Big)
-\ov{Q}_{j)m}\frac{\partial}{\partial
x^u}\Big(\ov{Q}^{mr}\ov{N}^n\Big)\Big)\nonumber\\
&&\left. \quad +\Big(\frac{\partial}{\partial\tau}-{\cal L}_{\Nb}\Big)
\Big(-\ov{Q}^{mr}\ov{N}^n\Big[\ov{Q}_{jk}\Big]_{,r}
-\ov{Q}_{km}\frac{\partial}{\partial
x^j}\Big(\ov{Q}^{mr}\ov{N}^n\Big)
-\ov{Q}_{jm}\frac{\partial}{\partial
x^k}\Big(\ov{Q}^{mr}\ov{N}^n\Big)\Big) \Big) \right].
 \ea
Note that the coefficient for $\delta Q_{mn}$ in equation
(\ref{DDdQfinal}) and the one above are different due to the
presence of the last two lines in the equation above. The reason for
this is that now we used the explicit expression for the Lie
derivatives with respect to $\delta\vec{N}$, which were derived in
equation (\ref{Per2Lie}) and (\ref{Per6Lie}) and lead to additional
terms in $\delta Q_{mn}$ and $\delta N_m$.
\\
Finally, we present the coefficient for $\delta N_m$. Similarly to
the case of $\delta Q_{nm}$, we also get additional terms coming
from the Lie derivatives in the last line of equation
(\ref{DDdQfinal}).
 \ba \label{QCoeffNm} [C_Q]^{m}_{jk}&:=& \Big[
-\Big[\frac{\dot{\ov{N}}}{\ov{N}}-\frac{(\Qb)^{\bf\dot{}}}{\Qb}+\frac{\ov{N}}{
\Qb}\Big({\cal
L}_{\Nb}\frac{\Qb}{\ov{N}}\Big)\Big]
\Big(\ov{Q}^{mn}[\ov{Q}_{jk}]_{,n}+\ov{Q}_{nk}\frac{\partial}{\partial
x^j}\Big(\ov{Q}^{mn}\Big)+\ov{Q}_{nj}\frac{\partial}{\partial
x^k}\Big(\ov{Q}^{mn}\Big)\Big)\nonumber\\
&&
\quad
+\Big(\dot{\ov{Q}}_{jk}-\big({\cal L}_{\Nb}\ov{Q}\big)_{jk}\Big)
\Big(\big(\frac{\partial}{\partial\tau}-{\cal L}_{\Nb}\big)\Big(\Nv{m}\Big)
+\frac{\ov{N}}{\Qb}\frac{\partial}{\partial
x^n}\big(\frac{\Qb}{\ov{N}}\ov{Q}^{mn}\big)\Big)
\nonumber\\
&&
\quad
+\Big(
-\ov{Q}_{jk}\Big[
-\Nv{m}\ov{N}^2\Big(2\Lambda-\ov{R}-\frac{\kappa}{2\lambda}\big(v(\Xi)-\ov{Q}^{
rs}\ov{\Xi}_{,r}\ov{\Xi}_{,s}\big)\Big)
-\frac{\kappa}{2\lambda}\big(\dot{\Xi}-\big({\cal
L}_{\Nb}\ov{\Xi}\big)\big)\ov{Q}^{mn}\ov{\Xi}_{,n}
\nonumber\\
&&
\quad\quad\quad\quad\quad\quad
-\frac{1}{4}\Big(\dot{\ov{Q}}_{rs}-\big({\cal L}_{\Nb}Q\big)_{rs}\Big)
[\ov{G}^{-1}]^{rsvw}
\Big[\ov{Q}^{mt}[\ov{Q}_{vw}]_{,t}
+2\ov{Q}_{tw}\frac{\partial}{\partial x^v}\Big(\ov{Q}^{mt}\Big)\Big]\Big]\Big)
\nonumber\\
&&
\quad
+\Big(-\Nv{m}\frac{\ov{N}\kappa}{\Qb}\ov{H}\ov{N}^r\ov{N}^s\ov{G}_{jkrs}\Big)
\nonumber\\
&&
\quad
+\Big(\Nv{m}\Big(2\ov{N}^2\big(\frac{\kappa}{\lambda}\ov{\Xi}_{,j}\ov{\Xi}_{,k}
-2\ov{R}_{jk}\big)
+4\ov{N}\big(\ov{D}_j\ov{D}_k\ov{N}\big)\Big)+2\ov{N}^2\ov{D}_j\ov{D}
_k\Big(\Nv{m}\Big)
\Big)\nonumber\\
&&
\quad
+\Big(
\ov{Q}^{mn}\Big[\dot{\ov{Q}}_{jk}-\big({\cal L}_{\Nb}\ov{Q}\big)_{jk}\Big]_{,n}
+\big(\dot{\ov{Q}}_{kn}-\big({\cal
L}_{\Nb}\ov{Q}\big)_{kn}\big)\frac{\partial}{\partial x^j}\Big(\ov{Q}^{mn}\Big)
+\big(\dot{\ov{Q}}_{jn}-\big({\cal
L}_{\Nb}\ov{Q}\big)_{jn}\big)\frac{\partial}{\partial
x^k}\Big(\ov{Q}^{mn}\Big)\Big)\nonumber\\
&&
\quad
+\Big(\frac{\partial}{\partial\tau}-{\cal L}_{\Nb}\Big)
\Big(\ov{Q}^{mn}\Big[\ov{Q}_{jk}\Big]_{,n}
+\ov{Q}_{kn}\frac{\partial}{\partial x^j}\Big(\ov{Q}^{mn}\Big)
+\ov{Q}_{jn}\frac{\partial}{\partial x^k}\Big(\ov{Q}^{mn}\Big)\Big)\nonumber\\
&& \quad +\Big(-2\ov{Q}^{tu}\Big(\dot{\ov{Q}}_{t(k}-\big({\cal
L}_{\Nb}\ov{Q}\big)_{t(k}\Big)\Big)
\Big(\ov{Q}^{mn}\Big[\ov{Q}_{j)u}\Big]_{,n}
+\ov{Q}_{un}\frac{\partial}{\partial x^{j)}}\Big(\ov{Q}^{mn}\Big)
+\ov{Q}_{j)n}\frac{\partial}{\partial x^u}\Big(\ov{Q}^{mn}\Big)\Big)
\Big].
 \ea
Although the form of the perturbed equations is quite complicated,
they simplify drastically for special backgrounds of interest. For
the case of FRW, for instance, all terms proportional to $\ov{N}_j$
vanish, since $\ov{N}_j=-\ov{C}_j/\ov{H}=0$ for FRW. This is due to the
geometry and matter parts of the diffeomorphism constraint vanishing
both separately in that case. Furthermore, all terms in the
coefficients that contain spatial derivatives applied to background
quantities vanish also. Other backgrounds where considerable
simplification will occur include Schwarzschild spacetime.

\section{Comparison with Other Frameworks}
\label{s8}

We now proceed to compare our work with other approaches to general
relativistic perturbation theory found in the literature. In the
following we will restrict ourselves to discussing works that treat
perturbation theory around general backgrounds. Approaches which
deal exclusively with cosmological perturbation theory will be
looked at in our second paper, specifically dedicated to that topic.

The central point of comparison is how gauge-invariance is handled
in the various approaches. As that notion often acquires different
meanings, especially in the context of general-relativistic
perturbation theory, it seems prudent to recall the precise
mathematical setting underlying most works. It was developed by
Sachs \cite{oliver1} and Stewart and Walker \cite{oliver2}, and
recently given a very general and elegant formulation by Bruni,
Sonego and collaborators \cite{oliver4,oliver5,oliver6}. The
starting point consists of two spacetime manifolds $M_0$ and $M$,
which represent the background spacetime around which one perturbs
and the actual physical spacetime, respectively. It is important to
keep in mind that $M_0$ is only an artificial construct.
Perturbations of geometric quantities are then defined by comparing
their values in $M_0$ and $M$, respectively. This procedure is
highly ambiguous, however, in that there is a great freedom inherent
in the choice of points where one compares background and "real"
quantities to define the perturbations. Note that this freedom is in
addition to the usual coordinate gauge freedom in general
relativity. For that reason, Sachs termed it gauge invariance of the
second type \cite{oliver1}. Making such a choice, which
mathematically amounts to choosing a so-called point identification
map between $M_0$ and $M$, is therefore nothing but a choice of
gauge. Correspondingly, gauge-invariant perturbations are those
quantities whose values do not depend on the choice of point
identification map. The actual condition for a perturbed quantity to
be gauge-invariant to first order in this sense was already derived
in \cite{oliver1}, fully proved in \cite{oliver2} and finally
generalized to arbitrary order $n$ in \cite{oliver4}. The result can
be succinctly summarised as follows: a geometric quantity $T$ - such
as a tensor - is invariant to order $n$ iff all its perturbations to
order $n-1$ are either vanishing, (spacetime) constant scalars or a
combination of Kronecker deltas. This result is often known as the
Stewart\&Walker lemma. Clearly, the only case of actual interest is
the first. An example is given by curvature tensors in linear
perturbation theory around Minkowski space. As they vanish in the
background, they are gauge-invariant to first order. These insights
have been the backbone of most attempts to construct gauge-invariant
quantities to various orders in perturbation theory or even
non-perturbatively. We will now discuss two of them.

In a series of papers \cite{oliver7,oliver8,oliver9} Nakamura has
used these principles to develop formulas for gauge-invariant
quantities to second and third order around an arbitrary background.
They encompass the invariant parts of various metric and curvature,
as well as matter perturbations. These general formulas are,
however, implicit only to the extent that Nakamura derives them from
the assumption that the corresponding linear order perturbations can
be decomposed into gauge-invariant and gauge-variant parts.
Consequently, while the construction is, in principle, valid for
arbitrary backgrounds, in practice only backgrounds with sufficient
symmetries to perform that split at linear order explicitly can be
used. Luckily, that applies, of course, to several cases of great
interest, such as cosmological and spherically symmetric
backgrounds. The latter case is explicitly treated in
\cite{oliver9}.

A distinctly different approach - let us call it the EB approach -
is based on seminal work by Ellis and Bruni \cite{oliver10}, which
has since inspired a multitude of other works
\cite{oliver11,oliver12,oliver13}. Although the discussions in these
papers is geared towards applications in cosmology, the framework
itself can be applied to arbitrary spacetimes, in principle, which
is why we decided to discuss it here. The basic idea is to use a
$1+3$-approach by employing covariant quantities connected to a
family of flow lines or "fundamental observers". The prime reason is
that these quantities are much more closely related to what one
actually measures in astrophysics. Furthermore, by a simple
application of the Stewart-Walker lemma, they are automatically
gauge-invariant if the corresponding quantities in the background
spacetime vanish. Unlike in the more common metric-based formalism,
these quantities are defined in the physical, perturbed spacetime.
As a result, they are fully non-perturbative and in that sense their
gauge-invariance extends to all orders. The connection to the
standard perturbative approach based on perturbations in the
background spacetime can be made by suitably expanding the physical
quantities to the desired order, see \cite{oliver11}. This approach
enjoys a clear geometric and physical interpretation of the
quantities used, as well as the advantage of basing perturbation
theory on non-perturbative variables.

Comparing the works mentioned so far to ours, a first obvious
difference is that we work in a canonical setting versus the
covariant setting used by the others. The motivation is first that
gauge issues become particularly clear in the canonical picture and
second our view towards quantization. The more important difference,
however, is our use of dust as a dynamically implemented coordinate
system. Our dust clocks serve a twofold purpose: on the one hand
they enable us to construct background observables and therefore to
solve the standard gauge problem in general relativity. On the other
hand they also serve as the point identification map and thus
eliminate the gauge freedom of "second type". In that sense they
represent a logical extension to perturbation theory of the initial
conceptual idea by Brown and Kuchar \cite{3} to use dust as a
physical and therefore preferred coordinate system.

We should also point out that while our framework employs the metric
and its perturbations as fundamental variables, we could equally
well use the dust clocks to build a gauge-invariant perturbation
theory based on the same variables used in the EB approach. In fact,
it seems worthwhile to look a bit more closely at the relationship
between the two approaches. Both are non-perturbative in the
following sense: they construct quantities which are gauge-invariant
(with respect to gauge transformations of the second type). Only
then perturbation theory is applied, which means that
gauge-invariance is then automatically guaranteed in each order of
perturbation. The difference arises when one looks at the role of
gauge transformations of the first kind. The EB approach uses
idealized observers that are comoving with the physical matter in
the model. Thus the theory is not deparametrized and gauge freedom
with respect to the background spacetime remains, as illustrated by
the presence of constraints as part of the equations of motion. In
our case, the observers are represented by the dust, a component
added to the physical matter content of the theory. They are thus
dynamically included in the theory via the dust contribution to the
Lagrangian, in addition to all the other matter. This allows for a
complete deparametrization of the non-dust system with respect to
the dust. Time evolution for this subsystem becomes unconstrained
and a physical Hamiltonian emerges. The price to pay for this is
that the dust contributes to the energy-momentum tensor of the
deparametrized system, the size of which is small, however. One
might well argue that, for practical purposes at the classical
level, the choice between the two approaches is a matter of taste.
Our approach, however, offers a clear advantage if one is interested
in quantization, eventually. All programs aiming at a quantization
of gravity that have been pursued, so far, use the metric as a
fundamental building block. It is not obvious to us how to attempt a
quantization based on the covariant variables used in the EB
approach.
\\
Another argument in favour of our framework is the following: 
A test observer which by definition does not have any impact is 
only a mathematical idealisation. Physically much more realistic is 
a dynamically coupled observer fluid like the dust considered here which 
in particular takes into account the gravitational backreaction.

To summarise: the crucial difference between our approach and the
others discussed here is that the latter deal only with gauge
freedom of the second type. This can be seen from the fact that they
use background variables which are not gauge-invariant and is
evidenced, e.g., by the presence of constraint equations in addition
to evolution equations. Our treatment, by contrast, deals with all
variables at all orders in a unified manner. Furthermore, in our
opinion the framework developed here allows for a much more
straightforward implementation at higher orders. Another advantage
is that it allows to treat arbitrary backgrounds in practice,
without the high degree of symmetry necessary for approaches based
on the Stewart-Walker lemma to work. Recall that the latter require
finding non-trivial quantities that vanish in the background
manifold. Only in symmetric backgrounds such as homogeneous
spacetimes is that a fairly tractable problem.

Finally, we should briefly discuss the recent work in \cite{18},
which is close to ours in terms of motivation and conceptual
underpinnings. The authors there also use the general gauge
invariant framework of \cite{6}, however, with two differences:
First of all, they do not use dust matter to achieve gauge invariant
completions of geometry and matter variables. This prevents them
from bringing the constraints into a deparametrised form \cite{8}
and thus there is no time independent physical Hamiltonian.
Secondly, while they can develop higher order cosmological
perturbation theory, their perturbations of gauge invariant
quantities are still expanded in terms of the perturbations of the
the gauge variant quantities which is what we never do. Therefore
the basic perturbation variables are different in the two schemes:
In our scheme we never care how our gauge invariant variables are
assembled from gauge variant ones, they and their perturbations are
fundamental for us and n'th order quantities are n'th order
expressions in those. In contrast, in \cite{18} n'th order means n'th
order in the gauge variant quantities. In particular, the n'th order
perturbed variables are only invariant with respect to the n'th
order constraints up to terms of higher order. In contrast, our
perturbed variables are always first order and always fully gauge
invariant, it is only in the Hamiltonian that higher orders of gauge
invariant variables appear. It would no doubt be fruitful to
translate the schemes into each other and to see which differences
and
similarities arise.

\section{Conclusions and Open Questions}
\label{s9}

This is a long and technically involved paper. The reader rightfully 
will ask why one should dive into its details and what exactly is novel 
as compared to the existing literature. The following remarks are in 
order: 
\begin{itemize}
\item[1.] {\it Non perturbative gauge invariance}\\
To the best of our knowledge, there exists no generally accepted 
notion of gauge invariance at n-th order of perturbation theory 
in General Relativity. Moreover, at each order of perturbation 
theory one has to repeat the analysis for how to preserve gauge 
invariance to the given order.
Given those difficulties, it is natural to try to invent a scheme 
which separates the issue of gauge invariance from the perturbation 
theory. Hence, one must treat gauge invariance non perturbatively.
This is exactly what we managed to do in this paper.
 
Thus, one only deals with the exact observables of the theory. 
All the equations of the theory have to be written in terms of those 
gauge invariant quantities. Given such a gauge invariant function $O$
on the full phase space, we evaluate it on a certain background 
(data) which 
is an exact solution to our equations of motion and get a certain value 
$O_0$. The perturbation of $O$ is then defined as $\delta O=O-O_0$. We 
never care to expand $\delta O$ in terms of the perturbations of the 
gauge variant degrees of freedom (although we could). However,
all the equations are 
expanded directly in terms of the perturbations of those physical 
observables. 
\item[2.] {\it Material reference system}\\
In General Relativity it is well known that in order to meaningfully talk 
about the Einstein equations and to have them describe something 
observable or measurable, one has to suppose that spacetime is inhabited 
by (geodesic) test observers. By definition, a test observer has no 
effect whatsoever on the system. This is of course mathematically 
convenient but physically worrysome because a test observer is a 
mathematical idealisation. Every real observer interacts at least 
gravitationally and does leave its fingerprint on the system. One of the 
achievements of the seminal work \cite{3} of Brown \& Kucha{\v r}, which 
in 
our mind is insufficiently appreciated in the literature, is to have 
overcome this shortcoming. The authors of \cite{3} have identified a 
generally covariant
Lagrangian which comes as close as possible to describing a non -- self 
interacting, perfect and geodesically moving fluid that fills out 
spacetime (congruence). It does leave its fingerprint on the system and 
thus is 
physically more realistic than the test observer fluid.

In this paper we have driven the work of \cite{3} to its logical 
frontier and have asked the question whether the dust when added to 
the geometry -- matter system really accomplishes the goal of keeping the 
approximate validity of the usual interpretation of the Einstein 
equations. We have verified that it does which in our mind is an
intriguing result. 
\item[3.] {\it Solving the Problem of Time}\\
Since General Relativity is a generally covariant or reparametrisation 
invariant theory, it is not equipped with a natural Hamiltonian. Rather,
the ``dynamics'' of the non observables is described by a linear
combination of constraints which really generate gauge transformations
rather than physical evolution.
Observable quantities are gauge invariant and therefore do not evolve with 
respect to the ``gauge dynamics''. Therefore it is conceptually unclear 
what to do with those observables. The achievement of \cite{5,6} is to 
have invented a scheme that in principle unfreezes the observables from 
their non motion. However, in general that physical motion is far from 
uniquely selected, there are in general infinitely many such physical 
notions of time and none of them is preferred. Moreover, the associated 
Hamiltonians are generically neither preserved nor positive or at least 
bounded from below.

When combining the frameworks of \cite{3} and \cite{5,6} we find the 
remarkable result that there is a preferred Hamiltonian which is 
manifestly positive, not explicitly dependent on physical time and gauge 
invariant. It maps a conceptually complicated gauge systems into the 
safe realm of a conservative Hamiltonian system. The physical 
Hamiltonian drives the evolution 
of 
the physical observables. It reproduces the Einstein equations for the 
gauge invariant observables up to corrections which describe the influence 
of the dust. 
\item[4.] {\it Counting of the Physical Degrees of Freedom}\\
The price to pay is that one has to assume the existence of the 
dust as an additional matter species next to those of the standard 
model\footnote{
Curiously, what we have done in this paper bears some resemblance with the 
Stueckelberg formalism. There one also adds 
additional matter to Maxwell theory. One can then make the 
longitudinal mode gauge invariant and thus finds a theory with one 
more degree of freedom. This is one way to arrive at the Proca theory
and more generally at massive vector boson theories via the Higgs 
mechanism.}.
It would maybe be more desirable to have matter species of the standard 
model or geometry modes playing the role of a dynamical test 
observer\footnote{This would be similar to technicolour theories 
which declare the Higgs scalar field not as an independent degree of 
freedom but as a compound object built from the bosons of the electroweak
theory. Here one would build four independent scalars e.g. from the 
geometry field.}
. In principle this is possible, however, the resulting formalism is 
much more complicated and it does not lead to 
deparametrisation. Thus a conserved physical Hamiltonian would then not 
be available and the equations of motion would become intractable. 

It is true 
that the dust variables disappear in the final description of the 
observables which are complicated aggregates built out of all fields.
However, the theory has fundamentally four more physical degrees of 
freedom than without dust and that might eventually rule out our theory
if those additional degrees of freedom are not observed. 

The truly remarkable feature of the dust is that
it replaces the initial value constraints of General Relativity which are 
reponsible for having less physical degrees of freedom than one would 
naively expect, by four conservation laws. That is, in any given solution 
of our equations of motion, the physical observables must physically 
evolve with respect to each other in such a way that the conserved 
quantities do not change. This effectively acts like a constraint and 
therefore reduces the number of independently evolving observables 
by four, in agreement with the counting of the degrees of freedom 
without dust. Thus, at least as long as the value of those conserved 
quantities is sufficiently small, we will not be able to see those 
additional degrees of freedom. It is this fact which makes it possible 
that one effectively does not see more degrees of freedom than in the 
standard treatment.

As a final objection against our formalism one might raise the fact that 
the dust contributes 
with the wrong sign to the matter energy momentum tensor. However, the 
formalism not only forces us to do this as we would otherwise have a 
negative definite Hamiltonian, moreover, as already remarked in the 
introduction, it is completely acceptable since here we talk about 
the energy momentum tensor of non observables. The energy momentum tensor 
of the observables in the final Einstein equations does satisfy the 
usual 
energy conditions.
\item[5.] {\it Complexity of the Equations of Motion}\\
Since General Relativity is a highly non linear, complicated self -- 
interacting theory, experience from much less complicated integrable 
systems suggests that its Invariants, that is, the 
gauge invariant observables, satisfy a tremendously complicated Poisson 
algebra and that the equations of motion are intractable. Surprisingly, 
this is not at all the case. The observable algebra is almost as simple
as the algebra of non observables and the equations of motion can be 
solved almost as easily as in the usual gauge variant formalism.
Key to that is the presence of the already mentioned conserved current.
\end{itemize}
In this first paper we have developed the general gauge invariant 
formalism 
and linear perturbation theory about general backgrounds. In the companion
paper we apply the general results to flat and FRW backgrounds and find 
agreement with usual linear perturbation theory for linearly invariant 
observables. This is a first consistency test that our theory has passed.  
Thus we hope to have convinced the reader that the present framework has 
conceptual advantages over previous ones and that it is nonetheless 
technically not much more complicated. Actually, the pay -- off for having a 
manifestly gauge invariant approach will really come in at higher order 
where we believe that our equations of motion will be simpler. 

There are many lines of investigations that one can follow from
here. An obvious one, the specialization to the all-important case
of an FRW background is presented in a companion paper as already 
mentioned.
Investigating perturbations around backgrounds of astrophysical
interest, such as Schwarzschild spacetime, is also valuable. Again, for 
all
these cases it should also prove very interesting to go beyond the
linear to higher orders. Predictions from second-order perturbation
theory, e.g. the issue of non-Gaussianity in cosmological
perturbations, are the topic of current research and could soon be
testable by future experiments such as PLANCK, see \cite{oliver20}.
On a more technical and conceptual level, our framework might prove
useful to settle the issue of under what conditions
general-relativistic perturbation theory is consistent and stable.
Finally, with a view towards facilitating the search for physically
relevant predictions from approaches to quantum gravity, a
quantization of our gauge-invariant formulation of general
relativity, together with the development of the corresponding
perturbation theory at the quantum level strikes us as a highly
desirable goal. See \cite{17} for first steps.\\
\\
\\
\\
{\large\bf Acknowledgements}\\
\\
K.G. is grateful to Perimeter Institute for hospitality and
financial support. Research performed at Perimeter Institute
Theoretical Physics is supported in part by the Government of Canada
through NSERC and by the Province of Ontario through MRI. O.W. was
partially supported by the Province of Ontario through an ERA award,
ER06-02-298.

\begin{appendix}

\section{Second Class Constraints of the Brown -- Kucha{\v r} Theory}
\label{sa}

 In this section we provide the calculational details of the constraint 
 analysis of the Brown -- Kucha{\v r} theory discussed in section 
 \ref{s2}. In particular we want to show that no tertiary constraints 
 arise. Our starting point is equation (\ref{2.21}) which  
we display once again below for the convenience of the reader
\ba
\label{2ndCons}
z_{,t} &=& \{H_{\rm primary},p\}=-c^{\rm tot}
\nonumber\\
z_{a,t} &=& \{H_{\rm primary},p_a\}=-c_a^{\rm tot}
\nonumber\\
Z_{,t} &=& 
\{H_{\rm primary},I\}=\frac{n}{2}[-\frac{P^2}{\sqrt{\det(q)}\rho^2}
+\sqrt{\det(q)}(q^{ab} U_a U_b+1)]=:\tilde{c}
\nonumber\\
Z^j_{,t} &=& \{H_{\rm primary},I^j\}=-\mu^j P -n\rho \sqrt{\det(q)} 
q^{ab} U_a S^j_{,b} + P S^j_{,a} n^a
\nonumber\\
Z_{j,t} &=& \{H_{\rm primary},P_j+W_j P\}=\mu_j P - (n^a-\frac{n \rho 
\sqrt{\det(q)}}{P} q^{ab} U_b) P W_{j,a} 
\ea
\\
\\
 The last two equations  involve the Lagrange multipliers $\mu^j$ and 
$\mu_j$ and can be solved for them. In contrast we observe that the first three
equation are independent of the 
 Lagrange multipliers, they are secondary constraints.
 We will now proceed with the constraint analysis and show that when the 
 Poisson brackets between the primary Hamiltonian $H_{\rm primary}$ and the 
secondary constraints are consideres no new constraints are generated.
Recall that the primary Hamiltonian density was given by
\ba
 h_{\rm primary} &=& \mu^j Z_j+\mu Z+\mu_j Z^j+\nu z+\nu^a z_a+n^{\prime} c^{\rm
tot}+n^{\prime a} 
c^{\rm tot}_a
\ea
whereby the single constraints are shown below
\ba 
\label{AllCons}
c^{\rm tot} &=& c^{\rm geo}+c^{matter}+c^{\rm D} \nonumber\\
c_a^{\rm tot} &=& c^{\rm geo}_a+c^{matter}_a+c^{\rm D}_a \nonumber\\
\kappa c^{\rm geo} &=& \frac{1}{\sqrt{\det(q)}}[P^{ab} P_{ab} -\frac{1}{2} 
(P_a^a)^2]-\sqrt{\det(q)}\; R^{(3)}+2\Lambda \sqrt{\det(q)} 
\nonumber\\
\lambda c^{matter} 
&=& \frac{1}{2}[\frac{\pi^2}{\sqrt{\det(q)}}+\sqrt{\det(q)}\big(q^{ab}\xi_{,a} 
\xi_{,b}+v(\xi)\big)]
\nonumber\\
c^{\rm D} &=& 
\frac{1}{2}[\frac{P^2}{\sqrt{\det(q)}\rho}+\sqrt{\det(q)}\rho(q^{ab} U_a 
U_b+1)] \nonumber\\
\kappa c_a^{\rm geo} &=& -2D_b P^b_a
\nonumber\\
\lambda c_a^{matter} &=& \pi \xi_{,a}
\nonumber\\
c^{\rm D}_a &=& P[T_{,a}-W_j S^j_{,a}]
\ea 
 We begin with the calculation of the Poisson bracket of $H_{\rm primary}$ 
and the smeared constraint 
\be
\vec{c}^{\rm tot}(\vec{n}):=\int \;d^3x \;n^a(x) c_a^{\rm tot}(x)
\ee
\ba
\label{Hprimcatot}
\{H_{\rm primary},\vec{c}^{\rm tot}(\vec{n})\}
&=&
\int d^3x\int d^3y\Big(
\{\mu^j(y)Z_j(y),n^a(x) c_a^{\rm tot}(x)\}
+\{\mu(y)Z(y),n^a(x) c_a^{\rm tot}(x)\}\nonumber\\
&&
+\{\mu_j(y)Z^j(y),n^a(x) c_a^{\rm tot}(x)\}
+\{\nu(y)z(y),n^a(x) c_a^{\rm tot}(x)\}
+\{\nu^b(y)z_b(y),n^a(x) c_a^{\rm tot}(x)\}\nonumber\\
&&
+\{n^{\prime}(y)c^{\rm tot}(y),n^a(x) c_a^{\rm tot}(x)\}
+\{n^{\prime b}(y)c^{\rm tot}_b(y),n^a(x) c_a^{\rm tot}(x)\}\Big)
\ea
 For the single Poisson brackets that occur in the equation above we 
obtain the following result:
\ba
 \int d^3x\int d^3y\{\mu^j(y)Z_j(y),n^a(x) c_a^{\rm tot}(x)\}&=&\int d^3x 
n^aP\mu^jW_{j,a}
\\
\int d^3x\int d^3y\{\mu(y)Z(y),n^a(x) c_a^{\rm tot}(x)\}&=&0\nonumber\\
 \int d^3x\int d^3y\{\mu_j(y)Z^j(y),n^a(x) c_a^{\rm tot}(x)\}&=&-\int d^3x 
P\mu_jS^j_{,a}n^a\nonumber\\
\int d^3x\int d^3y\{\nu(y)z(y),n^a(x) c_a^{\rm tot}(x)\}&=&0\nonumber\\
 \int d^3x\int d^3y\{\nu^b(y)z_b(y),n^a(x) 
c_a^{\rm tot}(x)\}&=&\vec{c}^{\rm tot}(\vec{\nu})
\nonumber\\
 \int d^3x\int d^3y\{n^{\prime}(y)c^{\rm tot}(y),n^a(x) 
c_a^{\rm tot}(x)\}&=&c^{\rm tot}({\cal L}_{\vec{n}}n^{\prime})
+\tilde{c}({\cal L}_{\vec n^{\prime}}\rho)
 +\int d^3x \rho\, n^{\prime}\, n^a 
\sqrt{\det(q)}q^{bc}U_b\big(W_{j,a}S^j_c-W_{j,c}S^j_{,a}\big)
\nonumber\\
\int d^3x\int d^3y\{n^{\prime b}(y)c^{\rm tot}_b(y),n^a(x) c_a^{\rm tot}(x)\}&=&
 \vec{c}({\cal L}_{\vec{n}}\vec{n}^{\prime})-\int 
d^3x\big(n^{\prime a}n^b - n^an^{\prime b}\big)PS^j_{,a}W_{j,b}
\nonumber
\ea
 Consequently we can rewrite equation (\ref{Hprimcatot}) as
\ba
\{H_{\rm primary},\vec{c}^{\rm tot}(\vec{n})\}
&=&
c^{\rm tot}({\cal L}_{\vec{n}}n^{\prime})+\tilde{c}({\cal L}_{\vec{n}}\rho)
 +\int d^3x \rho\, n\, n^a 
\sqrt{\det(q)}q^{bc}U_b\big(W_{j,a}S^j_c-w_{j,c}S^j_a\big)\nonumber\\
&&
 -\int d^3x \mu_jPS^j_an^a +\int d^3x \mu^jPw_{j,a}n^a +\vec{c}({\cal 
L}_{\vec{n}}\vec{n}^{\prime})+\vec{c}^{\rm tot}(\vec{\nu})\nonumber\\
&&
-\int d^3x \big(n^{\prime b}n^a-n^{\prime a}n^b\big)PW_{j,a}S^j_{,b}\nonumber\\
&\approx&
 \int d^3x\, n^{a}\Big(S^j_a\big(Pn^{\prime b}W_{j,b}-\rho\, 
n\,\sqrt{\det(q)}q^{bc}U_bw_{j,c}-\mu_jP\big)\nonumber\\
&&\quad\quad
 +W_{j,a}\big(-Pn^{\prime b}S^j_{,b}+\rho\, n\, 
\sqrt{\det(q)}q^{bc}U_bS^j_{,c}+\mu^jP\big)\Big)
\ea
 Hence, the result above involves the Langrange multipliers $\mu_j$ and 
 $\mu^j$ and can be solved for them such that no new constraints arise 
from $c_a^{\rm tot}$. 
Proceeding with $c^{\rm tot}$ whereby the smeared constraint is given by
\be
c^{\rm tot}(n):=\int d^3x n(x) c^{\rm tot}(x)
\ee
Thus we get
\ba
\label{Hprimc}
\{H_{\rm primary},c^{\rm tot}(n)\}
&=&
\int d^3x\int d^3y\Big(
\{\mu^j(y)Z_j(y),n(x) c^{\rm tot}(x)\}
+\{\mu(y)Z(y),n(x) \tilde{c}(x)\}\nonumber\\
&&
+\{\mu_j(y)Z^j(y),n(x) c^{\rm tot}(x)\}
+\{\nu(y)z(y),n(x) c^{\rm tot}(x)\}
+\{\nu^b(y)z_b(y),n(x) c^{\rm tot}(x)\}\nonumber\\
&&
+\{n^{\prime}(y)c^{\rm tot}(y),n(x) c^{\rm tot}(x)\}
+\{n^{\prime b}(y)c^{\rm tot}_b(y),n(x) c^{\rm tot}(x)\}\Big)
\ea
 For the single Poisson brackets that occur in the equation above we 
obtain the following result
\ba
 \int d^3x\int d^3y\{\mu^j(y)Z_j(y),n(x) c^{\rm tot}(x)\}&=&-\int 
d^3x\sqrt{\det(q)} n\, \rho\, \mu^jq^{ab}U_bW_{j,a}
\\
 \int d^3x\int d^3y\{\mu(y)Z(y),n(x) c^{\rm tot}(x)\}&=&\int d^3x 
\mu\frac{n}{n^{\prime}}\tilde{c}\nonumber\\
 \int d^3x\int d^3y\{\mu_j(y)Z^j(y),n(x) c^{\rm tot}(x)\}&=&\int d^3x 
\mu_{j}\rho\, n\sqrt{\det(q)}q^{bc}U_bS^j_{,c}
\nonumber\\
\int d^3x\int d^3y\{\nu(y)z(y),n(x) c^{\rm tot}(x)\}&=&c^{\rm
tot}(\nu)\nonumber\\
\int d^3x\int d^3y\{\nu^b(y)z_b(y),n(x) c^{\rm tot}(x)\}&=&0
\nonumber\\
 \int d^3x\int d^3y\{n^{\prime}(y)c^{\rm tot}(y),n(x) 
c^{\rm tot}(x)\}&=&\vec{c}^{\rm tot}(q^{-1}[n^{\prime}dn-ndn^{\prime}])
\nonumber\\
\int d^3x\int d^3y\{n^{\prime b}(y)c^{\rm tot}_b(y), n(x) c^{\rm tot}(x)\}&=&
-c^{\rm tot}({\cal L}_{\vec{n}^{\prime}}n)
-\tilde{c}({\cal L}_{\vec n^{\prime}}\rho)\nonumber\\
&&
 -\int d^3x \rho\, n\, n^{\prime a} 
\sqrt{\det(q)}q^{bc}U_b\big(W_{j,a}S^j_c-W_{j,c}S^j_{,a}\big)
\nonumber
\ea
Reinserting these results into equation (\ref{Hprimc}) yields
\ba
\{H_{\rm primary},c^{\rm tot}(n)\}
 &=&-\int d^3x\sqrt{\det(q)} n \rho \mu^jq^{ab}U_bW_{j,a}+\int d^3x 
\mu\frac{n}{n^{\prime}}\tilde{c}
+\int d^3x \rho\, n\, \mu_j\sqrt{\det(q)}q^{bc}U_bS^j_{,c}
\nonumber\\
&&
+c^{\rm tot}(\nu)
+\vec{c}^{\rm tot}[q^{-1}(n^{\prime}dn-ndn^{\prime}])
-c^{\rm tot}({\cal L}_{\vec{n}^{\prime}}n)
-\tilde{c}({\cal L}_{\vec n^{\prime}}\rho)
\nonumber\\
&&
-\int d^3x \rho\, n\, n^{\prime a} 
\sqrt{\det(q)}q^{bc}U_b\big(W_{j,a}S^j_c-W_{j,c}S^j_{,a}\big)
\nonumber\\
&\approx&
 \int d^3x \sqrt{\det(q)}n \rho q^{bc}U_b\Big(
 W_{j,c}\big(n^{\prime a}S^j_{,a}-\mu^j-\frac{\rho\, 
n^{\prime}}{P}\sqrt{\det(q)}q^{de}U_dS^j_{,e}\big)\nonumber\\
&&
\quad\quad\quad\quad\quad\quad\quad\quad\quad\quad\quad
 S^j_{,c}\big(\mu_j-W_{j,a}n^{\prime a}+\frac{\rho\, 
n^{\prime}}{P}q^{de}U_d W_{j,e}\big)\Big)
\nonumber\\
&&
+\int d^3x\sqrt{\det(q)}n \rho q^{bc}U_b\frac{\rho
n^{\prime}}{P}\sqrt{\det(q)}q^{de}U_d
\Big(W_{j,c}S^j_{,e}-W_{j,e}S^j_{,c}\Big)\nonumber\\
&=&
 \int d^3x \sqrt{\det(q)}n \rho q^{bc}U_b\Big(
 W_{j,c}\big(n^{\prime a}S^j_{,a}-\mu^j-\frac{\rho\, 
n^{\prime}}{P}\sqrt{\det(q)}q^{de}U_dS^j_{,e}\big)\nonumber\\
&&
\quad\quad\quad\quad\quad\quad\quad\quad\quad\quad\quad
 S^j_{,c}\big(\mu_j-W_{j,a}n^{\prime a}+\frac{\rho\, 
n^{\prime}}{P}q^{de}U_d W_{j,e}\big)\Big)
\ea
 Here we used in the last step that the last integral in the line  
before the 
 last line one vanishes,  because $W_{j,c}S^j_{,e}-W_{j,e}S^j_{,c}$ is 
 antisymmetric in $e,c$ and multiplied by $q^{bc}q^{de}U_bU_d$ which is 
 symmetric in the indeces $c,e$.
\\
These are again the equation involving the Lagrange multipliers that we have
seen before in the calculations for $c^{\rm tot}_a$.
\\
\\
Finally, let us consider the Poisson bracket of $H_{\rm primary}$ and the 
secondary constraint $\tilde{c}$ whose smeared version is given by
\be
\tilde{c}(u):=\int d^3x u(x) \tilde{c}(x)
\ee
where $u$ is an appropiate smearing function. We obtain
\ba
\label{Hprimctilde}
\{H_{\rm primary},\tilde{c}(u)\}
&=&
\int d^3x\int d^3y\Big(
\{\mu^j(y)Z_j(y),u(x) \tilde{c}(x)\}
+\{\mu(y)Z(y),u(x) \tilde{c}(x)\}\nonumber\\
&&
+\{\mu_j(y)Z^j(y),u(x) \tilde{c}(x)\}
+\{\nu(y)z(y),u(x) \tilde{c}(x)\}
+\{\nu^b(y)z_b(y),u(x) \tilde{c}(x)\}\nonumber\\
&&
+\{n^{\prime}(y)c^{\rm tot}(y),u(x) \tilde{c}(x)\}
+\{n^{\prime b}(y)c^{\rm tot}_b(y),u(x) \tilde{c}(x)\}\Big)
\ea
In this case we do not need to compute all the individual Poisson bracket in
order to convince ourselves that no constraints arise, because the Poisson
bracket of $Z(\mu)$ and $\tilde{c}(u)$ yields
\ba
 \int d^3x\int d^3y\{\mu(y)Z(y),u(x) \tilde{c}(x)\}&=&\int d^3x \mu\, 
u\frac{nP^2}{\rho^3\sqrt{\det(q)}}
\ea
which is a new term invloving the Lagrange multiplier $\mu$. 
Thus, we can solve the equation $\{H_{\rm primary},\tilde{c}(u)\}=0$ for $\mu$. 
\\
It follows that no new terms are produced not involving $\mu^j,\mu_j,\mu$ in
this second iteration step. Consequently,
the full set of (primary and secondary) 
constraints is given by 
$c^{\rm tot}, c_a^{\rm tot}, \tilde{c}, Z_j,Z^j, Z, z_a$ and $z$ and it remains
to
classify them into first and second class.
Obviously,
\ba
\label{PBZZ}
\{Z^j(x),Z_k(y)\} &=& P\;\delta^j_k \; \delta(x,y) \; ,
\nonumber\\
\{Z(x),\tilde{c}(y)\} &=& \frac{\frac{n P^2}{\rho^3}}{\sqrt{\det(q)}}
\;\delta(x,y) \; ,
\ea
does not vanish on the constraint surface defined by the final set of
constraints, hence they are second class constraints. 
Since $n$ appears only linearly in $\tilde{c}$ and $n^a$ does not appear at 
all, it follows that $z,z_a$ are first class.
\\
Let us consider the linear combination 
\ba
\tilde{c}^{\rm tot}_a &\equiv& I\; \rho_{,a}+I^j \; W_{j,a}+P\;T_{,a} +P_j
\;S^j_{,a}
+p\;n_{,a}+{\cal L}_{\vec{n}} \; p_a+c_a
\nonumber\\
&=& c^{\rm tot}_a + Z \; \rho_{,a}+Z^j \; W_{j,a}+Z_j\; S^j_{,a}+z\; n_{,a}
+{\cal L}_{\vec{n}} \; z_a
\ea
where
\be
c_a \equiv c^{\rm geo}_a+c^{\rm matter}_a
\ee
is the non-dust contribution to the spatial diffeomorphism constraint
$c^{\rm tot}_a$. Since all constraints are scalar or covector densities of
weight one and $\tilde{c}^{\rm tot}_a$ is the generator of spatial
diffeomorphisms, it follows that $\tilde{c}^{\rm tot}_a$ is first class.
Finally,  we consider as an Ansatz the linear combination
\be 
\tilde{c}^{\rm tot} \equiv c^{\rm tot}+\alpha^j \;Z_j+\alpha_j\; Z^j+\alpha \;
Z\; ,
\ee
and determine the phase space functions $\alpha^j,\alpha_j,\;\alpha$
such that $\tilde{c}^{\rm tot}$ has vanishing Poisson brackets with
$Z_j,Z^j,Z$ up to terms proportional to $Z_j, Z^j, Z$.
\\
We have
\ba
\{\tilde{c}^{\rm tot}(x),Z_j(y)\}&=&\{c^{\rm
tot}(x),Z_j(y)\}+\alpha_k(x)\{Z^k(x),Z_k(y)\}\nonumber\\
&=&\{c^{\rm tot}(x),Z_j(y)\}+\alpha_j(x)P(x)\delta^3(x,y).
\ea
where we used equation (\ref{PBZZ}) in the last line. Solving this equation for
$\alpha_j$ we end up with
\be
\alpha_j(x)=-\int d^3y \frac{1}{P(y)}\{c^{\rm tot}(x),Z_j(y)\}
=\left(\frac{1}{P}\sqrt{\det(q)} \, \rho\, q^{ab}U_bW_{j,a}\right)(x)
\ee
which is a sensible expression since $\{c^{\rm tot}(x),Z_k(y)\}\sim
\delta^3(x,y)$.
For the Poisson bracket involving $Z^j$ we get
\ba
\{\tilde{c}^{\rm tot}(x),Z^j(y)\}&=&\{c^{\rm
tot}(x),Z^j(y)\}+\alpha^k(x)\{Z_k(x),Z^j(y)\}\nonumber\\
&=&\{c^{\rm tot}(x),Z_j(y)\}-\alpha^j(x)P(x)\delta^3(x,y)
\ea
such that this equation can be solved for $\alpha^j$ explicitly given by
\be
\alpha^j(x)=\int d^3y \frac{1}{P(y)}\{c^{\rm tot}(x),Z^j(y)\}
=\left(\frac{1}{P}\rho\sqrt{\det(q)}q^{bc}U_bS^j_{,c}\right)(x).
\ee
Finally, for $Z$ we obtain
\ba
\label{ctotctilde}
\{\tilde{c}^{\rm tot}(x),Z(y)\}&=&\{c^{\rm tot}(x),Z(y)\}\sim \tilde{c}\approx 0
\ea
Hence this Poisson bracket vanishes weakly. Considering now the Poisson bracket
between $\tilde{c}$ and $\tilde{c}^{\rm tot}$ we get
\ba
\{\tilde{c}^{\rm tot}(x),\tilde{c}(y)\}&=&\{c^{\rm tot}(x),\tilde{c}(y)\}
+\alpha^j(x)\{Z_j(x),\tilde{c}(y)\}+\alpha_j(x)\{Z^j(x),\tilde{c}(y)\}
+\alpha(x)\{Z(x),\tilde{c}(y)\}\nonumber\\
\ea
The results of the last three individual Poisson brackets occurring above are
listed below
\ba
\{Z_j(x),\tilde{c}(y)\}&=&\big(n\sqrt{\det(q)}q^{ab}U_aW_jP\big)(y)\frac{
\partial}{\partial y^b}\delta^3(x,y)\\
\{Z^j(x),\tilde{c}(y)\}&=&-\big(n\sqrt{\det(q)}q^{ab}U_aPS^j_{,b}
\big)(y)\delta^3(x,y)\nonumber\\
\{Z(x),\tilde{c}(y)\}&=&\left(\frac{nP^2}{\rho^3\sqrt{\det(q)}}
\right)(y)\delta^3(x,y)\nonumber
\ea
Now, we can solve equation (\ref{ctotctilde}) for $\alpha$ which yields
\ba
\alpha(x)&=&
-\int d^3y\frac{\rho^3\sqrt{\det(q)}}{nP^2}(y)\{c^{\rm tot}(x),\tilde{c}(y)\}\\
&&-\left(\frac{\rho^3\sqrt{\det(q)}}{nP^2}\big[n\sqrt{\det(q)}q^{ab}
U_aW_jP\big)\big]_{,b}
\frac{1}{P}\rho\sqrt{\det(q)}q^{bc}U_bS^j_{,c}\right)(x)
\nonumber\\
&&-\left(\frac{\rho^3\sqrt{\det(q)}}{nP^2}\big(n\sqrt{\det(q)}q^{ab}U_aPS^j_{,b}
\big)
\frac{1}{P}\sqrt{\det(q)} \, \rho\, q^{ab}U_bW_{j,a}\right)(x)
\ea
Here we reinserted the expressions for $\alpha_j$ and $\alpha^j$ derived before.
The final step which includes the construction of the Dirac bracket can again be
found in the main text.

\section{Comparison with Symplectic reduction}
\label{s3.2}

The spatial diffeomorphism invariant quantities  
\be
\label{CoordPS}
\left(\tilde{\xi}(\sigma),\tilde{\pi}(\sigma)\right)\;, \; 
\left(\tilde{T}(\sigma),
\tilde{P}(\sigma)\right)\;, \; \left(\tilde{q}_{ij}(\sigma),
\tilde{p}^{ij}(\sigma)\right)\; 
\ee
shown in equation (\ref{3.27})
are also obtained in \cite{3} through symplectic reduction which is an
alternative method to show that the pairs 
in (\ref{3.27}) are conjugate. 
\\
To see how this works, we compute
\ba \label{3.28}
\frac{d}{dt}\widetilde{T}(\sigma)
&=& 
\frac{d}{dt} \int_{{\cal X}} \; d^3x\;  
\det(\partial S/\partial x)\; \delta(S(x),\sigma) T(x)
\nonumber\\
&=& 
\int_{{\cal X}} \; d^3x\; \det(\partial S/\partial x)\; \left( 
\delta(S(x),\sigma)\; \left[\frac{d}{dt} T(x)\right]
+S^a_j(x)\; \left[\frac{d}{dt} S^j_{,a}(x)\right]\;  
\delta(S(x),\sigma)\; T(x)\right.
\nonumber\\
&&\left. 
\hspace{4cm}
+\left[\frac{d}{dt} S^j(x)\right]\; 
\left[\frac{\partial\delta(\sigma',\sigma)}{\partial\sigma^{j\prime}}\right]_{
\sigma'=S(x)}
T(x)\right)
\nonumber\\
&=& 
\left[\frac{d}{dt} T(x)\right]_{S(x)=\sigma} +
\int_{{\cal X}} \; d^3x\; \det(\partial S/\partial x)\; 
\left[\frac{d}{dt} S^j(x)\right]\; 
\left(-S^a_j(x) \partial_a \left[\delta(S(x),\sigma)\; T(x)\right]\right.
\nonumber\\
&& \left.
\hspace{4cm}
+\left[\frac\partial{\delta(\sigma',\sigma)}{\partial\sigma^{j\prime}}\right]_{
\sigma'=S(x)}\; 
T(x)\right)
\nonumber\\
&=&
\left[\frac{d}{dt} T(x)\right]_{S(x)=\sigma} -
\int_{{\cal X}} \; d^3x\; \det(\partial S/\partial x)\; 
\delta(S(x),\sigma)\; \left[\frac{d}{dt} S^j(x)\right]\; S^a_j(x) T_{,a}(x)
\nonumber\\
&=&
\left[\frac{d}{dt} T(x)-(\frac{d}{dt} S^j(x))\; S^a_j(x)
T_{,a}(x)\right]_{S(x)=\sigma} 
\ea
where we have used $\partial_a [S^a_j \det(\partial S/\partial x)]=0$.
Exactly the same calculation reveals
\ba \label{3.29}
\frac{d}{dt}\tilde{\xi}(\sigma)
& = &\left[\frac{d}{dt} \xi(x)-\left(\frac{d}{dt} S^j(x)\right)\; S^a_j(x) 
\xi_{,a}(x)\right]_{S(x)=\sigma} 
\nonumber\\
\frac{d}{dt}\tilde{q}_{jk}(\sigma)
& = &\left[\frac{d}{dt} q_{jk}(x)-\left(\frac{d}{dt} S^l(x)\right)\; S^a_l(x) 
q_{jk,a}(x)\right]_{S(x)=\sigma} 
\ea
Using (\ref{3.28}) and (\ref{3.29}) we can now rewrite the symplectic 
potential in terms of the spatially 
diffeomorphism invariant variables as follows 
$\dot{(.)}:=\frac{d}{dt}(.)$ and $J=\det(\partial S/\partial x)$)
\ba \label{3.30}
\Theta &=& \int_{{\cal X}}\;  d^3x 
\; \left[\dot{\xi}\;\pi+\dot{T}\;P+\dot{S}^j\;P_j+
\dot{q}_{ab}\;p^{ab}\right]
\nonumber\\
&=& \int_{{\cal X}}\; d^3x \; 
\left[\dot{\xi}\;\pi+\dot{T}\;P+\dot{S}^j\;P_j+
\left(\frac{d}{dt}\big(q_{jk}\; S^j_{,a} \;S^k_{,b}\big)\right)\;p^{ab}\right]
\nonumber\\
&=& \int_{{\cal X}} \;d^3x \; 
\left[\dot{\xi}\;\pi+\dot{T}\;P+\dot{S}^j\;P_j+
\dot{q}_{jk} (S^j_{,a}\; S^k_{,b}\;p^{ab})
+2q_{jk} \dot{S}^j_{,a} S^k_{,b}\;\;p^{ab}\right]
\nonumber\\
&=& \int_{{\cal X}} \;d^3x \; 
\left[\dot{\xi}\;\pi+\dot{T}\;P+\dot{S}^j\;P_j+
\dot{q}_{jk} (S^j_{,a}\; S^k_{,b}\;p^{ab})
-2 \dot{S}^j \partial_a(q_{jk}  S^k_{,b}\;\;p^{ab})\right]
\nonumber\\
&=& \int_{{\cal X}} \; J\;d^3x \; \; 
\left[\dot{\xi}\;\frac{\pi}{J}+\dot{T}\;\frac{P}{J}
+\dot{q}_{jk} \frac{S^j_{,a}\; S^k_{,b}\;p^{ab}}{J}\right]
+\int_{{\cal X}} \;d^3x \;
\dot{S}^j\;\left[P_j  
-2 \partial_a(q_{bc}  S^c_j\;\;p^{ab})\right]
\nonumber\\
&=& \int_{{\cal S}} \;d^3\sigma \; \; 
\tilde{\pi}\left[\big[\dot{\xi}\big]_{S(x)=\sigma}+\tilde{P}\big[\dot{T}\big]_{
S(x)=\sigma}
+\tilde{p}^{jk}\big[\dot{q}_{jk}\big]_{S(x)=\sigma} \right]
+\int_{{\cal X}} \;d^3x \;
\dot{S}^j\;\left[P_j  
-2 \partial_a(q_{bc}  S^c_j\;\;p^{ab})\right]
\nonumber\\
&=& \int_{{\cal S}} \;d^3\sigma \; \; 
\left[\dot{\widetilde{\xi}}\;\tilde{\pi}+\dot{\widetilde{T}}\;\widetilde{P}
+\dot{\tilde{q}}_{jk}\tilde{p}^{jk}\right]
\nonumber\\
&& +\int_{{\cal X}} \;d^3x \;
\dot{S}^j\;\left[P_j+ S^a_j\left(\pi\; \xi_{,a}+P T_{,a}+ p^{bc} S^k_{,b} 
S^l_{,c}  S^a_j q_{kl,a}\right)   
-2 \partial_a(q_{bc}  S^c_j\;\;p^{ab})\right]
\nonumber\\
&=& \int_{{\cal S}} \;d^3\sigma \; \; 
\left[\dot{\tilde{\xi}}\;\tilde{\pi}+\dot{\widetilde{T}}\;\widetilde{P}
+\dot{\tilde{q}}_{jk} \tilde{p}^{jk}\right]
\nonumber\\
&& +\int_{{\cal X}} \;d^3x \;
\dot{S}^j\;\left[P_j+ S^a_j\left(\pi\; \xi_{,a}+P T_{,a}+ p^{bc} 
S^k_{,b} S^l_{,c} \partial_a \big(S^e_k S^f_l q_{ef}\big)\right)   
-2 \partial_a\big(q_{bc}  S^c_j\;\;p^{ab}\big)\right]
\nonumber\\
&=& \int_{{\cal S}} \;d^3\sigma \; \; 
\left[\dot{\tilde{\xi}}\;\tilde{\pi}+\dot{\widetilde{T}}\;\widetilde{P}
+\dot{\tilde{q}}_{jk} \tilde{p}^{jk}\right]
\nonumber\\
&& +\int_{{\cal X}} \;d^3x \;
\dot{S}^j\;\left[P_j+ S^a_j\left(\pi\; \xi_{,a}+P T_{,a}+ p^{bc} 
\big(q_{bc,a}+ 2 q_{ec} S^k_{,b} S^e_{k,a}\big)\right)   
-2 \partial_a\big(q_{bc}  S^c_j\;\;p^{ab}\big)\right]
\nonumber\\
&=& \int_{{\cal S}} \;d^3\sigma \; \; 
\left[\dot{\tilde{\xi}}\;\tilde{\pi}+\dot{\widetilde{T}}\;\widetilde{P}
+\dot{\tilde{q}}_{jk} \tilde{p}^{jk}\right]
\nonumber\\
&& +\int_{{\cal X}} \;d^3x \;
\dot{S}^j\left[P_j+ S^a_j\left(\pi\; \xi_{,a}+P T_{,a}+ 
-2\left[q_{ab} \partial_c
p^{bc}+\frac{1}{2}\left(2q_{a(b,c)}-q_{bc,a}\right)p^{bc}\right]\right)\right]
\nonumber\\
&=& \int_{{\cal S}} \;d^3\sigma \; \; 
\left[\dot{\tilde{\xi}}\;\tilde{\pi}+\dot{\widetilde{T}}\;\widetilde{P}
+\dot{\tilde{q}}_{jk} \tilde{p}^{jk}\right]
+\int_{{\cal X}} \;d^3x \;
\dot{S}^j\left[P_j+ S^a_j\;(\pi\; \xi_{,a}+P T_{,a}+ 
-2 q_{ab} D_c p^{bc})\right]
\nonumber\\
&=& \int_{{\cal S}} \;d^3\sigma \; \; 
\left[\dot{\tilde{\xi}}\;\tilde{\pi}+\dot{\widetilde{T}}\;\widetilde{P}
+\dot{\tilde{q}}_{jk} \tilde{p}^{jk}\right]
+\int_{{\cal X}} \;d^3x \;
\dot{S}^j\; S_a^j c^{\rm tot}_a 
\ea
where we used 
\be \label{3.31}
S^a_j S^k_{,b} S^e_{k,a}
=-S^a_j S^k_{,ba} S^e_k
=-S^a_j S^k_{,ab} S^e_k
=S^a_{j,b} S^k_{,a} S^e_k
=S^e_{j,b}
\ee
as well as the definition of the Christoffel symbol in the second to 
last step (notice that $p^{ab}$ is a tensor density so that 
$D_b p^{ab}=\partial_b p^{ab}+\Gamma^a_{bc} p^{bc}$). 

Formula (\ref{3.30}) means that on the full phase space we can switch to 
the new canonical pairs (\ref{CoordPS}) on $\cal X$ and the canonical pair
$(S^j,P'_j=S^a_j c^{\rm tot}_a=c^{\rm tot}_j)$. The pairs (\ref{CoordPS}) are
thus
$c^{\rm tot}_j$ invariant while $S^j$ is pure gauge. The symplectic 
reduction of the full phase space with respect to $c^{\rm tot}_j$ is 
therefore precisely coordinatised by (\ref{CoordPS}) which is identical to
equation (\ref{3.27}) in the main text.

\section{Effective Action and Fixed Point Equation}
\label{s6}

The aim of the present section is to derive, at least in implicit form,
the Lagrangian that corresponds canonically to the physical Hamiltonian.
This can be done by calculating the inverse Legendre transform\footnote{
This is possible because the Legendre transform is regular.
} and leads to a fixed point equation, which can be solved order by order,
in principle.
Interestingly, the Lagrangian turns out to be local in dust time,
but will be non--local in dust space. However, the Hamiltonian
description is completely local.

The inverse Legendre transform requires to solve for the momenta
$P^{jk}(\sigma),\;\Pi(\sigma)$ in terms of the corresponding
velocities $V_{jk}(\sigma),\;\Upsilon(\sigma)$, respectively, defined by
\footnote{ Note that $\dot{Q}_{jk}$ and $\dot{\Xi}$ must be treated
as independent variables in addition to $Q_{jk}$ and $\Xi$. } \ba
\label{6.1} V_{jk}(\sigma) &\equiv&
\dot{Q}_{jk}(\sigma)=\left\{\HF,Q_{jk}(\sigma)\right\} 
\nonumber\\
\Upsilon(\sigma) &\equiv& \dot{\Xi}(\sigma)=\left\{\HF,\Xi(\sigma)\right\}.
\ea
This can be done by using the first order equations of motion
for $Q_{jk}(\sigma),\;\Xi(\sigma)$, derived from the physical Hamiltonian
$\HF$.
From the physical Hamiltonian $\HF$ we obtain the Lagrangian
\be \label{6.2}
\LF[Q,V;\Xi,\Upsilon]=
\int_{{\cal S}} \; {\rm d}^3\sigma\; L(\sigma)
=\int_{{\cal S}} \; {\rm d}^3\sigma\;\left[\left(\frac{1}{\kappa} P^{jk}
V_{jk}+\frac{1}{\lambda}\Pi
\Upsilon\right)-\HF[Q,P;\Xi,\Pi]\right]_{{\rm (\ref{6.1})}}
\ee
where it it is understood that the solution of
(\ref{6.1}) for $P^{jk},\;\Pi$ has to be inserted.

With the dynamical lapse and shift given by
$N=C/H,\;N_j=-C_j/H$, respectively, we obtain for $P^{jk}$
\be
P^{jk}=\frac{\sqrt{\det(Q)}}{2N}[G^{-1}]^{jkmn}\big(\dot{Q}_{mn}-({\cal
L}_{\vec{N}}Q)_{mn}\big)
=\sqrt{\det(Q)}[G^{-1}]^{jkmn}K_{mn} \; ,
\ee
with $K_{mn}$ denoting the extrinsic curvature.
This leads to the following expression for the velocities $V_{jk}$ and
$\Upsilon$:
\ba \label{6.3}
V_{jk} &=& 2\left[N\; K_{jk}+D_{(j} N_{k)}\right]\quad\mathrm{and}\quad \Upsilon
=\frac{N}{\sqrt{\det(Q)}}\Pi+Q^{jk} N_j\; D_k \Xi\; .
\ea
We conclude
\ba \label{6.4}
\LF &=& \int_{{\cal S}} \; {\rm d}^3\sigma\; \left[\frac{1}{\kappa}V_{jk} P^{jk} +
\frac{1}{\lambda}\Upsilon\Pi-H\right]
\\
&=& \int_{{\cal S}} \; {\rm d}^3\sigma\; \left[\frac{2}{\kappa}\left(N K_{jk}+D_{(j}
N_{k)}\right)
P^{jk} +
\left(N\frac{\Pi}{\lambda\sqrt{\det(Q)}}+N^j D_j\frac{\Xi}{\lambda}\right)\Pi-H\right]
\nonumber\\
&=& \int_{{\cal S}} \; {\rm d}^3\sigma\;
\left[N\Big(\frac{2}{\kappa}K_{jk} P^{jk}+\frac{\Pi^2}{\lambda\sqrt{\det(Q)}}\Big)
+N^j C_j-H\right]
\nonumber\\
&=& \int_{{\cal S}} \; {\rm d}^3\sigma\;\frac{1}{H}
\left[C\left(\frac{2}{\kappa}K_{jk} P^{jk}+\frac{\Pi^2}{\lambda\sqrt{\det(Q)}}\right)
-Q^{jk} C_j
C_k-H^2\right]
\nonumber\\
&=& \int_{{\cal S}} \; {\rm d}^3\sigma\;\frac{C}{H}
\left[\frac{2}{\kappa}K_{jk} P^{jk}+\frac{\Pi^2}{\lambda\sqrt{\det(Q)}}-C\right]
\nonumber\\
&=& \int_{{\cal S}} \; {\rm d}^3\sigma\;N
\left[\frac{2}{\kappa}\sqrt{\det(Q)}(K_{jk}
K^{jk}-(K_j^j)^2)+\frac{\Pi^2}{\lambda \sqrt{\det(Q)}}-C\right]
\nonumber\\
&=& \int_{{\cal S}} \; {\rm d}^3\sigma\;N \sqrt{\det(Q)}
\left[\frac{1}{\kappa}\left(K_{jk}
K^{jk}-(K_j^j)^2+R^{(3)}[Q]-2\Lambda\right)
+\frac{1}{2\lambda}\left(\frac{\Pi^2}{\det(Q)}-[Q^{jk}
\Xi_{,j} \Xi_{,k}+v(\Xi)]\right)\right]
\nonumber\\
&=& \int_{{\cal S}} \; {\rm d}^3\sigma\;N\; \sqrt{\det(Q)}\;
\left[\frac{1}{\kappa}\left(K_{jk}
K^{jk}-(K_j^j)^2+R^{(3)}[Q]-2\Lambda\right)
+\frac{1}{2\lambda}\left((\nabla_u\Xi)^2-(Q^{jk}
\Xi_{,j} \Xi_{,k}+v(\Xi))\right)\right] \nonumber.
\ea
In the third step we performed an integration by
parts and in the fourth step we substituted the expressions for
dynamical lapse and shift, in the sixth
we rewrote $P^{jk}$ in terms of $K_{jk}$, in the seventh we substituted
for $C$ and in the last we introduced the vector field
$u=\frac{1}{N}(\partial_\tau-N^j\partial_{\sigma^j})$.

If lapse and shift would be independent variables, the final
expression in (\ref{6.5}) would coincide with the 3+1--
decomposition of the Einstein -- Hilbert term minimally coupled to
a Klein -- Gordon field with potential $v$!
Since $N_j$ is a constant of
the physical motion and $N=\sqrt{1+Q^{jk} N_j N_k}$, we could, in particular,
consider the case $N_j=0$, whence $N=1$. In that case (\ref{6.4})
would agree with the usual Lagrangian description on dust space--time for a
static
foliation. However, fundamentally lapse and shift are not independent
variables, and we must use this fact in (\ref{6.1}) in order to
solve for $P^{jk},\Pi$. We now turn to this task.

By definition
\ba \label{6.5}
N_j &=& -\frac{C_j}{H}=-\frac{C_j}{C}\;\frac{C}{H}=
-\frac{C_j/\sqrt{\det(Q)}}{C/\sqrt{\det(Q)}}
\sqrt{1+Q^{jk} N_j N_k} \; ,
\nonumber\\
\frac{C_j}{\sqrt{\det(Q)}} &=& -\frac{2}{\kappa}\left(D_k K^k_j-D_j
K^k_k\right)+\frac{1}{\lambda} \left(\nabla_u \Xi\right) D_j \Xi \; ,
\nonumber\\
\frac{C}{\sqrt{\det(Q)}} &=& \frac{1}{\kappa}\left(K_{jk}
K^{jk}-\left[K_j^j\right]^2-R^{(3)}\left[Q\right]\right)+\frac{1}{2\lambda}
\left(\left(\nabla_u \Xi\right)^2+Q^{jk} \Xi_{,j}
\Xi_{,k}+v(\Xi)\right) \; ,
\nonumber\\
K_{jk} &=& \frac{1}{2 \sqrt{1+Q^{jk} N_j N_k}}\left[V_{jk}-2D_{(j} N_{k)}\right]
\; ,
\nonumber\\
\nabla_u \Xi &=&\frac{1}{\sqrt{1+Q^{jk} N_j N_k}}\left[\Upsilon-Q^{jk} N_j
D_k \Xi\right]\; .
\ea
The set of equations (\ref{6.5}), when inserted into each other,
yields an equation of the form
\be \label{6.6}
N_j=G_j[N_k;Q_{kl},V_{kl},\Xi,\Upsilon] \; ,
\ee
where $G_j$ is a local function of its arguments and their
spatial derivatives up to second order (in particular, second spatial
derivatives of $N_j$). Since $P^{jk},\Pi$ are known in terms of
$Q_{jk},V_{jk},\Xi,\Upsilon$, once $N_j$ (and thus $N$) is known as a
function of these arguments, we have reduced the task of performing the
inverse Legendre transform to solving the {\sl fixed point equation}
(\ref{6.6}).

Unfortunately, (\ref{6.6}) is not algebraic in $N_j$, so
that a solution just by quadratures is impossible. Also, it represents a
highly nonlinear system of partial differential equations of
degree two, so that linear solution methods fail, as well. We leave
the full investigation of this system for future research. However, the
fact
that it is a system of fixed point equations suggests to look for a
solution by perturbative or fixed point methods: \\
1. If we make the Ansatz that $N_j$ is small, in an appropriate sense, then
we may expand $G_j[N_k]$ around
$N_k=0$ to linear order and solve the
resulting linear system of PDE's.\\
2. The fixed point method suggests to write the solution in the form
\be \label{6.7}
N_j=G_j(G_{k_1}(G_{k_2}(..(G_{k_n}(..))..))) \; .
\ee
If convergence is under control, then an $n-th$ order approximation may be
given in the form
\be \label{6.8}
N^{(n)}_j=G_j(G_{k_1}(G_{k_2}(..(G_{k_n}(0))..))) \; ,
\ee
which consists in setting the starting point of the iteration at $N_j=0$
(which is a reasonable guess if the exact solution is indeed small in an
appropriate sense, having a {\sl test clock} in mind)
and to iterate $n$ times. The expression (\ref{6.8}) contains spatial
derivatives of the metric of order up to $2(n+1)$, but is only
of first order in $\tau$--derivatives, thus establishing that
the final Lagrangian is spatially non--local in dust space but temporally
local in dust time.

\section{Two Routes to Second Time Derivatives of Linear Perturbations}
\label{sb}

In this appendix we consider a general Hamiltonian system with canonical
coordinates $(q,p)$ and standard Poisson brackets $\{p,q\}=1$ and a
Hamiltonian
function $H(q,p)$. We will consider only one degree of freedom but
everything generalises to an arbitrary number of degrees of freedom.
\begin{Lemma} \label{sb.1}
Let $(q_0(\tau),p_0(\tau))$ be an exact solution of the Hamiltonian
equations of motion
\ba \label{b.1}
\dot{q}_0(\tau) &=& \Big[\{H,q\}(q,p)\Big]_{q=q_0(\tau)\atop p=p_0(\tau)}
\nonumber\\
\dot{p}_0(\tau) &=& \Big[\{H,p\}(q,p)\Big]_{q=q_0(\tau)\atop p=p_0(\tau)}
\ea
Define the perturbations $\delta q:=q-q_0(\tau),\;\delta
p:=p-p_0(\tau)$.
Let $H(q,p)=\sum_{n=0}^\infty H^{(n)}$ be the expansion of $H(q,p)$
around $q_0(\tau),p_0(\tau)$ in terms of the perturbations where
$H^{(n)}$ is the n-th order term in terms of the perturbations. Then\\
(i.) Expanding the full Hamiltonian equations of motion for
$\dot{q},\;\dot{p}$ to linear order
is equivalent to using the function $H^{(2)}$ as a Hamiltonian for the
perturbations.\\
(ii.) Expanding the equation for $\ddot{q}$ to linear order is equivalent
to the equations for $\dot{\delta q}=\delta\dot{q},\dot{\delta
p}=\delta\dot{p}$ to linear order.
\end{Lemma}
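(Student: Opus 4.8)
The argument rests on two structural observations. First, since $H$ carries no explicit $\tau$ dependence, along any solution of (\ref{b.1}) one has $\dot F=\{H,F\}(q,p)$ for every phase space function $F$; in particular $\dot q=\{H,q\}$, $\dot p=\{H,p\}$, and hence $\ddot q=\{H,\{H,q\}\}=\{H,\partial_pH\}=:A(q,p)$ is again a fixed function on phase space. Consequently the exact solution obeys $\dot q_0=\{H,q\}|_0$, $\dot p_0=\{H,p\}|_0$ and $\ddot q_0=A|_0$, where $|_0$ denotes evaluation at $(q_0(\tau),p_0(\tau))$. Second, the perturbations $\delta q,\delta p$ are to be read as first variations of a one parameter family of exact solutions through $(q_0,p_0)$; by smooth dependence of solutions on initial data the variation commutes with the $\tau$ flow, i.e. $\dot{(\delta q)}=\delta\dot q$ and $\dot{(\delta p)}=\delta\dot p$. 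This second fact is the conceptual content of part (ii). Throughout I abbreviate $H_{pq}:=\partial_p\partial_qH$, and similarly for higher derivatives, understood to be evaluated at $(q_0(\tau),p_0(\tau))$ wherever they multiply perturbations.

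\emph{Part (i).} I would simply Taylor expand the exact equation $\dot q=\{H,q\}(q,p)=\partial_pH(q,p)$ about $(q_0,p_0)$: writing $q=q_0+\delta q$, $p=p_0+\delta p$ gives $\dot q_0+\dot{(\delta q)}=\partial_pH|_0+H_{pq}\,\delta q+H_{pp}\,\delta p+O(\delta^2)$, and the zeroth order terms cancel by the exact equation of motion, leaving the linearised equation $\dot{(\delta q)}=H_{pq}\,\delta q+H_{pp}\,\delta p$; likewise from $\dot p=-\partial_qH$ one gets $\dot{(\delta p)}=-H_{qq}\,\delta q-H_{pq}\,\delta p$. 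On the other hand, $H^{(2)}=\tfrac12\big(H_{qq}(\delta q)^2+2H_{pq}\,\delta q\,\delta p+H_{pp}(\delta p)^2\big)$, so with the induced bracket $\{\delta p,\delta q\}=1$ one has $\{H^{(2)},\delta q\}=\partial_{\delta p}H^{(2)}=H_{pq}\,\delta q+H_{pp}\,\delta p$ and $\{H^{(2)},\delta p\}=-\partial_{\delta q}H^{(2)}=-H_{qq}\,\delta q-H_{pq}\,\delta p$ --- exactly the linearised equations, which proves (i). Two remarks make this transparent: $H^{(0)}$ drops out because it is $\delta$ independent, while the would-be contribution of $H^{(1)}=H_q\,\delta q+H_p\,\delta p$ is precisely the $\tau$ only term $(\partial_pH|_0,-\partial_qH|_0)$ that was removed by subtracting the exact equation of motion; equivalently, $(q,p)\mapsto(\delta q,\delta p)$ is a $\tau$ dependent canonical shift whose generating function correction to $H$ cancels $H^{(1)}$ up to a $\delta$ independent term.

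\emph{Part (ii).} There are two routes to a second order equation for $\delta q$. \emph{Route A} linearises the exact relation $\ddot q=A(q,p)$: as above, $\ddot q_0+\ddot{(\delta q)}=A|_0+\partial_qA|_0\,\delta q+\partial_pA|_0\,\delta p+O(\delta^2)$, the zeroth order part cancels, and $\ddot{(\delta q)}=\partial_qA|_0\,\delta q+\partial_pA|_0\,\delta p$. \emph{Route B} starts from the linearised first order system of part (i), $\dot{(\delta q)}=a\,\delta q+b\,\delta p$, $\dot{(\delta p)}=c\,\delta q+d\,\delta p$ with $a=H_{pq}$, $b=H_{pp}$, $c=-H_{qq}$, $d=-H_{pq}$, differentiates the first equation in $\tau$ and substitutes the second, giving $\ddot{(\delta q)}=(\dot a+a^2+bc)\,\delta q+(\dot b+ab+bd)\,\delta p$. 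The lemma is then the pair of identities $\partial_qA|_0=\dot a+a^2+bc$ and $\partial_pA|_0=\dot b+ab+bd$, which I would verify directly: from $A=H_pH_{pq}-H_qH_{pp}$ one computes $\partial_qA=H_{pq}^2+H_pH_{pqq}-H_{qq}H_{pp}-H_qH_{ppq}$ and $\partial_pA=H_pH_{ppq}-H_qH_{ppp}$ (the $H_{pp}H_{pq}$ cross terms cancelling); the chain rule together with $\dot q_0=H_p$, $\dot p_0=-H_q$ gives $\dot a=H_pH_{pqq}-H_qH_{ppq}$ and $\dot b=H_pH_{ppq}-H_qH_{ppp}$; and $a^2+bc=H_{pq}^2-H_{pp}H_{qq}$ while $ab+bd=0$. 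Adding up, the two sides agree. Conceptually this is automatic: both routes compute $\ddot{(\delta q)}$, Route A by linearising $\ddot q=A$ and Route B by applying $d/d\tau$ to the linearised $\dot q$ equation, and these coincide because variation commutes with the $\tau$ flow --- the relation $\dot{(\delta q)}=\delta\dot q$ applied twice.

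\emph{Expected obstacle.} Nothing here is hard; the only point requiring care is the bookkeeping in Route B, where differentiating through the $\tau$ dependent base point $(q_0(\tau),p_0(\tau))$ must reproduce exactly the third derivative terms $H_pH_{pqq}-H_qH_{ppq}$ and $H_pH_{ppq}-H_qH_{ppp}$ appearing in $\partial_qA|_0$ and $\partial_pA|_0$, with the spurious contribution $ab+bd$ cancelling --- getting these signs and cancellations right is the crux. The passage from one degree of freedom to many is purely notational: brackets become index sums and the same cancellations happen componentwise.
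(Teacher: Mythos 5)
Your proof of part (i) is correct and essentially identical to the paper's: expand $\dot q=H_{,p}$, $\dot p=-H_{,q}$ about the background, cancel the zeroth-order terms, and match the resulting linear system against the Hamiltonian flow of $H^{(2)}$ with respect to the perturbation bracket. Your remark about $H^{(1)}$ being exactly the piece subtracted by the background equations is a nice addition but not a departure.

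Your part (ii) takes a genuinely different route from the paper's, and I would say a more streamlined one. The paper works in $(q,\dot q)$ variables: it first inverts $\dot q = H_{,p}(q,p)$ to get $p=F(q,\dot q)$, builds $\ddot q = G(q,\dot q)$, and then has to use the implicit-function identities $F_{,v}=1/H_{,pp}$, $F_{,q}=-H_{,pq}/H_{,pp}$ to eliminate derivatives of $F$ from $G_{,q}$, $G_{,v}$. You instead stay on phase space and linearize the Poisson-bracket form $\ddot q = A(q,p)$ with $A=\{H,\{H,q\}\}=H_pH_{pq}-H_qH_{pp}$; your algebra (the cancellation of the $H_{pp}H_{pq}$ cross terms in $\partial_pA$, the identity $ab+bd=0$, and the matching of the third-derivative terms $H_pH_{pqq}-H_qH_{ppq}$ and $H_pH_{ppq}-H_qH_{ppp}$) all checks out. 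What your approach buys is avoiding the inverse Legendre transform and implicit function theorem altogether. What the paper's approach buys is that it lands directly on the second-order ODE for $\delta q$ alone, in the form $\ddot{\delta q}=G_{,q}\,\delta q+G_{,v}\,\delta\dot q$, which is the form actually used in section~7. Your comparison terminates with $\ddot{\delta q}$ expressed in $(\delta q,\delta p)$; to fully match the lemma as the paper intends it you should add the one-line substitution $\delta p=(\delta\dot q-H_{,pq}\delta q)/H_{,pp}$ on both sides, noting that since the coefficient functions already agree and this relation is used identically in both routes, the $(\delta q,\delta\dot q)$ forms also agree. That step is immediate, but it is needed to close the gap between your $A(q,p)$ formulation and the paper's $G(q,\dot q)$ formulation.
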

\begin{proof}
Notice that $q_0(\tau),p_0(\tau)$ do not carry any phase space
dependence in contrast to $\delta q,\delta p$. Therefore
$\{\delta q,\delta q\}=\{\delta p,\delta p\}=0$ and
$\{\delta p,\delta q\}=1$. \\
(i.)\\
Consider the full Hamiltonian equations of motion for a general solution
$(q(\tau),p(\tau))$
\be \label{sb.2}
\dot{q}(\tau)=H_{,p}(q(\tau),p(\tau))\quad\mathrm{and}\quad
\dot{p}(\tau)=-H_{,p}(q(\tau),p(\tau))
\ee
where $H_{,q}=\partial H/\partial q,\;H_{,p}=\partial H/\partial p$.
We set $\delta q(\tau)=q(\tau)-q_0(\tau)$ and $\delta
p(\tau)=p(\tau)-p_0(\tau)$. Subtracting from (\ref{sb.2}) the equations
for $(q_0(\tau),p_0(\tau))$ we obtain
\ba \label{sb.3}
\dot{\delta q}(\tau)
&=& H_{,p}(q(\tau),p(\tau))-H_{,p}(q_0(\tau),p_0(\tau))
\nonumber\\
\delta
\dot{p}(\tau) &=& -H_{,q}(q(\tau),p(\tau))+H_{,q}(q_0(\tau),p_0(\tau))
\ea
which is still exact.
Expanding the right hand side of (\ref{sb.3}) to first order in $\delta
q(\tau),\;\delta p(\tau)$ we find
\ba \label{sb.4}
\delta \dot{q}(\tau)
&=&
H_{,pq}(q_0(\tau),p_0(\tau)) \delta q(\tau)
+H_{,pp}(q_0(\tau),p_0(\tau)) \delta p(\tau)
\nonumber\\
\delta
\dot{p}(\tau) &=& -H_{,qq}(q_0(\tau),p_0(\tau)) \delta q(\tau)-
H_{,qp}(q_0(\tau),p_0(\tau)) \delta p(\tau)
\ea
On the other hand we have
\be \label{sb.5}
H^{(2)}=
\frac{1}{2} H_{,qq}(q_0(\tau),p_0(\tau)) \big[\delta q\big]^2
+\frac{1}{2} H_{,pp}(q_0(\tau),p_0(\tau)) \big[\delta p\big]^2
+H_{,qp}(q_0(\tau),p_0(\tau)) \big[\delta q\big]\;\big[\delta p\big]
\ee
Then it is trivial to check that (\ref{sb.4}) is reproduced by
\be \label{sb.6}
\delta \dot{q}(\tau)=\{H^{(2)},\delta q\}_{
\delta q=\delta q(\tau)\atop\delta p=\delta p(\tau)}\quad\mathrm{and}\quad
\delta \dot{p}(\tau)=\{H^{(2)},\delta p\}_{
\delta q=\delta q(\tau)\atop\delta p=\delta p(\tau)}
\ee
(ii.)\\
Let $p(\tau)=F(q(\tau),\dot{q}(\tau))$ be the solution of solving
$\dot{q}(\tau)=H_{,p}(q(\tau),p(\tau))$ for $p(\tau)$. Then
\ba \label{sb.7}
\ddot{q}(\tau) &=&
H_{,pq}\big(q(\tau),p(\tau)\big) \dot{q}(\tau)
+H_{,pp}\big(q(\tau),p(\tau)\big) \dot{p}(\tau)
\nonumber\\
&=& H_{,pq}\big(q(\tau),p(\tau)\big) \dot{q}(\tau)
-H_{,pp}\big(q(\tau),p(\tau)\big) H_{,q}\big(q(\tau),p(\tau)\big)
\nonumber\\
&=& H_{,pq}\big(q(\tau),F(q(\tau),\dot{q}(\tau))\big) \dot{q}(\tau)
-H_{,pp}\big(q(\tau),F(q(\tau),\dot{q}(\tau))\big)
H_{,q}\big(q(\tau),F(q(\tau),\dot{q}(\tau))\big)
\nonumber\\
&=:& G\big(q(\tau),\dot{q}(\tau)\big)
\ea
Equation (\ref{sb.7}) is what we mean by the $\ddot{q}(\tau)$ form of
the
quations of motion, i.e an equation only involving $q,\dot{q},\ddot{q}$
but no longer the momentum.
Subtracting from (\ref{sb.7}) the corresponding
equation for $\ddot{q}_0(\tau)$ and expanding the right hand side to
first order we obtain with $G=G(q,v)$
\be \label{sb.8}
\ddot{\delta q}(\tau) =
G_{,q}(q_0(\tau),\dot{q}_0(\tau))\delta q(\tau)
+G_{,v}(q_0(\tau),\dot{q}_0(\tau))\delta \dot{q}(\tau)
\ee
Now
\ba \label{sb.9}
G_{,q}(q,v) &=& \big[H_{,pqq}(q,p) v-H_{,ppq}(q,p) H_{,q}(q,p)
-\big[H_{,pp}(q,p) H_{,qq}(q,p)\big]_{p=F(q,v)}
\nonumber\\
&& +\big[H_{,ppq}(q,p) v-H_{,ppp}(q,p) H_{,q}(q,p)
-H_{,pp}(q,p) H_{,pq}(q,p)\big]_{p=F(q,v)} \; F_{,q}(q,v)
\\
G_{,v}(q,v) &=& H_{,pq}(q,p)
+\big[H_{,ppq}(q,p) v
-H_{,ppp}(q,p) H_{,q}(q,p)
-H_{,pp}(q,p) H_{,pq}(q,p)\big]_{p=F(q,v)} \; F_{,v}(q,v)
\nonumber
\ea
Since
$v=H_{,p}(q,F(q,v))$ is an identity we obtain
\be \label{sb.10}
1=H_{,pp}(q,F(q,v)) F_{,v}(q,v)\quad\mathrm{and}\quad
0=H_{,pq}(q,F(q,v))+H_{,pp}(q,F(q,v)) F_{,q}(q,v)
\ee
by taking the derivative with respect to the independent variables $v,q$
respectively. This way we can eliminate the derivatives of $F$
\be \label{sb.11}
F_{,v}(q,v)=\frac{1}{H_{,pp}(q,F(q,v))}\quad\mathrm{and}\quad
F_{,q}(q,v)=-\frac{H_{,pq}(q,F(q,v))}{H_{,pp}(q,F(q,v))}
\ee
Substituting (\ref{sb.11}) into (\ref{sb.9}) we obtain the simplified
expressions
\ba \label{sb.12}
G_{,v}(q,v) &=&
\left[\left(\frac{H_{,ppq} v - H_{,ppp}
H_{,q}}{H_{,pp}}\right)(q,p)\right]_{p=F(q,v)}
\nonumber\\
G_{,q}(q,v) &=& \big[H_{,pqq}(q,p) v-H_{,ppq}(q,p) H_{,q}(q,p)
-H_{,pp}(q,p) H_{,qq}(q,p)\big]_{p=F(q,v)}
\nonumber\\
&& -\big[H_{,ppq}(q,p) v-H_{,ppp}(q,p) H_{,q}(q,p)
-H_{,pp}(q,p) H_{,pq}(q,p)\big]_{p=F(q,v)} \; \frac{H_{,pq}}{H_{,pp}}
\ea
Now we invert the second equation in (\ref{sb.4}) for $\delta{p}$ and
obtain
\be \label{sb.13}
\delta
p(\tau)=\frac{\delta\dot{q}(\tau)-H_{,pq}(q_0(\tau),p_0(\tau))\delta
q(\tau)}{H_{,pp}(q_0(\tau),p_0(\tau)}
\ee
Taking the time derivative of the first equation in (\ref{sb.4}) and
using (\ref{sb.13}) yields after some algebra 
\ba \label{sb.14}
\delta \ddot{q}(\tau) &=&
\left[\dot{H}_{,pq}-H_{,pp} H_{,qq}-\frac{H_{,pq}(\dot{H}_{,pp} -H_{,pp}
H_{,pq})}{H_{,pp}}\right](q_0(\tau),p_0(\tau)) \; \delta q(\tau)
\nonumber\\
&& +\left[\frac{\dot{H}_{,pp}}{H_{,pp}}\right](q_0(\tau),p_0(\tau)) \; \delta
\dot{q}(\tau) \ea where e.g. $\dot{H}_{,pp}(q_0(\tau),p_0(\tau):=
\frac{d}{d\tau} H_{,pp}(q_0(\tau),p_0(\tau)$. Carrying out the
remaining time derivatives in (\ref{sb.14}) and comparing with
(\ref{sb.8}) evaluated with the help of (\ref{sb.12}) at
$q=q_0(\tau),\;v=\dot{q}_0(\tau)=H_{,p}(q_0(\tau),p_0(\tau)$ we see
that the expressions coincide.
\end{proof}

\section{Constants of the Motion of n'th Order Perturbation Theory}
\label{sc}
In this section we will show that for any fully conserved quantity 
$F$ of a Hamiltonian sytem with Hamiltonian $H$, when expanding both the 
equations of motion and $F$ to order $n$ then $F$ is still a constant of 
motion up to terms of order $n+1$.
\\
\\
Let $m_0(\tau)=(q_0(\tau),p_0(\tau))$ be an exact solution of a
Hamiltonian system with canonical coordinates $m=(q,p)$, non
vanishing Poisson brackets $\{p,q\}=1$ and Hamiltonian
$H=H(m)=H(q,p)$. Define $\delta m=m-m_0(\tau)$. Since $m_0(\tau)$ is
just a number (for fixed $\tau$) we immediately have the non
vanishing Poisson brackets $\{\delta p,\delta q\}=1$. For any
function $F$ on phase space we consider its Taylor expansion around
$m_0(\tau)$ given by \be \label{sc.1} F(m)=\sum_{n=0}^\infty\;
F^{(n)}(m_0(\tau);\delta m) \ee where $F^{(n)}(m_0(\tau);\delta m)$
is a homogeneous polynomial of degree $n$ in $\delta m$ whose
coefficients depend explicitly on the background solution
$m_0(\tau)$, that is \be \label{sc.2} F^{(n)}(m_0(\tau);\delta
m)=\sum_{k=0}^n \frac{1}{k!\; (n-k)!} \; \left[\frac{\partial^b
F}{\left[\partial q\right]^k\;\left[\partial p\right]^{n-k}}\right](m_0(\tau))\;
\left[\delta q\right]^k
\left[\delta p\right]^{n-k} \ee
\begin{Lemma} \label{lasc.1}
The Poisson bracket $\{F,G\}$ can be computed either by first
expanding $F,G$ as in (\ref{sc.1}) and then using the Poisson bracket
for $\delta m$ or by using the Poisson bracket for $m$ and then
expanding the result as in (\ref{sc.1}).
\end{Lemma}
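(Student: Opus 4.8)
The plan is to reduce the statement to two elementary observations. First, for fixed $\tau$ the numbers $q_0(\tau),p_0(\tau)$ carry no dependence on the phase space point, so the passage from $(q,p)$ to $(\delta q,\delta p)=(q-q_0(\tau),p-p_0(\tau))$ is merely a translation of phase space by a constant. Such a translation is a canonical transformation: as operators on functions one has $\partial/\partial q=\partial/\partial(\delta q)$ and $\partial/\partial p=\partial/\partial(\delta p)$, and in particular $\{\delta p,\delta q\}=\{p,q\}=1$. Hence there is really only one Poisson bracket $\{F,G\}=F_{,q}G_{,p}-F_{,p}G_{,q}=F_{,\delta q}G_{,\delta p}-F_{,\delta p}G_{,\delta q}$ on phase space; the content of the lemma is that this single function may be reconstructed either by expanding $F,G$ first and then bracketing, or by bracketing first and then expanding.

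Second, I would show that Taylor expansion around $m_0(\tau)$ commutes with the algebraic operations out of which the bracket is built. Write $F=\sum_{k\ge 0}F^{(k)}$ and $G=\sum_{l\ge 0}G^{(l)}$ with $F^{(k)},G^{(l)}$ the homogeneous pieces of degree $k,l$ in $\delta m$ as in (\ref{sc.2}). Since $\partial_{\delta q}$ and $\partial_{\delta p}$ each lower the homogeneity degree by one, the bracket $\{F^{(k)},G^{(l)}\}=F^{(k)}_{,\delta q}G^{(l)}_{,\delta p}-F^{(k)}_{,\delta p}G^{(l)}_{,\delta q}$ is homogeneous of degree $k+l-2$. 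Term by term differentiation gives that $\sum_k F^{(k)}_{,\delta q}$ is the Taylor expansion of $F_{,q}$, and likewise for the other three factors; the product of two such expansions is the corresponding Cauchy product, i.e. the Taylor expansion of the product, organised by total degree. Collecting the degree $n$ terms on each side then yields
\begin{equation}
\{F,G\}^{(n)}=\sum_{k+l=n+2}\Big(F^{(k)}_{,\delta q}G^{(l)}_{,\delta p}-F^{(k)}_{,\delta p}G^{(l)}_{,\delta q}\Big)=\sum_{k+l=n+2}\{F^{(k)},G^{(l)}\},
\end{equation}
which is exactly the statement that the expansion of the bracket equals the bracket computed from the expansions. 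The same computation goes through verbatim for a finite collection of canonical pairs $(q^a,p_a)$ and, formally, for fields $(q^a(x),p_a(x))$, with the sums over $k,l$ replaced by sums over multi-indices and an integral over $x$ in the bracket; this covers the field theoretic situation of the main text.

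The argument is essentially bookkeeping, and the only point that needs a word of care --- the ``hard part'', such as it is --- is the legitimacy of differentiating and multiplying the expansions term by term. At the level of formal power series in $\delta m$ these manipulations are tautological, and in all the applications below the relevant quantities ($H$, $\tilde{h}$, the observables) are polynomial in the basic fields up to finitely many spatial derivatives, so the sums are finite and no convergence issue arises at all; should one want honest analyticity, the identities used are the standard ones valid inside the radius of convergence. I would therefore present the proof at the formal level, remarking on the polynomial case as the one actually needed.
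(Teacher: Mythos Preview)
Your proof is correct and follows essentially the same approach as the paper: the core observation in both is that the shift $(q,p)\to(\delta q,\delta p)$ is by phase-space constants, so $\partial_q=\partial_{\delta q}$ and $\partial_p=\partial_{\delta p}$ as operators, whence the two Poisson brackets coincide. You add the explicit degree bookkeeping $\{F,G\}^{(n)}=\sum_{k+l=n+2}\{F^{(k)},G^{(l)}\}$ and a remark on formal versus convergent series, which the paper omits, but this is elaboration rather than a different method.
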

\begin{proof}
The proof is elementary: Since
\be \label{sc.3}
F=\sum_{k,l=0}^\infty \; \frac{1}{k!\; l!}\;
\left[\frac{\partial^{k+l} F}{\left[\partial q\right]^k\;\left[\partial
p\right]^l}\right](m_0(\tau))
\left[\delta q\right]^k\;\left[\delta p\right]^l
\ee
we have with the substitution of $F$ by $F_{,q}$
\be \label{sc.4}
F_{,q}=\sum_{k,l=0}^\infty \; \frac{1}{k!\; l!}\;
\left[\frac{\partial^{k+l+1} F}{\left[\partial q\right]^{k+1}\;\left[\partial
p\right]^l}\right](m_0(\tau))
\left[\delta q\right]^k\;\left[\delta p\right]^l
=F_{,\delta q}
\ee
and similarly $F_{,p}=F_{,\delta p}$. Since one computes Poisson
brackets the first way by first expanding and then taking derivatives
with respect to $\delta q,\delta p$ while the second way we compute
Poisson brackets with respect to $q,p$ and then expand, the assertion
follows.
\end{proof}
\begin{Lemma} \label{lasc.2}
Suppose we expand the Hamiltonian to $n-$th order in $\delta m$ with
$n\ge 1$. Suppose
also that $F$ is an exact constant of the motion with respect to the
Hamiltonian $H$. Then:\\
(i.) The equations of motion up to order $n$ for $\delta m$ are generated
by the Hamiltonian
\be \label{sc.5}
H_n=\sum_{k=2}^{n+1} \; H^{(k)}
\ee
(ii.) The perturbation up to order $n$ of $F$ given by
\be \label{sc.6}
F_n:=\sum_{k=1}^n F^{(n)}
\ee
is a constant of motion with respect to $H_n$ up to terms of order at
least $n+1$.
\end{Lemma}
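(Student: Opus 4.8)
The plan is to organise the whole argument by the homogeneity degree in the perturbation $\delta m=m-m_0(\tau)=(\delta q,\delta p)$. The elementary fact, which follows from Lemma \ref{lasc.1}, is that if $A$ is homogeneous of degree $a$ and $B$ of degree $b$ in $\delta m$, then $\{A,B\}$ (computed with $\{\delta p,\delta q\}=1$) is homogeneous of degree $a+b-2$, because the bracket pairs one $\partial_{\delta q}$ with one $\partial_{\delta p}$. Feeding the Taylor expansions $H=\sum_a H^{(a)}$, $F=\sum_b F^{(b)}$ into the two exact identities at our disposal — the background equations $\dot m_0=\{H,m\}(m_0)$ and the exact conservation law $\{H,F\}=0$ — and using that one may take the Poisson bracket before or after expanding (Lemma \ref{lasc.1}), produces the two families of graded relations that will do all the work.

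For part (i) I would write the exact equation of motion for the perturbation as $\dot{\delta m}=\{H,m\}(m_0+\delta m)-\{H,m\}(m_0)$, the subtracted term being the background right-hand side. By Lemma \ref{lasc.1} this equals $\sum_{a\ge 1}\{H^{(a+1)},\delta m\}$ (the $a=0$ piece is precisely the subtracted background term). Since $\{H^{(a+1)},\delta m\}$ is homogeneous of degree $a$, truncating the equation of motion at order $n$ means keeping exactly the terms with $a\le n$, i.e. those generated by $H^{(2)}+\dots+H^{(n+1)}=H_n$. This is the verbatim generalisation of Lemma \ref{sb.1}(i) ($n=1$ case) and presents no new difficulty.

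The substance is in part (ii). First I would establish the auxiliary identity $\partial_\tau F^{(k)}=\{H^{(1)},F^{(k+1)}\}$, where $\partial_\tau$ differentiates only the explicit $\tau$-dependence carried by $m_0(\tau)$ inside the Taylor coefficients; this is a one-line chain-rule computation using $\dot m_0=(H_{,p},-H_{,q})(m_0)$ together with $F_{,q}=F_{,\delta q}$, $F_{,p}=F_{,\delta p}$ from the proof of Lemma \ref{lasc.1}, with care for the sign dictated by the convention $\{\delta p,\delta q\}=1$. Then the total $\tau$-derivative of $F_n$ along the truncated flow generated by $H_n$ is
\[
\frac{dF_n}{d\tau}=\partial_\tau F_n+\{H_n,F_n\}
=\sum_{b=2}^{n+1}\{H^{(1)},F^{(b)}\}+\sum_{a=2}^{n+1}\sum_{b=1}^{n}\{H^{(a)},F^{(b)}\}.
\]
Extracting the homogeneous component of degree $d$ with $1\le d\le n$, one checks that the truncation bounds $a\le n+1$, $b\le n$ are not binding (since $a+b=d+2\le n+2$ with $a,b\ge 1$ already forces them), so that component equals $\sum_{a+b=d+2,\;a,b\ge 1}\{H^{(a)},F^{(b)}\}$, which is exactly the degree-$d$ part of $\{H,F\}=0$ (the missing $a=0$ or $b=0$ terms being brackets with constants). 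Hence every homogeneous piece of $\dot F_n$ of degree $\le n$ vanishes, leaving only terms of degree $\ge n+1$, as claimed.

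The one step that demands genuine care — and the \emph{only} real obstacle — is this bookkeeping of summation ranges: after substituting $\partial_\tau F^{(k)}=\{H^{(1)},F^{(k+1)}\}$ one must verify that the truncations defining $H_n$ and $F_n$ conspire so that no spurious boundary terms of degree $\le n$ survive, and that the degree-$d$ slice really closes up into the \emph{full} conservation identity rather than a proper sub-sum of it. Everything else is routine application of the grading and of Lemma \ref{lasc.1}.
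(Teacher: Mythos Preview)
Your proposal is correct and follows essentially the same route as the paper: the key identity $\partial_\tau F^{(k)}=\{H^{(1)},F^{(k+1)}\}$ is precisely the paper's equation (\ref{sc.10})--(\ref{sc.11}) (derived there via the combinatorial identity $F^{(k)}_{,q_0}=F^{(k+1)}_{,\delta q}$), and your degree-by-degree bookkeeping of the double sum is exactly the manipulation carried out in (\ref{sc.12})--(\ref{sc.15}). The only cosmetic difference is that the paper writes the total $\tau$-derivative of $F^{(k)}$ along the $H_n$-flow in one stroke, whereas you separate the explicit $\partial_\tau$ piece from the $\{H_n,\cdot\}$ piece before recombining.
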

Notice that the Hamiltonian starts at order two and ends at order $n+1$.
\begin{proof}
~\\
(i.)\\
Let $m(t)$ be any solution of the exact equation of motion. We have
for example
\be \label{sc.7}
\dot{q}(\tau)=[\{H,q\}]_{m=m(\tau)}
\ee
Subtracting the same equation for $m_0(\tau)$ and setting $\delta
q(\tau)=q(\tau)-q_0(\tau)$ we find
\be \label{sc.8}
\delta \dot{q}(\tau)=[\{H,q\}]_{m=m(\tau)}-[\{H,q\}]_{m=m_0(\tau)}
=\sum_{k=2}^\infty H^{(k)}_{,\delta p}
\ee
from which the assertion follows immediately (the proof for $\delta p$
is identical).\\
(ii.)\\
Using the explicit background dependence of
$F^{(k)}=F^{(k)}(m_0(\tau);\delta m(\tau))$
we have
\ba \label{sc.9}
\frac{d}{d\tau} F^{(k)} &=&
\frac{\partial F^{(k)}}{\partial q_0} \dot{q}_0(\tau)
+\frac{\partial F^{(k)}}{\partial p_0} \dot{p}_0(\tau)
+\frac{\partial F^{(k)}}{\partial \delta q} \delta\dot{q}(\tau)
+\frac{\partial F^{(k)}}{\partial \delta p} \delta\dot{p}(\tau)
\nonumber\\
&=& \frac{\partial F^{(k)}}{\partial q_0}
\frac{\partial H^{(1)}}{\partial \delta p}
-\frac{\partial F^{(k)}}{\partial p_0}
\frac{\partial H^{(1)}}{\partial \delta q}
+\frac{\partial F^{(k)}}{\partial \delta q}
\frac{\partial H_n}{\partial \delta p}
-\frac{\partial F^{(k)}}{\partial \delta p}
\frac{\partial H_n}{\partial \delta q}
\ea
where we used the first part of the lemma as well as the fact that
$H_{,q}(m_0)=H^{(1)}_{,\delta q}, \;H_{,p}(m_0)=H^{(1)}_{,\delta p}$.
All Poisson brackets are with respect to the coordinates $\delta q,
\delta p$.

Now observe the important fact
\ba \label{sc.10}
F^{(k)}_{,q_0} &=&
\sum_{l=0}^k \frac{1}{l! (k-l)!} \frac{\partial^{k+1}
F}{\left[\partial q\right]^{l+1} \left[\partial p\right]^{(k+1)-(l+1)}}
\left[\delta q\right]^l
\left[\delta p\right]^{(k+1)-(l+1)}
\nonumber\\
&=& \frac{\partial}{\partial \delta q}
\sum_{l=0}^k \frac{1}{(l+1)! ((k+1)-(l+1))!} \frac{\partial^{k+1}
F}{\left[\partial q\right]^{l+1} \left[\partial p\right]^{(k+1)-(l+1)}}
\left[\delta q\right]^{l+1}
\left[\delta p\right]^{(k+1)-(l+1)}
\nonumber\\
&=& \frac{\partial}{\partial \delta q}
\sum_{l=1}^{k+1} \frac{1}{l! (k+1-l)!} \frac{\partial^{k+1}
F}{\left[\partial q\right]^l \left[\partial p\right]^{k+1-l}} \left[\delta
q\right]^l
\left[\delta p\right]^{k+1-l}
\nonumber\\
&=& \frac{\partial}{\partial \delta q}
\left[F^{(k+1)}-
\frac{1}{(k+1)!} \frac{\partial^{k+1}
F}{\left[\partial p\right]^{k+1}}
\left[\delta p\right]^{k+1}\right]
\nonumber\\
&=& F^{(k+1)}_{,\delta q}
\ea
and similarly
$F^{(k)}_{,p_0}=F^{(k+1)}_{,\delta p}$.

Combining (\ref{sc.9}) and (\ref{sc.10}) we see that
\be \label{sc.11}
\frac{d F^{(k)}}{d\tau}=\{H^{(1)},F^{(k+1)}\}+\{H_n,F^{(k)}\}
\ee
Hence
\be \label{sc.12}
\frac{d F_n}{d\tau}
=\sum_{k=1}^n \left[\{H^{(1)},F^{(k+1)}\}+\sum_{l=2}^{n+1}
\{H^{(l)},F^{(k)}\}\right]
\ee
We would like to show that the terms up to order $n$ in (\ref{sc.12})
vanish identically. Since $\{H^{(l)},F^{(k)}\}$ is of order $k+l-2$, for
given $k$ we can restrict the sum over $l$ from $l=2$ until $n+2-k$
up to terms of order $O(\delta^{n+1})$. Notice that $n+2-k$ is at
least $2$ (for $k=n$) and at most $n+1$ (for $k=1$) for all values of
$k$ which is the allowed range of $l$. It follows
\ba \label{sc.13}
\frac{d F_n}{d\tau}+O(\delta^{n+1})
&=& \sum_{k=1}^n \left[\{H^{(1)},F^{(k+1)}\}+\sum_{l=2}^{n+2-k}
\{H^{(l)},F^{(k)}\}\right]
\nonumber\\
&=& \sum_{k=1}^n \left[\{H^{(1)},F^{(k+1)}\}+\sum_{r=k}^n
\{H^{(r-k+2)},F^{(k)}\}\right]
\nonumber\\
&=& \sum_{r=1}^n \left[\{H^{(1)},F^{(r+1)}\}+\sum_{k=1}^r
\{H^{(r-k+2)},F^{(k)}\}\right]
\nonumber\\
&=& \sum_{r=1}^n \; \sum_{k=1}^{r+1}
\{H^{(r-k+2)},F^{(k)}\}
\ea
where in the second step we have introduced the summation variable
$r=k+l-2$ which for given $k$ takes range in $k,.., n$ (lowest value
for $l=2$ and highest value for $l=n+2-k$) whence $l=r-k+2$, in the
third step we have changed the order of the $k$ and $r$ summation in the
second term (keeping in mind the constraint $1\le k\le r\le n$) while
the summation variable $k$ was renamed by $r$ in the first term and in
the fourth step we noticed that the first and second term can be combined by
having the $k$ summation extend to $r+1$.

Now we exploit the fact that $F$ is an exact invariant, that is
\be \label{sc.14}
0=\{H,F\}=\sum_{k,l=1}^\infty \{H^{(l)},F^{(k)}\}=\sum_{r=0}^\infty\;
\left[\sum_{k=1}^{r+1} \{H^{(r-k+2)},F^{(k)}\}\right]
\ee
where in the second step we collected all terms of order $r$ (notice
that $l=r-k+2\ge 1$ as required). Since (\ref{sc.13}) is an identity on
the entire phase space, the Taylor coefficients of $[\delta q]^k \;
[\delta p]^l$ have to vanish separately for all $k,l\ge 0$. The term
corresponding to order $r$ in (\ref{sc.13}) contains {\it all} terms of
the form $[\delta q]^s [\delta p]^{r-s},\;s=0,..,r$. Therefore we
conclude
\be \label{sc.15}
\sum_{k=1}^{r+1} \{H^{(r-k+2)},F^{(k)}\}=0
\ee
identically for all $r$. In particular, (\ref{sc.13}) implies
\be \label{sc.16}
\frac{d F_n}{d\tau}=O(\delta^{n+1})
\ee
\end{proof}
The only $n$ for which the term $O(\delta^{n+1})$ vanishes is for $n=1$
as one can see from (\ref{sc.12}) since then $k=1,l=2$ can only take one
value which already contributes to order $r=1$. Thus for $n=1$ we even
have
\be \label{sc.17}
\frac{d F_1}{d\tau}=0
\ee
It is instructive to see how the background equations find their way
into demonstrating the important result (\ref{sc.16}) which ensures that
an exact invariant expanded up to order $n$ remains an invariant up
to higher orders for the equations of motion expanded up to order $n$,
thus simplifying the task to integrate those equations of motion.


\section{Generalisation to Other Deparametrising Matter}
\label{sd} In this work dust was used as a reference frame in order
to define a physical time evolution. We chose dust because then, for
the case of no perturbations, the induced physical Hamiltonian
yields the exact FRW equations used in standard cosmology. However,
when perturbations of the metric and the scalar field are
considered, deviations from the standard FRW framework occur, which
are, however, still in agreement with observational data. Since
General Relativity does not tell us which is the right clock to use
for cosmology, we chose a clock such that the resulting physical
Hamiltonian is as close as possible to the FRW-Hamiltonian in
standard cosmology, where one uses the Hamiltonian constraint $c$ as
a true Hamiltonian. The physical Hamiltonian used here,
$H_{\mathrm{dust}}=\sqrt{C^2-Q^{ij}C_iC_j}$, reduces to
$H_{\mathrm{dust}}^{\scriptscriptstyle\mathrm{FRW}}=C$ for an FRW
universe, namely to the gauge invariant version of the Hamiltonian
constraint. However, the question arises how  generic are the
results obtained from a dust clock and what changes do we expect
when choosing other matter than dust to reparametrise or even
deparametrise the constraints of General Relativity. To illustrate
this issue, let us discuss the phantom clock introduced in \cite{8}
which leads to a physical Hamiltonian of the form \be
\HF_{\mathrm{phan}}=\int\limits_{\chi} d^3\sigma
H_{\mathrm{phan}}(\sigma), \ee with the Hamiltonian density
$H_{\mathrm{phan}}$ defined as \ba
H_{\mathrm{phan}}(\sigma)&=&\sqrt{\frac{1}{2}\left(F(C,C_i,Q_{ij})\right)
                        +\sqrt{\frac{1}{4}\left(F(C,C_i,Q_{ij})\right)^2
-\alpha^2Q^{ij}(\sigma,\tau)C_iC_j(\sigma)Q(\sigma,\tau)}} \ea where
\be F(C,C_i,Q_{ij}):=C^2(\sigma,\tau) -
Q^{ij}(\sigma,\tau)C_iC_j(\sigma) -\alpha^2Q(\sigma,\tau).
 \ee Above
we introduced the abbreviation $Q=\det(Q_{ij})$, and $\alpha>0$ is
for the moment an arbitrary constant of dimension $cm^{-2}$ that
enters the phantom field action as a free parameter. Recall that the
expressions for $C$ and $C_j$ were given by the geometry and matter
part of the total Hamiltonian and diffeomorphism constraint,
respectively, that is \be
C(\sigma,\tau)=C^{\mathrm{\rm geo}}(\sigma,\tau)+C^{\mathrm{matter}}(\sigma,
\tau)\quad\mathrm{and}\quad
C_j(\sigma)=C_j^{\mathrm{\rm geo}}(\sigma)+C_j^{\mathrm{matter}}(\sigma).
\ee From now on we will drop the $\tau$ dependence of
$C(\sigma,\tau)$ and $Q_{ij}(\sigma,\tau)$ in the expression for
$H_{\mathrm{phan}}$ and write them explicitly only when confusion
could arise otherwise.

For the dust Hamiltonian we saw that the first order equations of
motion obtained for $\Xi, \Pi$ and $Q_{ij},P^{ij}$ look similar to
the standard cosmological equations apart from the fact that in the
case of a dust clock we obtain a dynamical, that is phase space
dependent, lapse function $N_{\mathrm{dust}}=C/H_{\mathrm{dust}}$
and shift vector $N^i_{\mathrm{dust}}=N^i/H_{\mathrm{dust}}$.
Moreover, in the general equations
there occurs a term proportional to $N^i_{\mathrm{dust}}N^j_{\mathrm{dust}}$. \\
\\
\\
With respect to $\HF_{\mathrm{phan}}$, we want to analyse now
whether it produces a similar effect and leads to possibly more
serious deviations from the standard equations. Starting with the
first order equation for $\Xi$ we obtain \ba
\dot{\Xi}(\sigma,\tau)&=&\{\HF_{\mathrm{phan}},\Xi(\sigma,\tau)\}\\
&=&\int\limits_{\chi}d^3\sigma'\Big[\frac{C}{H_{\mathrm{phan}}}
(\sigma')\left(\frac{1}{2}
+\frac{(C^2-Q^{ij}C_iC_j-\alpha^2Q)(\sigma')}{4\sqrt{(\frac{1}{4}(C^2-Q^{ij}
C_iC_j-\alpha^2Q)^2-\alpha^2Q^{ij}C_iC_jQ)(\sigma')}}\right)\{C^{\mathrm{matter}
}(\sigma'),\Xi(\sigma,\tau)\}\nonumber \\
&&-\frac{Q^{ij}C_j}{H_{\mathrm{phan}}}(\sigma')\left(\frac{1}{2}
+\frac{(C^2-Q^{ij}C_iC_j+\alpha^2Q)(\sigma')}{4\sqrt{(\frac{1}{4}(C^2-Q^{ij}
C_iC_j-\alpha^2Q)^2-\alpha^2Q^{ij}C_iC_jQ)(\sigma')}}\right)\{C^{\mathrm{matter}
}(\sigma'),\Xi(\sigma,\tau)_i\}\Big]. \nonumber
\ea
Introducing the
dynamical lapse function $N_{\mathrm{phan}}$ and the dynamical shift
covector $N^i_{\mathrm{phan}}$ as \ba \label{NNaphan}
N_{\mathrm{phan}}(\sigma)&:=&\frac{C}{H_{\mathrm{phan}}}(\sigma)\left(\frac{1}{2
}
+\frac{(C^2-Q^{ij}C_iC_j-\alpha^2Q)(\sigma)}{4\sqrt{(\frac{1}{4}(C^2-Q^{ij}
C_iC_j-\alpha^2Q)^2-\alpha^2Q^{ij}C_iC_jQ)(\sigma)}}\right)\nonumber\\
N^i_{\mathrm{phan}}(\sigma)&:=&-\frac{Q^{ij}C_j}{H_{\mathrm{phan}}}
(\sigma)\left(\frac{1}{2}
+\frac{(C^2-Q^{ij}C_iC_j+\alpha^2Q)(\sigma)}{4\sqrt{(\frac{1}{4}(C^2-Q^{ij}
C_iC_j-\alpha^2Q)^2-\alpha^2Q^{ij}C_iC_jQ)(\sigma)}}\right),
 \ea
 we 
can rewrite the first order equation of motion for $\Xi$ as \be
\dot{\Xi}(\sigma,\tau)=\{\HF_{\mathrm{phan}},\Xi(\sigma,\tau)\}=\int\limits_{
\chi}d^3\sigma'\left(
N_{\mathrm{phan}}(\sigma')\{C^{\mathrm{matter}}(\sigma'),\Xi(\sigma,\tau)\}
+
N^i_{\mathrm{phan}}\{C^{\mathrm{matter}}_i(\sigma'),\Xi(\sigma,\tau)\}\right).
\ee Hence, we realise that, similarly to the case of the dust clock,
the effect of the phantom clock results in the appearance of a
dynamical lapse function and a dynamical shift vector. However, due
to the more complicated structure of $\HF_{\mathrm{phan}}$ as compared
to $\HF_{\mathrm{dust}}$,  $N_{\mathrm{phan}}$ and
$N^i_{\mathrm{phan}}$ are not simply given in terms of $C/H_{\rm phan}$ and $-C_j/H_{\rm phan}$ respectively. 
Now also terms occur which include higher than
linear powers of the constraints in the nominator and denominator.
Since the quantities $\Pi$ and obviously $Q_{ij}$ also Poisson
commute with any function that does depend on $Q_{ij}$ only, the
calculation works analogously in these cases and we obtain \be
\dot{\Pi}(\sigma,\tau)=\{\HF_{\mathrm{phan}},\Pi(\sigma,\tau)\}=\int\limits_{
\chi}d^3\sigma'\left(
N_{\mathrm{phan}}(\sigma')\{C^{\mathrm{matter}}(\sigma'),\Xi(\sigma)\}
+
N^i_{\mathrm{phan}}\{C^{\mathrm{matter}}_i(\sigma'),\Pi(\sigma,\tau)\}\right)
\ee and \be
\dot{Q}_{ij}(\sigma,\tau)=\{H_{\mathrm{phan}},Q_{ij}(\sigma)\}=\int\limits_{\chi
}d^3\sigma'\left(
N_{\mathrm{phan}}(\sigma')\{C^{\mathrm{geo}}(\sigma'),Q_{ij}(\sigma,\tau)\}
+
N^i_{\mathrm{phan}}\{C^{\mathrm{geo}}_i(\sigma'),Q_{ij}(\sigma,\tau)\}\right).
\ee For the dynamical variable $P^{ij}$ things look slightly
different, because $P^{ij}$ does not Poisson commute with functions
depending on $Q_{ij}$. Therefore we get, as  in the dust case,
an additional contribution proportional to the Poisson bracket
$\{Q^{kl}(\sigma'),P^{ij}(\sigma,\tau)\}$. Furthermore, since
$\HF_{\mathrm{phan}}$ includes a term of the form $\alpha^2Q$, we
also obtain a term proportional to the Poisson bracket
$\{Q(\sigma'),P^{ij}(\sigma,\tau)\}$. The explicit results for these
Poisson brackets are \ba
\{Q^{kl}(\sigma'),P^{ij}(\sigma,\tau)\}&=&\frac{\kappa}{2}(Q^{ik}Q^{lj}+Q^{jk}Q^{il}
)(\sigma',\tau)\delta^3(\sigma',\sigma)\nonumber\\
\{Q(\sigma'),P^{ij}(\sigma,\tau)\}&=&-\kappa(Q
Q^{ij})(\sigma',\tau)\delta^3(\sigma',\sigma).
 \ea Inserting this back
into the eqn. for $\dot{P}^{ij}$, we end up with \ba
\dot{P}^{ij}(\sigma,\tau)&=&\{\HF_{\mathrm{phan}},P^{ij}(\sigma,\tau)\}
\nonumber\\
&=&\int\limits_{\chi}d^3\sigma'\left(
N_{\mathrm{phan}}(\sigma')\{C^{\mathrm{geo}}(\sigma'),P^{ij}(\sigma,\tau)\} +
N^i_{\mathrm{phan}}\{C^{\mathrm{geo}}_i(\sigma'),P^{ij}(\sigma,\tau)\}
\right)\nonumber\\
&&-\frac{\kappa}{2}\left(H_{\mathrm{phan}}N^i_{\mathrm{phan}}N^j_{\mathrm{phan}}\left
[\frac{1}{2}
+\frac{(C^2-Q^{ij}C_iC_j+\alpha^2Q)}{4\sqrt{(\frac{1}{4}(C^2-Q^{ij}
C_iC_j-\alpha^2Q)^2-\alpha^2Q^{ij}C_iC_jQ)}}\right]^{-1}\right)(\sigma,
\tau)\nonumber\\
&&+\frac{\kappa}{2}\left(\frac{\alpha^2Q^{ij}Q}{H_{\mathrm{phan}}}\left[\frac{1}{2}
+\frac{C^2 + Q^{ij}C_iC_j-\alpha^2Q}{4\sqrt{\frac{1}{4}(C^2-Q^{ij}
C_iC_j-\alpha^2Q)^2-\alpha^2QQ^{ij}C_iC_j}}\right]\right)(\sigma,\tau).
\ea The remaining Poisson brackets in the first order equations for
the dynamical variables are the same Poisson brackets that occur
when $\HF_{\mathrm{dust}}$ is used as a Hamiltonian. Inserting the
results obtained there into the corresponding equations for the case
of $\HF_{\mathrm{phan}}$, we obtain the following final form of the
first order equations: \ba \label{FOeqn1}
\dot{\Xi}(\sigma,\tau)&=&\frac{N_{\mathrm{phan}}\Pi}{Q}(\sigma,\tau)
+
\left({\cal L}_{\vec{N}^{\mathrm{phan}}}\Xi\right)(\sigma,\tau)\nonumber\\
\dot{\Pi}(\sigma,\tau)&=&\left[N_{\mathrm{phan}}QQ^{ij}\Xi,i\right]_{,j}(\sigma,
\tau)-\frac{1}{2}(N_{\mathrm{phan}}QV'(\Xi))(\sigma,\tau) + \left({\cal
L}_{\vec{N}_{\mathrm{phan}}}\Pi\right)(\sigma,\tau)\nonumber\\
\dot{Q}_{ij}(\sigma,\tau)&=&\frac{2N_{\mathrm{phan}}}{Q}(\sigma,\tau)\left(G_{
ijmn}P^{mn}\right)(\sigma,\tau) + \left({\cal
L}_{\vec{N}_{\mathrm{\mathrm{phan}}}}Q\right)_{ij}(\sigma,\tau),
 \ea
where \be G_{ijmn}:=\frac{1}{2}\left(Q_{im}Q_{jn}+Q_{in}Q_{jm} -
Q_{ij}Q_{mn}\right), \ee with its inverse given by \be
[G^{-1}]^{ijmn}:=\frac{1}{2}\left(Q^{im}Q^{jn}+Q^{in}Q^{jm} -
2Q^{ij}Q^{mn}\right). \ee For the gravitational momentum we have \ba
\label{DotP}
\dot{P}^{ij}(\sigma,\tau)&=&\Big(N_{\mathrm{phan}}\Big[-\frac{Q_{mn}}{\sqrt{\det
{Q}}}\Big(2 P^{jm} P^{kn}-P^{jk} P^{mn}\Big)
+\frac{\kappa}{2}Q^{jk}\;C - \Q\;Q^{jk}\big(2\Lambda
+\frac{\kappa}{2\lambda}\big(\Xi^{,m}\Xi_{,m}+v(\Xi)\big)\big)\Big]
\nonumber\\
&& +\Q [G^{-1}]^{jkmn}\Big((D_mD_n
N_{\mathrm{phan}})-N_{\mathrm{phan}}R_{mn}[Q]\Big)
+\frac{\kappa}{2\lambda}N_{\mathrm{phan}}\Q\;\Xi^{,j}\Xi^{,k}
 +({\cal L}_{\vec{N}_{\mathrm{phan}}} P)^{jk}\Big)(\sigma,\tau)\nonumber\\
&&-\frac{\kappa}{2}\left(H_{\mathrm{phan}}N^i_{\mathrm{phan}}N^j_{\mathrm{phan}}\left
[\frac{1}{2}
+\frac{(C^2-Q^{ij}C_iC_j+\alpha^2Q)}{4\sqrt{(\frac{1}{4}(C^2-Q^{ij}
C_iC_j-\alpha^2Q)^2-\alpha^2Q^{ij}C_iC_jQ)}}\right]^{-1}\right)(\sigma,
\tau)\nonumber\\
&&+\frac{\kappa}{2}\left(\frac{\alpha^2Q^{ij}Q}{H_{\mathrm{phan}}}\left[\frac{1}{2}
+\frac{C^2 + Q^{ij}C_iC_j-\alpha^2Q}{4\sqrt{\frac{1}{4}(C^2-Q^{ij}
C_iC_j-\alpha^2Q)^2-\alpha^2QQ^{ij}C_iC_j}}\right]\right)(\sigma,\tau).
\ea One can see that the first order equations for $\Xi,\Pi$ and
$Q_{ij}$ in equation (\ref{FOeqn1}) are identical to those for the
dust Hamiltonian $\HF_{\mathrm{dust}}$, apart from the different
definitions of the dynamical lapse function $N_{\mathrm{phan}}$ and
shift vector $N^i_{\mathrm{phan}}$ in equation (\ref{NNaphan}).
However, the equation for $\dot{P}^{ij}$ differs from the
corresponding dust Hamiltonian equation. The term in the second last
line in equation (\ref{DotP}) corresponds to the term
$-\frac{\kappa}{2}(H_{\mathrm{dust}}N^i_{\mathrm{dust}}N^j_{\mathrm{dust}})(\sigma)$
 in the equation for $\dot{P}^{ij}$ derived from
$\HF_{\mathrm{dust}}$. In the present case this term looks a bit
more complicated, since we have to divide the whole expression by
the term in the square brackets which is identical to one in case of
$\HF_{\mathrm{dust}}$. The additional term in the last
line of equation (\ref{DotP}) comes from the terms $\alpha^2Q$ and
$\alpha^2Q^{ij}C_iC_j$ in $\HF_{\mathrm{phan}}$, which are absent in
$\HF_{\mathrm{dust}}$. Since in this term we cannot factor out $C$
or $C_i$ but only $\alpha^2$, we are also not able to reexpress
this term by means of the dynamical lapse function
$N_{\mathrm{phan}}$ and the dynamical shift vector
$N^i_{\mathrm{phan}}$, respectively, as it was possible for the term
in the second last line.
\\
\\
In summary, when using a phantom scalar field as a clock we also
obtain deviations from the standard treatment in which the
Hamiltonian constraint is used as a true Hamiltonian.  These
deviations manifest themselves in the appearance of a dynamical
lapse function $N_{\mathrm{phan}}$ and a dynamical shift vector
$N^i_{\mathrm{phan}}$, analogous to the case where dust is used as a
clock. However, the explicit dependence of $N_{\mathrm{phan}}$ and
$N^i_{\mathrm{phan}}$ on the dynamical variables is more complicated
than for the corresponding quantities $N_{dust}$ and $N^i_{dust}$.
Another modification occurs in the equation for the gravitational
momentum $P^{ij}$. While it contains a term that is second order in
$N^i_{\mathrm{phan}}$, in complete analogy with the case of
$\HF_{\mathrm{dust}}$, it also features an additional term
proportional to $\alpha^2$.
\subsection*{The Special Case of an FRW Universe}
It is interesting to study the special case of FRW also for
$\HF_{\mathrm{phan}}$. Recall from \cite{8a} that by assuming
homogeneity and isotropy, $\HF_{\mathrm{dust}}, N_{\mathrm{dust}}$
and $N^i_{\mathrm{dust}}$ reduce to the following quantities \be
N_{\mathrm{dust}}(\sigma)\stackrel{\mathrm{\scriptscriptstyle
FRW}}\longrightarrow 1,\quad
N^i_{\mathrm{dust}}(\sigma)\stackrel{\mathrm{\scriptscriptstyle
FRW}}\longrightarrow 0,\quad
H_{\mathrm{dust}}(\sigma)\stackrel{\mathrm{\scriptscriptstyle
FRW}}\longrightarrow C^{\scriptscriptstyle\mathrm{FRW}}(\sigma), \ee
where \be
C^{\scriptscriptstyle\mathrm{FRW}}_{\mathrm{dust}}(\tau)=A^3
(\tau)\Big(\frac{1}{\kappa}\big(-6\Big(\frac{\dot{A}}{A}\Big)^2 +
2\Lambda\big) + \frac{1}{2\lambda} \left(\dot{\Xi}^2 +
V(\Xi)\right)\Big)(\tau). \ee Here $A=A(\tau)$ is a function of dust
time $\tau$ and the dot refers to a derivative with respect to dust
time. In particular $A(\tau)$ can be understood as the gauge
invariant extension of the ordinary scale factor $a(t)$ used in
standard cosmology. The difference between the two is that $A$ is
gauge invariant and thus a physical observable whereas $a$ is not,
since it does not commute with the Hamiltonian  constraint of FRW.
\\
The gravitational canonical variables are given by
$Q_{ij}=A^2(\tau)\delta_{ij}$ and
$P^{ij}=-2\dot{A}(\tau)\delta^{ij}$. We mentioned previously that a
consequence of this behaviour is that the unperturbed equations of
motion for $(\Xi,\Pi)$ and $(Q_{ij},P^{ij})$ agree with the FRW
equations used in standard cosmology. In particular, the deviation
from the general standard equation of motion for $P^{ij}$ vanishes
in the case of FRW, because it is quadratic in
$N^i_{\mathrm{dust}}$.
\\
For the phantom Hamiltonian $\HF_{\mathrm{phan}}$ things look
slightly different. Here we have the following behaviour of
$H_{\mathrm{phan}}, N_{\mathrm{phan}}$ and $N^i_{\mathrm{phan}}$,
when a homogenous and isotropic universe is considered: \be
\label{Limit}
N_{\mathrm{phan}}(\sigma)\stackrel{\mathrm{\scriptscriptstyle
FRW}}\longrightarrow
N^{\scriptscriptstyle\mathrm{FRW}}_{\mathrm{phan}}:=
\left(\frac{C^{\scriptscriptstyle\mathrm{FRW}}_{\mathrm{phan}}}{H^{
\scriptscriptstyle\mathrm{FRW}}_{\mathrm{phan}}}\right)(\sigma),\quad
N^i_{\mathrm{phan}}(\sigma)\stackrel{\mathrm{\scriptscriptstyle
FRW}}\longrightarrow 0,\quad
H_{\mathrm{phan}}(\sigma)\stackrel{\mathrm{\scriptscriptstyle
FRW}}\longrightarrow
H^{\scriptscriptstyle\mathrm{FRW}}_{\mathrm{phan}}:=
\sqrt{(C^{\scriptscriptstyle\mathrm{FRW}}_{\mathrm{phan}})^2-\alpha^2A^6}
(\sigma) \ee with \be
C^{\scriptscriptstyle\mathrm{FRW}}_{\mathrm{phan}}(\tau)=\frac{A^3}{(N^{
\scriptscriptstyle\mathrm{FRW}}_{\mathrm{phan}})^2}(\tau)\Big(\frac{1}{\kappa}
\big(-6\Big(\frac{\dot{A}}{A}\Big)^2 +
2\Lambda(N^{\scriptscriptstyle\mathrm{FRW}}_{\mathrm{phan}})^2\big)
+ \frac{1}{2\lambda} \left(\dot{\Xi}^2 +
V(\Xi)(N^{\scriptscriptstyle\mathrm{FRW}}_{\mathrm{phan}})^2\right)\Big)(\tau).
\ee Hence, the dust clock and the phantom scalar field clock agree
only if the parameter $\alpha$ is chosen to be tiny compared to the
Hamiltonian constraint
$C^{\scriptscriptstyle\mathrm{FRW}}_{\mathrm{phan}}$, e.g. $\alpha
A^3\ll C^{\scriptscriptstyle\mathrm{FRW}}_{\mathrm{phan}}$.
Consequently, the equations of motion generated by
$\HF_{\mathrm{phan}}$ also deviate from the standard FRW equations.
The significance of this deviation again depends on the specific
value of the parameter $\alpha$, as was discussed in detail in
\cite{8}. For completeness we also list the first order equations of
motion here \ba
\dot{\Xi}(\tau)&=&\big(N^{\scriptscriptstyle\mathrm{FRW}}_{\mathrm{phan}}\frac{
\Pi}{\sqrt{Q}}\big)(\tau)\\
\dot{\Pi}(\tau)&=&-\big(N^{\scriptscriptstyle\mathrm{FRW}}_{\mathrm{phan}}\frac{
\sqrt{Q}}{2}V'(\Xi)\big)(\tau)\nonumber\\
\dot{Q}_{ij}&=&\big(N^{\scriptscriptstyle\mathrm{FRW}}_{\mathrm{phan}}\frac{2}{
\sqrt{Q}}G_{ijmn}P^{mn}\big)(\tau)\nonumber\\
\dot{P}^{ij}&=&N^{\scriptscriptstyle\mathrm{FRW}}_{\mathrm{phan}}(\tau)
\left(-\frac{Q_{mn}}{\sqrt{Q}}\left(2P^{im}P^{jn}-P^{ij}P^{mn}\right)
+\frac{\kappa}{2}Q^{ij}C^{\scriptscriptstyle\mathrm{FRW}}_{\mathrm{phan}}
-
Q^{ij}\sqrt{Q}(2\Lambda+\frac{\kappa}{2\lambda}V(\Xi))
+\frac{\alpha^2\kappa}{2}\frac{Q}{C^{\rm\scriptscriptstyle FRW}}Q^{ij}
\right)(\tau).\nonumber
\ea Taking into account that $Q_{ij}=A^2\delta_{ij}$ and using the
first order equation for $Q_{ij}$, we solve for the momenta
$P^{ij}=-2\dot{A}/N^{\scriptscriptstyle\mathrm{FRW}}_{\mathrm{phan}}\delta^{ij}$
in terms of $\dot{Q}_{jk}=2A\dot{A}$. In order to derive the
corresponding FRW equation with respect to the time generated by
$\HF^{\scriptscriptstyle\mathrm{FRW}}_{\mathrm{phan}}$, we take the
time derivative of the equation for $\dot{Q}_{ij}$ and insert into
the resulting equation for $\ddot{Q}_{ij}$ the expression for
$P^{ij}$ and $\dot{P}^{ij}$ given above. This yields \be
\label{ddotA1}
\Big(\frac{\ddot{A}}{A}\Big)=-\frac{1}{2}\Big(\frac{\dot{A}}{A}\Big)^2+\frac{1}{
2}(N^{\scriptscriptstyle\mathrm{FRW}}_{\mathrm{phan}})^2\Lambda-\frac{\kappa}{
4\lambda}\big(\frac{1}{2}\dot{\Xi}^2-\frac{1}{2}(N^{\scriptscriptstyle\mathrm{
FRW}}_{\mathrm{phan}})^2V(\Xi)\big)
+\frac{\dot{N}^{\scriptscriptstyle\mathrm{FRW}}_{\mathrm{phan}}}{N^{
\scriptscriptstyle\mathrm{FRW}}_{\mathrm{phan}}}\Big(\frac{\dot{A}}{A}\Big)-\frac{\kappa}{4}(N^{\scriptscriptstyle\mathrm{
FRW}}_{\mathrm{phan}})^2\Big(\frac{\alpha^2 A^3}{C^{\scriptscriptstyle\mathrm{FRW}}_{\mathrm{phan}}}\Big).
\ee 
Apart from the lapse functions in the equation above which are
not present in the standard FRW equations, we get an additional term
including the time derivative of the lapse function. Using the
explicit definition of the lapse function, we can perform this time
derivative, leading to \be
\frac{\dot{N}^{\scriptscriptstyle\mathrm{FRW}}_{\mathrm{phan}}}{N^{
\scriptscriptstyle\mathrm{FRW}}_{\mathrm{phan}}}\Big(\frac{\dot{A}}{A}
\Big)=\frac{3\big((N^{\scriptscriptstyle\mathrm{FRW}}_{\mathrm{phan}})^2-1\big)}
{(N^{\scriptscriptstyle\mathrm{FRW}}_{\mathrm{phan}})^2}\Big(\frac{\dot{A}}{A}
\Big)^2. \ee Consequently, equation (\ref{ddotA1}) can be rewritten
as \be \label{ddotA2}
\Big(\frac{\ddot{A}}{A}\Big)=-\Big(\frac{\dot{A}}{A}\Big)^2\Big(\frac{1}{2}-
\frac{3\big((N^{\scriptscriptstyle\mathrm{FRW}}_{\mathrm{phan}})^2-1\big)}{(N^{
\scriptscriptstyle\mathrm{FRW}}_{\mathrm{phan}})^2}\Big)
+\frac{1}{2}(N^{\scriptscriptstyle\mathrm{FRW}}_{\mathrm{phan}})^2\Lambda-\frac{
\kappa}{4\lambda}\big(\frac{1}{2}\dot{\Xi}^2-\frac{1}{2}(N^{
\scriptscriptstyle\mathrm{FRW}}_{\mathrm{phan}})^2V(\Xi)\big)
-\frac{\kappa}{4}(N^{\scriptscriptstyle\mathrm{
FRW}}_{\mathrm{phan}})^2\Big(\frac{\alpha^2 A^3}{C^{\scriptscriptstyle\mathrm{FRW}}_{\mathrm{phan}}}\Big).
\ee
 The term $(\dot{A}/A)^2$ can be replaced by considering the energy conservation
law  $\dot{H^{\scriptscriptstyle\mathrm{FRW}}_{\mathrm{phan}}}=0$ ,
that is
$H^{\scriptscriptstyle\mathrm{FRW}}_{\mathrm{phan}}=\epsilon_0$,
from which we get
$C^{\scriptscriptstyle\mathrm{FRW}}_{\mathrm{phan}}=\sqrt{
\epsilon_0^2+\alpha^2A^6}$. Solving this equation for
$(\dot{A}/A)^2$ yields \ba \label{ConsLaw}
3\Big(\frac{\dot{A}}{A}\Big)^2&=&(N^{\scriptscriptstyle\mathrm{FRW}}_{\mathrm{
phan}})^2\Lambda+\frac{\kappa}{2\lambda}\Big(\frac{1}{2}\dot{\Xi}^2+\frac{1}{2}
V(\Xi)(N^{\scriptscriptstyle\mathrm{FRW}}_{\mathrm{phan}})^2\Big)-\frac{\kappa}{
2}
(N^{\scriptscriptstyle\mathrm{FRW}}_{\mathrm{phan}})^2\alpha\sqrt{1+\frac{
\epsilon_0}{\alpha^2A^6}}\nonumber\\
&=&
(N^{\scriptscriptstyle\mathrm{FRW}}_{\mathrm{phan}})^2\Lambda+\frac{\kappa}{
2\lambda}(N^{\scriptscriptstyle\mathrm{FRW}}_{\mathrm{phan}})^2\Big(\rho_{
\mathrm{matter}}+\rho_{\mathrm{phan}}\Big),
 \ea where we used in the
last line \be
\rho_{\mathrm{matter}}=\frac{1}{2}\frac{1}{(N^{\scriptscriptstyle\mathrm{FRW}}_{
\mathrm{phan}})^2}\dot{\Xi}^2+\frac{1}{2}V(\Xi)
\quad\mathrm{and}\quad
\rho_{\mathrm{phan}}=-\alpha\sqrt{1+\frac{\epsilon_0}{\alpha^2A^6}}.
\ee Reinserting equation (\ref{ConsLaw}) into equation
(\ref{ddotA2}), we obtain the phantom FRW equation given by \be
3\Big(\frac{\ddot{A}}{A}\Big)=\Lambda\Big(1+4\big((N^{\scriptscriptstyle\mathrm{
FRW}}_{\mathrm{phan}})^2-1\big)\Big)
-\frac{\kappa}{4}\Big[\Big(\frac{1}{\lambda}\rho_{\mathrm{matter}}+\rho_{\mathrm
{phan}}\Big)\Big(1-5\big((N^{\scriptscriptstyle\mathrm{FRW}}_{\mathrm{phan}}
)^2-1\big)\Big)+3(N^{
\scriptscriptstyle\mathrm{FRW}}_{\mathrm{phan}})^2\Big(\frac{1}{\lambda}p_{\mathrm{matter}}
+p_{\mathrm{phan}}\Big)
\Big], \ee whereby
we introduced \ba
p_{\mathrm{matter}}&=&\frac{1}{2}\frac{1}{(N^{\scriptscriptstyle\mathrm{FRW}}_{
\mathrm{phan}})^2}\dot{\Xi}^2-\frac{1}{2}V(\Xi)\nonumber\\
p_{\mathrm{matter}}&=&-\frac{1}{3A^2}\frac{d}{dA}\Big(A^3\rho_{\rm phan}\Big)=\frac{\alpha^2 A^3}{C^{
\scriptscriptstyle\mathrm{FRW}}_{\mathrm{phan}}}
. \ea That this
equation agrees with the one derived in \cite{8} can be seen when
expressing $(N^{\scriptscriptstyle\mathrm{FRW}}_{\mathrm{phan}})^2$
in terms of the deviation parameter $x:=\epsilon_0/\alpha^2A^6$ used
there, resulting in
$(N^{\scriptscriptstyle\mathrm{FRW}}_{\mathrm{phan}})^2=1+1/x$. 
\\
\\
In general, choosing one clock or the other might have significant
effects on the equation of motion. General Relativity does not tell
us which clock is convenient to work with, hence additional physical
input is needed. The results of the application of this framework
for FRW in \cite{8a} show that choosing dust as clock reproduces the
standard FRW equations. Thus we could call the dust clock the
FRW-clock. Since so far an (approximate) FRW universe is in
agreement with observational data, dust seems to be a good choice.
However, the $\alpha$ parameter in
$\HF^{\scriptscriptstyle\mathrm{FRW}}_{\mathrm{phan}}$ can be chosen
such that the resulting equation of motion also do not contradict
present experiments. Therefore, based on experimental constraints,
none of the two clocks is excluded, nor is one of them preferred.
From a theoretical point of view, the choice of a clock is mainly
guided by the requirement that the constraints can be
deparametrised, that is, they can be written in the form
$C=p^{\mathrm{clock}}+H^{\mathrm{clock}}$. Here the Hamiltonian
density $H^{\mathrm{clock}}$ must not depend on the clock variables
anymore, and furthermore it should be positive definite.
Additionally the structure of $H^{\mathrm{clock}}$ should not be too
complicated such that calculations of, for
instance, the equation of motions are still possible.\\
However, in principle, we have a large amount of freedom to choose a
 clock, as long as the induced equations of motion do not contradict
experiments.

\section{Linear Perturbation  Theory: Some Calculations in More Detail}
\label{LPA}

In section \ref{DerDQ} we derived the second order
equation of motion for the linear perturbation
 of the (manifestly) gauge invariant three metric $\delta Q_{jk}$. For that we
needed the
  perturbation of the geometry and matter part of the gauge invariant
Hamiltonian
  constraint, denoted by $C^{\mathrm{geo}}$ and $C^{\mathrm{matter}}$,
respectively, as these terms occur in
  the third term on the right-hand side of the unperturbed equation of motion
for $Q_{jk}$,
  equation (\ref{ResDDQ}). We omitted the details in the main text due to their
length, and also
  because it turns out that several terms cancel when inserted back into the
  expression of the perturbation of the third term, shown in equation
(\ref{Per31}).
   For the interested reader, however, the detailed perturbations of the
constraints are given below.
\\
The perturbed geometry constraint $\delta C^{\mathrm{geo}}$ is given by
\ba
\delta C^{\mathrm{geo}}
&=&
\Big[\QN{j}{k}\overline{C}^{\mathrm{geo}}
+\frac{1}{2}\frac{\ov{N}^j\ov{N}^k}{\ov{N}^2}\Big(\ov{C}^{\mathrm{geo}}-2\frac{
\Qb}{\kappa}\big(2\Lambda-\ov{R}\big)\Big)
+\frac{\Qb}{\kappa}\Big(\ov{R}^{jk}
-[\ov{G}^{-1}]^{jkmn}\ov{D}_m\ov{D}_n\Big)\nonumber\\
&&
\quad
+\frac{\Qb}{2\ov{N}^2\kappa}\Big(\dot{\ov{Q}}_{mn}-\big({\cal
L}_{\Nb}\ov{Q}\big)_{mn}\Big)
\Big[[\ov{G}^{-1}]^{jkmn}\Big(\frac{\partial}{\partial\tau}-{\cal L}_{\Nb}\Big)
-\Big(\dot{\ov{Q}}_{rs}-\big({\cal
L}_{\Nb}Q\big)_{rs}\Big)\ov{Q}^{nk}[\ov{G}^{-1}]^{rsmj}\Big]\nonumber\\
&&
\quad
+\frac{\Qb}{2\ov{N}^2\kappa}\Big(\dot{\ov{Q}}_{rs}-\big({\cal
L}_{\Nb}Q\big)_{rs}\Big)[\overline{G}^{-1}]^{rsmn}
\Big[\ov{Q}^{jt}\ov{N}^k[\ov{Q}_{mn}]_{,t}
+2\overline{Q}_{tn}\frac{\partial}{\partial x^m}\Big(\ov{Q}^{jt}\ov{N}^k\Big)\Big]
\Big]\delta Q_{jk}\nonumber\\
&&
+\Big[
-2\Nv{j}\Big(\ov{C}^{\mathrm{geo}}-\frac{\Qb}{\kappa}\big(2\Lambda-\ov{R}
\big)\Big)\nonumber\\
&& \quad
-\frac{\Qb}{2\ov{N}^2\kappa}\Big(\dot{\ov{Q}}_{rs}-\big({\cal
L}_{\Nb}Q\big)_{rs}\Big)[\ov{G}^{-1}]^{rsmn}
\Big[\ov{Q}^{jt}[\ov{Q}_{rs}]_{,t}
+2\ov{Q}_{ts}\frac{\partial}{\partial x^k}\Big(\ov{Q}^{jt}\Big)\Big]\Big]\delta
N_j.
 \ea
 Here we used that the perturbation of the Ricci scalar can
be written as \be \delta
R=\Big[[\ov{G}^{-1}]^{jkmn}\ov{D}_m\ov{D}_n-\ov{R}^{jk}\Big]\delta
Q_{jk}. \ee For the perturbed matter part of the constraint $\delta
C^{\mathrm{matter}}$ we obtain \ba \delta C^{\mathrm{matter}} &=&
\Big[\QN{j}{k}\overline{C}^{\mathrm{matter}}
+\frac{1}{2}\frac{\ov{N}^j\ov{N}^k}{\ov{N}^2}\Big(\ov{C}^{\mathrm{matter}}-\frac
{\Qb}{\lambda}\Big(\ov{Q}^{jk}\ov{\Xi}_{,j}\ov{\Xi}_{,k}+v(\Xi)\Big)\Big)\\
&&
\quad
+\frac{\Qb}{\ov{N}^2\lambda}\ov{Q}^{jm}\ov{Q}^{kn}\big(\dot{\Xi}-\big({\cal
L}_{\Nb}\ov{\Xi}\big)\big)\ov{N}_m\ov{\Xi}_{,n}
-\frac{\Qb}{2\lambda}\ov{Q}^{jm}\ov{Q}^{jk}\ov{\Xi}_{,m}\ov{\Xi}_{,n}
\Big]\delta Q_{jk}\nonumber\\
&& +\Big[ -\frac{\Qb}{\ov{N}^2\lambda}\big(\dot{\Xi}-\big({\cal
L}_{\Nb}\ov{\Xi}\big)\big)\ov{Q}^{jk}\ov{\Xi}_{,k}
-2\Nv{j}\Big(\ov{C}^{\mathrm{matter}}-\frac{\Qb}{2\lambda}\Big(\ov{Q}^{jk}\ov{
\Xi}_{,j}\ov{\Xi}_{,k}+v(\ov{\Xi})\Big)\Big)\Big]
\delta N_j\nonumber\\
&& +\Big[
\frac{\Qb}{\lambda}\Big(\frac{1}{\ov{N}^2}\Big(\dot{\ov{\Xi}}-\big({\cal
L}_{\Nb}\ov{\Xi}\big)\Big)\Big(\frac{\partial}{\partial\tau}-{\cal
L}_{\Nb}\Big) +\ov{Q}^{jk}\ov{\Xi}_{,k}\frac{\partial}{\partial x^j}
+\frac{1}{2}v^{\prime}(\ov{\Xi})\Big)\Big]\delta\Xi .\nonumber
 \ea
Since the perturbation of $\delta C=\delta C^{\mathrm{geo}}+\delta
C^{\mathrm{matter}}$ occurs in equation (\ref{Per31}) multiplied by
a factor of $\frac{\kappa\ov{N}^2}{2\Qb}$, we will present it here
already with this factor in front \ba
\lefteqn{\frac{\kappa\ov{N}^2}{2\Qb}\delta C}
\\
&=&
\Big[\QN{m}{n}\frac{\kappa\ov{N}^2}{2\Qb}\overline{C}
\nonumber\\
\quad
&&+\frac{1}{2}\frac{\ov{N}^m\ov{N}^n}{\ov{N}^2}\frac{\kappa\ov{N}^2}{2\Qb}
\Big(\ov{C}-2\frac{\Qb}{\kappa}\big(2\Lambda-\ov{R}\big)
-\frac{\Qb}{\lambda}\big(\ov{Q}^{rs}\ov{\Xi}_{,r}\ov{\Xi}_{,s}+v(\Xi)\big)
\Big)\nonumber\\
&&
\quad
+\frac{\ov{N}^2}{2}\Big(\ov{R}^{mn}
-[\ov{G}^{-1}]^{mnrs}\ov{D}_r\ov{D}_s\Big)+\frac{\kappa}{2\lambda}\ov{Q}^{mr}\ov
{Q}^{ns}\big(\dot{\Xi}-\big({\cal
L}_{\Nb}\ov{\Xi}\big)\big)\ov{N}_r\ov{\Xi}_{,s}
-\frac{\kappa\ov{N}^2}{2\lambda}\ov{Q}^{jm}\ov{Q}^{kn}\ov{\Xi}_{,j}\ov{\Xi}_{,k}
\nonumber\\
&&
\quad
+\frac{1}{4}\Big(\dot{\ov{Q}}_{rs}-\big({\cal L}_{\Nb}\ov{Q}\big)_{rs}\Big)
\Big[[\ov{G}^{-1}]^{mnrs}\Big(\frac{\partial}{\partial\tau}-{\cal L}_{\Nb}\Big)
-\Big(\dot{\ov{Q}}_{tu}-\big({\cal
L}_{\Nb}Q\big)_{tu}\Big)\ov{Q}^{sn}[\ov{G}^{-1}]^{turm}\Big]\nonumber\\
&&
\quad
+\frac{1}{4}\Big(\dot{\ov{Q}}_{rs}-\big({\cal L}_{\Nb}Q\big)_{rs}\Big)
[\overline{G}^{-1}]^{rsjk}
\Big[\ov{Q}^{mt}\ov{N}^n[\ov{Q}_{jk}]_{,t}
+2\overline{Q}_{tk}\frac{\partial}{\partial x^j}\Big(\ov{Q}^{mt}\ov{N}^n\Big)\Big]
\Big]\delta Q_{mn}\nonumber\\
&&
+\Big[
-2\frac{\kappa\ov{N}^2}{2\Qb}\Nv{m}\Big(\ov{C}-\frac{\Qb}{\kappa}
\big(2\Lambda-\ov{R}\big)-\frac{\Qb}{2\lambda}\big(\ov{Q}^{rs}\ov{\Xi}_{,r}\ov{
\Xi}_{,s}+v(\Xi)\big)\Big)
-\frac{\kappa}{2\lambda}\big(\dot{\Xi}-\big({\cal
L}_{\Nb}\ov{\Xi}\big)\big)\ov{Q}^{mn}\ov{\Xi}_{,n}
\nonumber\\
&&
\quad
-\frac{1}{4}\Big(\dot{\ov{Q}}_{rs}-\big({\cal L}_{\Nb}Q\big)_{rs}\Big)
[\ov{G}^{-1}]^{rsjk}
\Big[\ov{Q}^{mt}[\ov{Q}_{jk}]_{,t}
+2\ov{Q}_{tk}\frac{\partial}{\partial x^j}\Big(\ov{Q}^{mt}\Big)\Big]
\Big]\delta
N_m\nonumber\\
&& +\Big[
\frac{\ov{N}^2\kappa}{2\lambda}\Big(\frac{1}{\ov{N}^2}\Big(\dot{\ov{\Xi}}-\big({
\cal
L}_{\Nb}\ov{\Xi}\big)\Big)\Big(\frac{\partial}{\partial\tau}-{\cal
L}_{\Nb}\Big) +\ov{Q}^{mn}\ov{\Xi}_{,n}\frac{\partial}{\partial x^m}
+\frac{1}{2}v^{\prime}(\ov{\Xi})\Big)\Big]\delta\Xi .\nonumber \ea
Going back to the second order equation of motion for $Q_{jk}$ shown
in equation (\ref{ResDDQ}), we remind the reader that the
perturbation of the first term on the right-hand side involves a
term that we had already calculated for the equation of motion of
$\delta\Xi$. For this reason, we presented in the main text only the
perturbation of the remaining term $(\dot{Q}_{jk} - ({\cal
L}_{\N}Q)_{jk})$, not the final result for the full first term. For
those interested in more detail, we display it here: \ba
 \lefteqn{\delta\Big(\Big[
\frac{\dot{N}}{N}-\frac{(\Q)^{\bf\dot{}}}{\Q}+\frac{N}{\Q}\Big({\cal
L}_{\vec{N}}\frac{\Q}{N}\Big)\Big]\big(\dot{Q}_{jk}-\big({\cal
L}_{\Nb}Q\big)_{jk}\big)\Big)}\\
&=&
\Big[\Big[
\frac{\dot{\ov{N}}}{\ov{N}}-\frac{(\Qb)^{\bf\dot{}}}{\Qb}+\frac{\ov{N}}{\Qb}
\Big({\cal
L}_{\Nb}\frac{\Qb}{\ov{N}}\Big)\Big]\Big(\ov{Q}^{rn}\ov{N}^n[\ov{Q}_{jk}]_{,r}
+\ov{Q}_{rk}\frac{\partial}{\partial
x^j}\Big(\ov{Q}^{rn}\ov{N}^m\Big)+\ov{Q}_{rj}\frac{\partial}{\partial
x^k}\Big(\ov{Q}^{rn}\ov{N}^m\Big)\nonumber\\
&&
\quad
-\Big(\dot{\ov{Q}}_{jk}-\big({\cal L}_{\Nb}\ov{Q}\big)_{jk}\Big)
\Big(\big(\frac{\partial}{\partial\tau}-{\cal L}_{\Nb}\big)\Big(\QN{m}{n}\Big)
+\frac{\ov{N}}{\Qb}\frac{\partial}{\partial
x^k}\big(\frac{\Qb}{\ov{N}}\ov{N}^m\ov{Q}^{rn}\big)\Big)\Big]\delta Q_{mn}
\nonumber\\
&&
+\Big[
-\Big[\frac{\dot{\ov{N}}}{\ov{N}}-\frac{(\Qb)^{\bf\dot{}}}{\Qb}+\frac{\ov{N}}{
\Qb}\Big({\cal
L}_{\Nb}\frac{\Qb}{\ov{N}}\Big)\Big]
\Big(\ov{Q}^{mn}[\ov{Q}_{jk}]_{,r}+\ov{Q}_{mk}\frac{\partial}{\partial
x^j}\Big(\ov{Q}^{mn}\Big)+\ov{Q}_{mj}\frac{\partial}{\partial
x^k}\Big(\ov{Q}^{mn}\Big)\nonumber\\
&&
\quad
+\Big(\dot{\ov{Q}}_{jk}-\big({\cal L}_{\Nb}\ov{Q}\big)_{jk}\Big)
\Big(\big(\frac{\partial}{\partial\tau}-{\cal L}_{\Nb}\big)\Big(\Nv{m}\Big)
+\frac{\ov{N}}{\Qb}\frac{\partial}{\partial
x^k}\big(\frac{\Qb}{\ov{N}}\ov{Q}^{mn}\big)\Big)
\Big]\delta N_{m}\nonumber\\
&& +\Big[
\Big[\frac{\dot{\ov{N}}}{\ov{N}}-\frac{(\Qb)^{\bf\dot{}}}{\Qb}+\frac{\ov{N}}{\Qb
}\Big({\cal
L}_{\Nb}\frac{\Qb}{\ov{N}}\Big)\Big]\Big(\frac{\partial}{\partial\tau}-{\cal
L}_{\Nb}\Big)\Big]\delta Q_{jk}.\nonumber
\ea
\vspace{2cm}
\end{appendix}

\section{Gauge Invariant Versus Gauge Fixed Formalism}
\label{sg}

In this appendix we investigate the question under which circumstances a 
manifestly gauge invariant formulation of a constrained system can be 
equivalently described by a gauge fixed version.\\
\\
We begin quite generally and 
consider a (finite dimensional) constrained Hamiltonian system with 
first class constraints
$C_I,\;I=1,..,m$ on a phase space with canonical pairs 
$(q^a,p_a),\;a=1,..,n;\;m\le n$. 
If there is a true, gauge invariant 
Hamiltonian 
$H$ (not constrained to vanish) 
enlarge the phase space by an additional canonical pair $(q^0,p_0)$ 
and additional first class constraint $C_0=p_0+H$. The reduced phase
space and dynamics of the enlarged system is equivalent to the original 
one, hence we consider without loss of generality a system with no true
Hamiltonian.

The canonical Hamiltonian of the system is a linear combination of 
constraints
\be \label{g.1}
H_{{\rm can}}=\lambda^I C_I
\ee
for some Lagrange multipliers $\lambda^I$ whose range specifies the 
amount of gauge freedom. A gauge fixing is defined by a set of 
gauge fixing functions $G_I$ with the property that the matrix
with entries $M_{IJ}:=\{C_I,G_J\}$ has everywhere (on phase space) non 
vanishing determinant\footnote{Ideally, the gauge $G_I=0$ should define 
a unique point in each gauge orbit.}. 
Notice that 
we allow for gauge fixing conditions 
that display an explicit time dependence. The conservation in time 
of the gauge fixing conditions
\be \label{g.2}
0=\frac{d}{dt}G_I=\frac{\partial}{\partial t}G_I+\{H_{{\rm can}},G_I\}  
=\frac{\partial}{\partial t}G_I+\lambda^J M_{JI}
\ee
uniquely fixes the Lagrange multipliers to be the following phase space
dependent functions 
\be \label{g.3}
\lambda^I=-\frac{\partial G_J}{\partial t}\; (M^{-1})^{JI}=:\lambda^I_0
\ee
By arbitrarily splitting the set of canonical pairs $(q^a,p_a)$ into two 
sets $(T^I,\pi_I),\;I=1,..,m$ and $(Q^A,P_A),\;A=1,..,n-m$ we can solve 
$C_I=G_I=0$ for 
\be \label{g.4}
C'_I=\pi_I+h'_I(Q,P)=0,\;\;
G'_I=T^I-\tau^I(Q,P)=0
\ee
for certian functions $h,\tau$ which generically will be explicitly
time dependent.
The variables $T,\pi$ are called the gauge degrees of freedom and 
$Q,P$ are called the true degrees of freedom (although typically neither 
of them is gauge invariant). 

The reduced Hamiltonian $H_{{\rm red}}(Q,P)$, if it exists, is supposed 
to generate the same equations of motion for $Q,P$ as the canonical 
Hamiltonian does, when the constraints and the gauge fixing conditions 
are satisfied and the Lagrange multipliers assume their fixed values
(\ref{g.3}), that is,
\be \label{g.5}    
\{H_{{\rm red}},f\}
=\{H_{{\rm can}},f\}_{C=G=\lambda-\lambda_0=0}
=[\lambda_0^I \{C_I,f\}]_{C=G=\lambda-\lambda_0=0}
\ee
for any function $f=f(Q,P)$. For general gauge fixing functions the
reduced Hamiltonian  
will not exist, the system of PDE's to which (\ref{g.5}) is 
equivalent to, will not be integrable. 

However, a so called coordinate gauge fixing condition $G^I=T^I-\tau^I$ 
with 
$\tau^I$ independent of the phase space always leads to a reduced 
Hamiltonian as follows: We can always (locally) write the constraints in 
the form (at least weakly)
\be \label{g.6}
C_I=M_{IJ}(\pi_I+h_I(T,Q,P))
\ee
Then, noticing that $M_{IJ}=\{C_I,T_J\}$, (\ref{g.5}) becomes 
\be \label{g.7}    
\{H_{{\rm red}},f\}
=[\lambda_0^I M_{IJ} \{h_I,f\}]_{C=G=\lambda-\lambda_0=0}
=[\dot{\tau}_I \{h_I,f\}]_{G=0}
=\{\dot{\tau}_I \tilde{h}_I,f\}
\ee
with $\tilde{h}_I=h_I(T=\tau,Q,P)$ and we used 
that $f$ only depends on $Q,P$. This displays the reduced Hamiltonian as
\be \label{g.8}
H_{{\rm red}}(Q,P;t)=\dot{\tau}_I(t) h_I(T=\tau(t),Q,P))
\ee
It will be explicitly time dependent unless $\dot{\tau}_I$ is time 
independent and $h_I$ is independent of $T$, that is, unless those 
constraints can be deparametrised for which $\dot{\tau}_I\not=0$.
Hence, deparametrisation is crucial for having a conserved 
Hamiltonian system.

On the other hand, let us consider the gauge invariant point of view.
The observables associated with $f(Q,P)$ are given by
\be \label{g.9}
O_f(\tau)=[\exp(\beta^I X_I)\cdot f]_{\beta=\tau-T}
\ee
where we have denoted the Abelian Hamiltonian vector fields
$X_I$ by $X_I:=\{\pi_I+h_I,.\}$. 
Consider a one parameter family of flows $t\mapsto \tau^I(t)$ then 
with $O_f(t):=O_f(\tau(t))$ we find 
\be \label{g.10}
\frac{d}{dt} O_f(t) =
\dot{\tau}^I(t)\sum_{n=0}^\infty\;\;\frac{\beta^{J_1}..\beta^{J_n}}{n!}\;\;
X_I X_{J_1} .. X_{J_n} \cdot f
\ee
On the other hand, consider $H_I(t):=O_{h_I}(\tau(t))$, then
\ba \label{g.11}
\{H_I(t),O_f(t)\} &=&
=O_{\{h_I,f\}^\ast}(\tau(t))
=O_{\{h_I,f\}}(\tau(t))
=O_{X_I\cdot f}(\tau(t))
\nonumber\\
&=&\dot{\tau}^I(t)\sum_{n=0}^\infty\;\;\frac{\beta^{J_1}..\beta^{J_n}}{n!}\;\;
X_I X_{J_1} .. X_{J_n} \cdot f
\ea
where in the second step we used that neither $h_I$ nor $f$ depend on 
$\pi_J$, in the third we used that $f$ does not depend on $T^J$ and in 
the last we used the commutativity of the $X_J$. Thus the physical 
Hamiltonian that drives the time evolution of the observables is simply
given by
\be \label{g.12}
H(t):=\dot{\tau}^I(t) h_I(\tau(t),O_Q(t),O_P(t))
\ee
This is exactly the same as (\ref{g.8}) under the identification 
$f\leftrightarrow O_f(0)$. Hence we have shown that for suitable gauge 
fixings the reduced and the gauge invariant frameworks are equivalent.
Notice that it was crucial in the derivation that $(T^I,\pi_I)$ and 
$(Q^A,P_A)$ are two sets of canonical pairs. If that would not be the 
case, then it would be unclear whether the time evolution of the 
observables has a canonical generator. 

The power of a manifestly gauge invariant framework lies therefore not
in the gauge invariance itself. Rather, it relies on whether the gauge
fixing can be achieved globally, whether it can be phrased in terms of 
separate canonical pairs, whether the observer clocks $T_I$
are such that reduced Hamiltonian system is conserved and whether they 
do display
the time evolution of observables as viewed by a realistic observer.
In particular, the reduced Hamiltonian by construction only depends on 
$(Q^A,P_A)$. In application to gravity, if one would not add an 
additional matter component such as the dust, then the true degrees 
of freedom would sit in two of the six canonical pairs 
$(q_{ab},p^{ab})$. However, notice that it is not obvious how to split
these pairs into gauge and true degrees of freedom in an at least 
spatially covariant way and moreover it is not possible to solve for 
four of the $p^{ab}$ algebraically because the spatial diffeomorphism 
constraint involves their derivatives. Hence the physical or 
reduced Hamiltonian would become non-local. Furthermore, if one uses 
gravitational degrees of freedom for reduction then it is clear that 
onbe does not get the full set of Einstein's equations as evolution 
equations which is something that one may want to keep. Finally, the 
reduced Hamiltonian will not reduce to the standard model Hamiltonian
in the flat space limit (i.e. with unit lapse) nor will it be 
necessarily positive. Of course, when adding matter like our dust,
then similar to the Higgs mechanism the four dust degrees of freedom 
get absorbed by the metric which develops four additional Goldstone 
modes. These modes should decouple and they do as we showed explicitly 
in this paper because of the existence of an infinite number of 
conserved charges, 
however, it is not granted to happen when adding 
arbitrary matter.\\ 
\\
We close this section by verifying that the reduced Hamiltonian for the 
Brown -- Kucha\v{r} dust model with the obvious choice for the gauge 
degrees of freedom indeed agrees with the physical Hamiltonian. As gauge 
fixing conditions we choose 
\be \label{g.13}
G(x)=T(x)-\tau(x;t),\;\;G^j(x)=S^j(x)-\sigma^j
\ee
whence $\tau^j(t,x)=\sigma^j(x)$ is not explicitly time dependent. 
The stability of (\ref{g.13}) with respect to the canonical Hamiltonian
\be \label{g.14}
H_{{\rm can}}=\int_{{\cal X}}\; d^3x\; \{n[c-\sqrt{P^2+q^{ab} c_a 
c_b}]+n^a[P T_{,a}+P_j 
S^j_{,a}+c_a]\}
\ee
fixes lapse and shift to be 
\be \label{g.15}
n_0=-\frac{\dot{\tau}\sqrt{P^2+q^{ab} c_a c_b}}{P},\;\; n^a_0=0
\ee
Hence for any function $f$ independent of the dust degrees of freedom
\ba \label{g.16}
&& \{H_{{\rm can}},f\}_{c^{{\rm tot}}=\vec{c}^{{\rm 
tot}}=n-n_0=\vec{n}-\vec{n}_0=0}
\nonumber\\
&=& \int_{{\cal X}} \; d^3x\; (n_0\; [\{c,f\}-\frac{1}{2\sqrt{P^2+q^{ab} 
c_a c_b}} \;\{q^{ab} c_a c_b,f\}])_{c^{{\rm tot}}=0}
\nonumber\\
&=& \int_{{\cal X}} \; d^3x\;\dot{\tau}\; \frac{c}{h}\; 
[\{c,f\}-\frac{1}{2c} \;\{q^{ab} c_a c_b,f\}]
\nonumber\\
&=& \int_{{\cal X}} \; d^3x\;\dot{\tau}\; \frac{1}{2h}\; 
[\{c^2,f\}-\;\{q^{ab} c_a c_b,f\}]
\nonumber\\
&=& \int_{{\cal X}} \; d^3x\;\dot{\tau}\; \{h,f\}]
\ea
where we used 
\be \label{g.17}
c^{{\rm tot}}=0\;\;\Leftrightarrow\;\;-P=h=\sqrt{c^2-q^{ab} c_a c_b}
\ee
Thus the reduced Hamiltonian for $\dot{\tau}=1$ equals the physical
Hamiltonian under the identification $q_{ab}\equiv Q_{jk},\;
p^{ab}\equiv P^{jk}$.

\end{document}